\newcolumntype{L}[1]{>{\centering\arraybackslash}p{#1\textwidth}} 
\newcolumntype{R}[1]{>{\centering\arraybackslash}p{#1\textwidth}} 
\newtheorem{theorem}{Theorem}
\newtheorem{definition}{Definition}
\newtheorem{lemma}{Lemma}
\newtheorem{proposition}{Proposition}
\newtheorem{conjecture}{Conjecture}
\newtheorem{example}{Example}
\newtheorem{corollary}{Corollary}
\def\bcj{\begin{conjecture}}
	\def\ecj{\end{conjecture}}
\def\bcr{\begin{corollary}}
	\def\ecr{\end{corollary}}
\def\bd{\begin{definition}}
	\def\ed{\end{definition}}
\def\bea{\begin{eqnarray}}
	\def\eea{\end{eqnarray}}
\def\bem{\begin{enumerate}}
	\def\eem{\end{enumerate}}
\def\bex{\begin{example}}
	\def\eex{\end{example}}
\def\bim{\begin{itemize}}
	\def\eim{\end{itemize}}
\def\bl{\begin{lemma}}
	\def\el{\end{lemma}}
\def\bma{\begin{bmatrix}}
	\def\ema{\end{bmatrix}}
\def\bpf{\begin{proof}}
	\def\epf{\end{proof}}
\def\bpp{\begin{proposition}}
	\def\epp{\end{proposition}}
\def\bqu{\begin{question}}
	\def\equ{\end{question}}
\def\br{\begin{remark}}
	\def\er{\end{remark}}
\def\bt{\begin{theorem}}
	\def\et{\end{theorem}}
\def\squareforqed{\hbox{\rlap{$\sqcap$}$\sqcup$}}
\def\qed{\ifmmode\squareforqed\else{\unskip\nobreak\hfil
		\penalty50\hskip1em\null\nobreak\hfil\squareforqed
		\parfillskip=0pt\finalhyphendemerits=0\endgraf}\fi}
\def\endenv{\ifmmode\;\else{\unskip\nobreak\hfil
		\penalty50\hskip1em\null\nobreak\hfil\;
		\parfillskip=0pt\finalhyphendemerits=0\endgraf}\fi}
\newenvironment{proof}{\noindent \textbf{{Proof.~} }}{\qed}
\def\Dbar{\leavevmode\lower.6ex\hbox to 0pt
	{\hskip-.23ex\accent"16\hss}D}
\def\url@leostyle{%
	\@ifundefined{selectfont}{\def\UrlFont{\sf}}{\def\UrlFont{\small\ttfamily}}}
\def\bcj{\begin{conjecture}}
	\def\ecj{\end{conjecture}}
\def\bcr{\begin{corollary}}
	\def\ecr{\end{corollary}}
\def\bd{\begin{definition}}
	\def\ed{\end{definition}}
\def\bea{\begin{eqnarray}}
	\def\eea{\end{eqnarray}}
\def\bem{\begin{enumerate}}
	\def\eem{\end{enumerate}}
\def\bex{\begin{example}}
	\def\eex{\end{example}}
\def\bim{\begin{itemize}}
	\def\eim{\end{itemize}}
\def\bl{\begin{lemma}}
	\def\el{\end{lemma}}
\def\bpf{\begin{proof}}
	\def\epf{\end{proof}}
\def\bpp{\begin{proposition}}
	\def\epp{\end{proposition}}
\def\bqu{\begin{question}}
	\def\equ{\end{question}}
\def\br{\begin{remark}}
	\def\er{\end{remark}}
\def\bt{\begin{theorem}}
	\def\et{\end{theorem}}
\def\btb{\begin{tabular}}
	\def\etb{\end{tabular}}
	\newcommand{\nc}{\newcommand}
	\nc{\bbA}{\mathbb{A}} \nc{\bbB}{\mathbb{B}} \nc{\bbC}{\mathbb{C}}
	\nc{\bbD}{\mathbb{D}} \nc{\bbE}{\mathbb{E}} \nc{\bbF}{\mathbb{F}}
	\nc{\bbG}{\mathbb{G}} \nc{\bbH}{\mathbb{H}} \nc{\bbI}{\mathbb{I}}
	\nc{\bbJ}{\mathbb{J}} \nc{\bbK}{\mathbb{K}} \nc{\bbL}{\mathbb{L}}
	\nc{\bbM}{\mathbb{M}} \nc{\bbN}{\mathbb{N}} \nc{\bbO}{\mathbb{O}}
	\nc{\bbP}{\mathbb{P}} \nc{\bbQ}{\mathbb{Q}} \nc{\bbR}{\mathbb{R}}
	\nc{\bbS}{\mathbb{S}} \nc{\bbT}{\mathbb{T}} \nc{\bbU}{\mathbb{U}}
	\nc{\bbV}{\mathbb{V}} \nc{\bbW}{\mathbb{W}} \nc{\bbX}{\mathbb{X}}
	\nc{\bbZ}{\mathbb{Z}}
	\nc{\bA}{{\bf A}} \nc{\bB}{{\bf B}} \nc{\bC}{{\bf C}}
	\nc{\bD}{{\bf D}} \nc{\bE}{{\bf E}} \nc{\bF}{{\bf F}}
	\nc{\bG}{{\bf G}} \nc{\bH}{{\bf H}} \nc{\bI}{{\bf I}}
	\nc{\bJ}{{\bf J}} \nc{\bK}{{\bf K}} \nc{\bL}{{\bf L}}
	\nc{\bM}{{\bf M}} \nc{\bN}{{\bf N}} \nc{\bO}{{\bf O}}
	\nc{\bP}{{\bf P}} \nc{\bQ}{{\bf Q}} \nc{\bR}{{\bf R}}
	\nc{\bS}{{\bf S}} \nc{\bT}{{\bf T}} \nc{\bU}{{\bf U}}
	\nc{\bV}{{\bf V}} \nc{\bW}{{\bf W}} \nc{\bX}{{\bf X}}
	\nc{\ba}{{\bf a}} \nc{\be}{{\bf e}} \nc{\bu}{{\bf u}}
	\nc{\brr}{{\bf r}} \nc{\bx}{{\bf x}}
	\nc{\cA}{{\cal A}} \nc{\cB}{{\cal B}} \nc{\cC}{{\cal C}}
	\nc{\cD}{{\cal D}} \nc{\cE}{{\cal E}} \nc{\cF}{{\cal F}}
	\nc{\cG}{{\cal G}} \nc{\cH}{{\cal H}} \nc{\cI}{{\cal I}}
	\nc{\cJ}{{\cal J}} \nc{\cK}{{\cal K}} \nc{\cL}{{\cal L}}
	\nc{\cM}{{\cal M}} \nc{\cN}{{\cal N}} \nc{\cO}{{\cal O}}
	\nc{\cP}{{\cal P}} \nc{\cQ}{{\cal Q}} \nc{\cR}{{\cal R}}
	\nc{\cS}{{\cal S}} \nc{\cT}{{\cal T}} \nc{\cU}{{\cal U}}
	\nc{\cV}{{\cal V}} \nc{\cW}{{\cal W}} \nc{\cX}{{\cal X}}
	\nc{\cZ}{{\cal Z}}
	\nc{\hA}{{\hat{A}}} \nc{\hB}{{\hat{B}}} \nc{\hC}{{\hat{C}}}
	\nc{\hD}{{\hat{D}}} \nc{\hE}{{\hat{E}}} \nc{\hF}{{\hat{F}}}
	\nc{\hG}{{\hat{G}}} \nc{\hH}{{\hat{H}}} \nc{\hI}{{\hat{I}}}
	\nc{\hJ}{{\hat{J}}} \nc{\hK}{{\hat{K}}} \nc{\hL}{{\hat{L}}}
	\nc{\hM}{{\hat{M}}} \nc{\hN}{{\hat{N}}} \nc{\hO}{{\hat{O}}}
	\nc{\hP}{{\hat{P}}} \nc{\hR}{{\hat{R}}} \nc{\hS}{{\hat{S}}}
	\nc{\hT}{{\hat{T}}} \nc{\hU}{{\hat{U}}} \nc{\hV}{{\hat{V}}}
	\nc{\hW}{{\hat{W}}} \nc{\hX}{{\hat{X}}} \nc{\hZ}{{\hat{Z}}}
	\nc{\hn}{{\hat{n}}}
	\def\max{\mathop{\rm max}}
	\def\min{\mathop{\rm min}}
	\def\supp{\mathop{\rm supp}}
	\def\tr{\mathop{\rm Tr}}
	\newcommand{\bra}[1]{\langle#1|}
	\newcommand{\ket}[1]{|#1\rangle}
	\newcommand{\ketbra}[2]{|#1\rangle\!\langle#2|}
	\newcommand{\fl}[2]{\left\lfloor\frac{#1}{#2}\right\rfloor}
	\def \qed {\hfill \vrule height7pt width 7pt depth 0pt}
	\newcounter{lastnote}
	\newcommand{\lgreen}{\cellcolor[HTML]{67FD9A}}
        \newcommand{\lpurple}{\cellcolor[HTML]{CBCEFB}}
\begin{document}

\author{Kaiyi Guo}
\affiliation{QICI Quantum Information and Computation Initiative,  School of Computing and Data Science,
The University of Hong Kong, Pokfulam Road, Hong Kong SAR, China}	 

\author{Fei Shi}
\email[]{shifei@mail.ustc.edu.cn}
\affiliation{QICI Quantum Information and Computation Initiative, School of Computing and Data Science,
The University of Hong Kong, Pokfulam Road, Hong Kong SAR, China}

\author{You Zhou}
\email[]{you\_zhou@fudan.edu.cn}
\affiliation{Key Laboratory for Information Science of Electromagnetic Waves
 (Ministry of Education), 
Fudan University, Shanghai 200433, China}
\affiliation{Hefei National Laboratory, Hefei 230088, China}



\author{Qi Zhao}
\email[]{zhaoqi@cs.hku.hk}
\affiliation{QICI Quantum Information and Computation Initiative, School of Computing and Data Science,
The University of Hong Kong, Pokfulam Road, Hong Kong SAR, China}	
\title{Approximate $k$-uniform states: definition, construction and applications}
\begin{abstract}
    $k$-Uniform states are fundamental to quantum information and computing, with applications in multipartite entanglement and quantum error-correcting codes (QECCs). Prior work has primarily focused on constructing exact $k$-uniform states or proving their nonexistence. However, due to inevitable theoretical approximations and experimental imperfections, generating exact $k$-uniform states is neither feasible nor necessary in practice. In this work, we initiate the study of approximate $k$-uniform states, demonstrating that they are locally indistinguishable from their exact counterparts unless massive measurements are performed. We prove that such states can be constructed with high probability from the Haar-random ensemble and, more efficiently, via shallow random quantum circuits. Furthermore, we establish a connection between approximate $k$-uniform states and approximate QECCs, showing that Haar random constructions yield high-performance codes with linear rates, vanishing proximity, and exponentially small failure probability while random circuits can't construct codes with linear code rate in low depth. Finally, we investigate the relationship between approximate QECCs and approximate quantum information masking. Our work lays the foundation for the practical application of $k$-uniform states. 
\end{abstract}

\maketitle

\section{Introduction}
An $n$-partite pure state is called $k$-uniform if every $k$-particle marginal is the normalized identity, indicating maximal entanglement across all possible partitions of $k$ versus $n-k$ particles \cite{PhysRevA.69.052330}. Specifically, a  $\fl{n}{2}$-uniform state is referred to as absolutely maximally entangled (AME), indicating maximal entanglement across every bipartition \cite{PhysRevA.86.052335}.
$k$-Unfiorm states and AME states play critical roles in quantum information processing, with applications in quantum error-correcting codes (QECCs) \cite{PhysRevA.69.052330}, quantum secret sharing \cite{PhysRevLett.83.648}, quantum information masking (QIM) \cite{PhysRevA.104.032601}, and quantum simulation \cite{zhao2024entanglement}.

Constructing $k$-uniform states or disproving their existence remains a significant open challenge in quantum information \cite{PRXQuantum.3.010101,PhysRevLett.128.080507}. Although numerous constructions of $k$-uniform states have been developed \cite{feng2017multipartite,goyeneche2014genuinely,li2019k,pang2019two,goyeneche2015absolutely,PhysRevA.97.062326,raissi2020constructions,Zang_2021,Heterogeneous_Systems,10143323}, $k$-uniform states in many cases are either nonexistent or their existence remains unknown \cite{796376,higuchi2000entangled,PhysRevA.69.052330,PhysRevLett.118.200502,huber2018bounds,10718358,ning2025linear}.
For instance, if we denote AME states of $n$ qudits with local dimension $d$ as AME(n,d), it is known that AME(n,2) states do not exist for $n=4$ and $n \geq 7$ \cite{AMEtable}. Similarly, $n$-qubit $(\fl{n}{2}-1)$-uniform states 
are known to be impossible for 
$n=10$
 and $n\geq 13$ \cite{796376,10718358}. The existence of AME(8,4), AME(7,6) and AME(3,11) states remain unresolved \cite{AMEtable}. Furthermore, even though $k$-uniform states can be constructed from orthogonal arrays and quantum Latin squares \cite{Zang_2021,PhysRevA.99.042332}, these methods are combinatorial based and are challenging to implement in physical experiments; moreover, from a practical perspective, generating exact $k$-uniform states in experiments is impossible due to unavoidable noise in quantum information processing, including finite-depth quantum circuits, compilation errors, and physical noise. These theoretical and experimental imperfections restrict the applicability of exact $k$-uniform states in quantum information tasks. Motivated by this, we initiate the systematic study of approximate $k$-uniform states, where each $k$-partite marginal is sufficiently close to the maximally mixed state. 
Our work demonstrates that even when exact 
$k$-uniform states do not exist, their approximate counterparts can still serve as powerful and practical resources for quantum information processing. This bridges the gap between theoretical constructions and experimental realizability.
 


 


Another important question is how to generate 
approximate $k$-uniform states in practice. Fortunately, unlike the exact cases, 
approximate $k$-uniform states can be constructed efficiently using random unitaries or random circuits. 
Exactly generating Haar random unitaries is difficult and costly, but certain random circuits are both sufficient and efficient for approximating Haar randomness. These circuits can serve as an $\epsilon'$-approximate unitary $t$-design in an extremely low circuit depth \cite{laracuente2024approximateunitarykdesignsshallow,schuster2024random}. Recent works in this area have provided powerful tools that benefit the study of 
$k$-uniform states and their applications \cite{Hayden_2006,Sen_1996,Bouland_2018,PhysRevLett.106.180504,Zhou_2022,Leone_2021,Clifford3Design,Zhu_2017,zhu2016cliffordgroupfailsgracefully}.

In this work, we establish a fundamental definition of $\epsilon$-approximate $k$-uniform states. 
First, we derive the non-existence bound on $\epsilon$ based on weight enumerators and provide an existence bound based on numerical optimization; for instance, we demonstrate that AME(4,2) do not exist for $\epsilon< \frac{1}{2\sqrt{19}}\approx 0.115$, while $0.2887$-approximate AME(4,2) states are shown to exist. 
Subsequently, leveraging the concentration of probability measure phenomenon \cite{Hayden_2006}, we prove that a Haar random state serves as an $\epsilon$-approximate $k$-uniform state with high probability under certain constraints. Next, we relax the Haar random requirement to approximate $t$-design, and establish the high-performance construction of approximate uniform states via shallow random circuits with an almost negligible failure probability and proximity under the large $n$ limit. Thereafter, we present our formalism for approximate QECC, relating it to approximate uniform states, and prove the strong performance of approximate QECC under Haar random construction. Finally we illustrate its application to approximate QIM.





\section{Main Results}
In this section we propose our main results on approximate $k$-uniform states and QECC. In the first part, we prove some basic properties of approximate $k$-uniform states, and show that it is impossible to distinguish approximate $k$-uniform states from exact ones locally unless massive measurements are performed, even if these measurements are ideal. Moreover, we prove that some approximate AME$(n,d)$ states do not exist due to violation of shadow inequalities, just like the exact cases. For the second part, we use Levy's lemma to show that a Haar random state is an approximate $k$-uniform state with high probability under some constraint. To get vanishing failure probability and vanishing proximity $\epsilon$ at the same time under large $n$ limit, the parameters have to be chosen carefully. In the third part, we replace Haar random ensemble to $\epsilon'$-approximate $t$-design and their random circuit construction, and show that low depth random circuits suffice to generate $\epsilon$-approximate $k$-uniform states under certain constraints. In the fourth and fifth part, we show the definition and Haar random construction of approximate QECC. In the sixth part we illustrate the performance of random circuit construction of approximate QECC. In the final part, we relate approximate QECC with approximate QIM, highlighting the practical usage of approximate $k$-uniform states and approximate QECC in practical quantum information processing tasks.
\subsection{Results on approximate $k$-uniform states}
Consider an $n$-partite quantum system associated with Hilbert Space $(\mathbb{C}^{d})^{\otimes n}$. 
The qudits are indexed as $[n]:=\{1,2,\cdots,n\}$. 
Let $S\subseteq [n]$ be an index subset in the $n$-partite system. We denote $I_S$ as the identity matrix on subsystem $S$, $I_d$ as the identity matrix of dimension $d$, and define $d_S:= d^{|S|}$. 
For an $n$-qudit pure quantum state $\ket\psi$, we denote $\rho:=\ket\psi\bra\psi$ as its density matrix. The reduced state on a subsystem $S$ is denoted as $\rho_S:=\mathrm{Tr}_{S^c}\rho$. We also use the same notation on operators: $O_S:= \tr_{S^c}O$. In this paper, we use $||\cdot||$ to denote Hilbert-Schmidt norm unless otherwise specified.


A pure state $\ket \psi\in (\mathbb{C}^{d})^{\otimes n}$ is called a \emph{$k$-uniform state} if for any subsystem $S\subseteq [n]$ with size $|S|=k$, the reduced state on $S$ is equal to the normalized identity matrix \cite{PhysRevA.69.052330}:
\begin{equation}
(\ket\psi\bra\psi)_S= \frac {1}{d_S}  I_{S},\qquad \forall S\subseteq [n], \, |S|= k.
\end{equation}
For a $k$-uniform state in $(\mathbb{C}^{d})^{\otimes n}$, it is clear that $k\leq \fl{n}{2}$. Specifically, a $k$-uniform state  with $k=\fl{n}{2}$ is known as  an AME state \cite{PhysRevA.86.052335},  also referred to as an AME$(n,d)$ state. For example, the 3-qubit Greenberger–Horne–Zeilinger (GHZ) state
$\frac{1}{\sqrt{2}}(\ket{000}+\ket{111})$, is a $1$-uniform state, and it is also an AME state. However, $k$-uniform states cannot be perfectly generated in experiments, and many 
$k$-uniform states do not exist, which leads us to explore the concept of approximate $k$-uniform states.

\begin{definition}[\(\epsilon\)-approximate $k$-uniform state]
\label{def:k-u} Let \(\ket\psi\) be a pure quantum state in $(\mathbb{C}^d)^{\otimes n}$, then $\ket\psi$ is an \emph{$\epsilon$-approximate $k$-uniform state} if for any subsystem $S\subseteq[n]$ with size $|S|= k$, the reduced state on $S$  has purity $\epsilon$-close to that of normalized identity matrix:
\begin{equation}\label{eq:k-u}
\mathrm{Tr}\rho_{S}^{2} \leq \frac{1}{d_S} + \epsilon^2.
\end{equation}
Similarly, an $\epsilon$-approximate $\lfloor \frac{n}{2} \rfloor$-uniform state is called an \emph{$\epsilon$-approximate AME$(n,d)$ state}. 
\end{definition}

Note that approximate $k$-uniform state has the following two equivalent definitions:
\begin{enumerate}[(i)]
    \item \label{alternate-def1}For a subsystem $S$ and a pure state $\ket{\psi}$, the eigenvalues of the reduced state   $\rho_S$ can be written as $\{\frac {1+\epsilon_i}{d_S}\}_{i=1}^{d_S}$, where $\epsilon_i$ is the relative deviation of the $i$th eigenvalue. If  $\sqrt{\sum_{i=1}^{d_S} \epsilon_i^2}\leq d_S \epsilon$  for any subsystem $S$ with size $|S|=k$, then $\ket{\psi}$ is an $\epsilon$-approximate $k$-uniform state.
    \item \label{alternate-def2}For a subsystem $S$ and a pure state $\ket{\psi}$, the reduced density matrix $\rho_S$ can be expressed as $\rho_S=\frac{I_S+P_S}{d_S}$. If $\rho_S$ is $\epsilon$-close to the identity under Hilbert-Schmidt norm: $|| \rho_S - \frac 1{d_S} I_S||=\frac{||P_S||}{d_S}\leq \epsilon$ for any subsystem $S$ with size $|S|=k$, then $\ket{\psi}$ is an $\epsilon$-approximate $k$-uniform state.  
\end{enumerate}
It is known that a $k$-uniform state must be a $(k-1)$-uniform state. We can also show this property for approximate $k$-uniform states.
\begin{lemma}\label{lemma:k_1_approximate}
An $\epsilon$-approximate $k$-uniform state is also a $d\epsilon$-approximate $(k-1)$-uniform state.
\end{lemma}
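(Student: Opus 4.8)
The plan is to push the $k$-uniform bound on a carefully chosen $k$-subset down to a $(k-1)$-subset through a single partial trace, using only that the partial trace is a contraction on the Hilbert--Schmidt norm up to a dimensional factor. Concretely, fix an arbitrary $T\subseteq[n]$ with $|T|=k-1$. Since $k-1<k\le\lfloor n/2\rfloor\le n$, there is some $j\in[n]\setminus T$; set $S:=T\cup\{j\}$, so that $|S|=k$, $d_S=d\cdot d_T$, and $\rho_T=\tr_j\rho_S$. The goal is to show $\tr\rho_T^2\le\frac1{d_T}+(d\epsilon)^2$, which is exactly the condition of Definition~\ref{def:k-u} for $T$ with parameter $d\epsilon$; since $T$ is arbitrary, this proves the lemma.

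The key ingredient is the elementary inequality $\|\tr_B O\|^2\le(\dim\mathcal H_B)\,\|O\|^2$ for any operator $O$ on $\mathcal H_A\otimes\mathcal H_B$. I would prove it by fixing product orthonormal bases and writing $(\tr_B O)_{ij}=\sum_{k=1}^{\dim\mathcal H_B}O_{(i,k),(j,k)}$; Cauchy--Schwarz in the index $k$ gives $|(\tr_B O)_{ij}|^2\le(\dim\mathcal H_B)\sum_k|O_{(i,k),(j,k)}|^2$, and summing over $i,j$ while discarding the nonnegative off-diagonal-in-$B$ contributions yields $\|\tr_B O\|^2\le(\dim\mathcal H_B)\sum_{i,j,k,l}|O_{(i,k),(j,l)}|^2=(\dim\mathcal H_B)\,\|O\|^2$.

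Applying this with $O=\rho_S$ and $B$ the factor labelled $j$ (so $\dim\mathcal H_B=d$) and using that $\rho_T$ is Hermitian, we get $\tr\rho_T^2=\|\rho_T\|^2=\|\tr_j\rho_S\|^2\le d\,\|\rho_S\|^2=d\,\tr\rho_S^2$. Invoking Definition~\ref{def:k-u} for $S$ (legitimate since $|S|=k$) together with $d_S=d\,d_T$ gives $\tr\rho_T^2\le d\big(\tfrac1{d_S}+\epsilon^2\big)=\tfrac1{d_T}+d\epsilon^2\le\tfrac1{d_T}+(d\epsilon)^2$, where the last step uses $d\ge1$. This is precisely the $(k-1)$-uniformity bound with proximity $d\epsilon$.

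There is no serious obstacle here: the argument is one Cauchy--Schwarz estimate plus dimensional bookkeeping, and it even handles the degenerate case $k=1$ (where $(k-1)$-uniformity is vacuous). The only point worth flagging is that the intermediate bound $\tr\rho_T^2\le d\,\tr\rho_S^2$ is saturated by the maximally mixed state, which is what pins the dimensional factor; in fact the displayed estimate gives $\tr\rho_T^2\le\tfrac1{d_T}+d\epsilon^2=\tfrac1{d_T}+(\sqrt d\,\epsilon)^2$, i.e.\ the slightly stronger proximity $\sqrt d\,\epsilon$, so the stated $d\epsilon$ is a safe and cleaner weakening. The same estimate can equivalently be phrased through formulation (ii), writing $\rho_S=(I_S+P_S)/d_S$ so that $P_T=\tfrac1d\tr_j P_S$ and bounding $\|P_T\|\le\tfrac1{d}\sqrt d\,\|P_S\|\le\tfrac{1}{\sqrt d}\,d_S\epsilon=\sqrt d\,d_T\epsilon$; the Cauchy--Schwarz step is identical.
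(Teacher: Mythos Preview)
Your argument is correct and in fact delivers the sharper proximity $\sqrt{d}\,\epsilon$, which you then relax to $d\epsilon$ to match the stated lemma. The route is genuinely different from the paper's. The paper works through formulation (i): it labels the $d_S$ eigenvalue deviations $\{\epsilon_{ij}\}$ of $\rho_S$ by a product index and asserts that the eigenvalue deviations of $\rho_T=\rho_{S\setminus\{1\}}$ are the averages $\bar\epsilon_i=\frac{1}{d}\sum_j\epsilon_{ij}$, then bounds $|\bar\epsilon_i|\le\max_j|\epsilon_{ij}|$ to arrive at $\sqrt{\sum_i\bar\epsilon_i^2}\le d_S\epsilon=(d_S/d)(d\epsilon)$. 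Your approach bypasses the spectral relabelling entirely via the one-line Hilbert--Schmidt contraction $\|\tr_j O\|^2\le d\,\|O\|^2$, applied either to $\rho_S$ directly or, equivalently, to its traceless part $P_S$ in formulation (ii). This is shorter and sidesteps the delicate point that the spectrum of a partial trace is not in general obtained by grouping and averaging eigenvalues of the parent state; it also explains the improved constant, since Cauchy--Schwarz costs only a factor $\sqrt{d}$ while the paper's $\max_j$ step costs a full factor $d$.
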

\begin{proof}
    To prove this, we only need to show that Eq.~\eqref{eq:k-u} holds with $|S|=k-1$ by tracing out another qudit from some subsystem $S$. Without loss of generality, suppose the qudit index to be traced out is $1$, $1\in S$, and all eigenvalues of $\rho_S$ are $\{(1+\epsilon_{ij})/d_S\}$ with index $i$ for the subsystem $S\setminus\{1\}$ ranging from $1$ to $d_S/d$ and $j$ for the subsystem $\{1\}$ ranging from $1$ to $d$. According to the alternative Definition~\ref{alternate-def1} of approximate $k$-uniform states, $\sqrt{\sum_{i=1}^{d_S / d} \sum_{j=1}^d \epsilon_{ij}^2} \leq d_S\epsilon$. So the $i$th eigenvalue of $(k-1)$-body marginal $\rho_{S\setminus\{1\}}$ is gained by taking summation over $j$:
    \begin{equation}
    \sum_{j=1}^d \frac{(1+\epsilon_{ij})}{d_S} = \frac{ 1+ \bar\epsilon_i }{d_S /d},
    \end{equation}
    where $\bar\epsilon_i=\frac{\sum_{j=1}^d \epsilon_{ij}}{d}$ is the average deviation. We can prove that the $(k-1)$-body proximity can be bounded by $d\epsilon$:
    \begin{align}
    \sqrt{\sum_{i=1}^{d_S/d}\bar\epsilon_i^2}\leq \sqrt{\sum_{i=1}^{d_S/d} \epsilon_{i,\max}^2}\leq \sqrt{\sum_{i=1}^{d_S/d} \sum_{j=1}^{d} \epsilon_{ij}^2}\leq d_S\epsilon=\frac{d_S}{d}(d\epsilon),
    \end{align}
   where  $\epsilon_{i,\max} := \max_{j\in \{1,2,\cdots, d\}} \epsilon_{ij}$. So the inequality holds for any $|S|=k-1$, the state is a $d\epsilon$-approximate $k-1$-uniform state. We can generalize this lemma to show that it is also a $d^l\epsilon$-approximate $k-l$-uniform state for any $k> l\geq 0$.
\end{proof}

\vspace{0.4cm}

From the definition we can see that the reduced density matrix $\rho_S$ of an $\epsilon$-approximate $k$-uniform state is close to the maximally mixed state. To relate the $\epsilon$-approximate $k$-uniform states to the exact ones, we can use the relation between relative entropy and basic statistical inference to prove that approximate uniform states are locally indistinguishable unless massive measurements are performed, even if these measurements are ideal:
\begin{lemma}[Local indistinguishability of approximate $k$-uniform states]\label{lemma:local-indistinguish}
For sufficiently small $\epsilon$, the reduced state of an $\epsilon$-approximate $k$-uniform state on any subsystem $S$ of size $k$ can not be distinguished from maximally mixed state $\frac{1}{d_{S}} I_{S}$ unless $N\gg  \frac{2}{d_{S}\epsilon ^{2}}$ measurements are performed.
\end{lemma}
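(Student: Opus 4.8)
The plan is to pass from the closeness of $\rho_S$ to the maximally mixed state to a quantitative bound on relative entropy, and then quote the sample‑complexity converse of hypothesis testing. Fix a subsystem $S$ with $|S|=k$ and write the eigenvalues of $\rho_S$ as $\{(1+\epsilon_i)/d_S\}_{i=1}^{d_S}$ as in alternative Definition~(\ref{alternate-def1}). Normalization of $\rho_S$ forces $\sum_i\epsilon_i=0$, and a direct computation gives $\|\rho_S-\tfrac{1}{d_S}I_S\|^2=\tr\rho_S^2-\tfrac{1}{d_S}=\tfrac{1}{d_S^2}\sum_i\epsilon_i^2$, so the defining inequality Eq.~\eqref{eq:k-u} is precisely $\sum_i\epsilon_i^2\le d_S^2\epsilon^2$.

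Next I would estimate the relative entropy of $\rho_S$ with respect to the ideal marginal $\tfrac{1}{d_S}I_S$. Since $\tfrac{1}{d_S}I_S$ is diagonal in the eigenbasis of $\rho_S$,
\begin{equation}
D(\rho_S \,\|\, \tfrac{1}{d_S} I_S) = \log d_S - S(\rho_S) = \frac{1}{d_S}\sum_{i=1}^{d_S}(1+\epsilon_i)\log(1+\epsilon_i),
\end{equation}
where $S(\rho_S)=-\tr(\rho_S\log\rho_S)$ is the von Neumann entropy. Using $(1+x)\log(1+x)=x+\tfrac12x^2+O(|x|^3)$ together with $\sum_i\epsilon_i=0$, the right‑hand side equals $\tfrac{1}{2d_S}\sum_i\epsilon_i^2$ plus a remainder bounded by a multiple of $\tfrac{1}{d_S}\sum_i|\epsilon_i|^3$; for $\epsilon$ small enough the remainder is negligible, and combining with $\sum_i\epsilon_i^2\le d_S^2\epsilon^2$ gives $D(\rho_S \,\|\, \tfrac{1}{d_S} I_S)\le\tfrac12 d_S\epsilon^2\,(1+o(1))$. (The reversed relative entropy $D(\tfrac{1}{d_S}I_S\,\|\,\rho_S)$ admits the same leading‑order bound by expanding $\log(1+x)$ directly.)

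Finally I would invoke the converse part of (quantum) Stein's lemma: given $N$ copies of one of two states, no test can keep one error probability below a fixed constant while driving the other to $0$ unless $N\,D(\rho\|\sigma)\to\infty$, i.e.\ unless $N\gg 1/D(\rho\|\sigma)$; for measurement strategies that do not act collectively this is just the classical statement applied to the single‑copy outcome statistics together with the data‑processing inequality $D(p\|q)\le D(\rho\|\sigma)$. Taking $\rho=\rho_S$ and $\sigma=\tfrac{1}{d_S}I_S$ and using the bound above, distinguishing requires $N\gg 1/D(\rho_S \,\|\, \tfrac{1}{d_S} I_S)\ge \tfrac{2}{d_S\epsilon^2}$, which is the assertion; the constant $2$ is exactly the reciprocal of the coefficient $\tfrac12$ appearing in the second‑order term of $(1+x)\log(1+x)$.

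The main obstacle is making the informal ``cannot be distinguished unless $N\gg\dots$'' precise and controlling the expansion uniformly. For the first point I would phrase the conclusion as the converse of Stein's lemma (no test on $N$ copies has both errors $o(1)$ below the threshold) rather than as a statement about a single fixed confidence level. For the second, one must ensure that the cubic remainder, summed over all $d_S$ eigenvalue deviations, does not degrade the constant $2$ — this is where ``sufficiently small $\epsilon$'' is essential (a convenient device is the elementary inequality $(1+x)\log(1+x)\le x+\tfrac12x^2+|x|^3$ on a fixed neighbourhood of the origin, or simply absorbing the cubic term into the ``$\gg$''). A secondary remark worth including is that $\tr\rho_S^2-1/d_S$ may be strictly smaller than $\epsilon^2$ — indeed it is $0$ for an exact $k$-uniform marginal, forcing $N=\infty$ — so $\tfrac{2}{d_S\epsilon^2}$ should be read as the worst‑case threshold over all $\epsilon$-approximate $k$-uniform states.
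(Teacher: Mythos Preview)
Your proof is correct and follows essentially the same route as the paper: both bound the relative entropy $D(\rho_S\,\|\,\tfrac{1}{d_S}I_S)=\log d_S - S(\rho_S)$ by $\tfrac{1}{2}d_S\epsilon^2$ via a second-order expansion (you work with the eigenvalue deviations $\epsilon_i$, the paper with the matrix perturbation $P_S$, which is the same computation in a different parametrization) and then invoke the hypothesis-testing lower bound $N\gg 1/D$. Your remarks on making the ``$\gg$'' precise via Stein's lemma and on the worst-case reading of the bound are slightly more careful than the paper's presentation, but the argument is the same.
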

\begin{proof}
It is known that if our hypothesis to a quantum state is $\sigma$ and the real quantum state is $\rho$, to prove that our hypothesis is wrong, we need $N\gg \frac{1}{S(\rho | | \sigma)}$ measurements\cite{Vedral_1997} 
where $S(\rho | | \sigma)  := -\mathrm{Tr} (\rho (\log \rho - \log \sigma))$ is the quantum analog of relative entropy\cite{Vedral_2002}. Suppose the subsystem where the measurements are to perform is $S$. In this case $\sigma= \frac{1}{d_{S}}I_{S}$ is the hypothesized maximally mixed state, and $\rho= \frac{1}{d_{S}}(I_{S}+P_{S})$ is the reduction of $\epsilon$-approximate $k$-uniform state on subsystem $S$, where $|| P_{S} | | \leq d_{S}\epsilon$ according to equivalent Definition~\ref{alternate-def2} of approximate $k$-uniform states. Using the definition of relative entropy, we have
\begin{equation}
\begin{aligned}
S(\rho | | \sigma)  & = -\mathrm{Tr} (\rho (\log \rho - \log \sigma)) = \log d_{S} - S(\rho),
\end{aligned}
\end{equation}
where $S(\rho):= -\mathrm{Tr} \rho \log \rho$ is the von-Neumann entropy. Expanding $S(\rho)$ into series of fluctuation $P_{S}$,
\begin{equation}
\begin{aligned}
S(\rho) & = \log d_{S} - \frac{1}{d_{S}} \mathrm{Tr} \left[(I_{S} + P_{S}) \left(  P_{S} - \frac{1}{2}P_{S}^{2} + \frac{1}{3}P_{S}^{3} + \dots \right)\right] \\
  & = \log d_{S} - \frac{1}{d_{S}} \mathrm{Tr} \left( P_{S}+\frac{1}{2} P_{S}^{2}     \right) + O( d_{S}^{2}\epsilon^{3} ) \\
  &  \geq \log d_{S} - \frac{1}{2}d_{S}\epsilon ^{2},
\end{aligned}
\end{equation}
where we used the property of $P_{S}$ that $\mathrm{Tr}P_{S}=0$ and $\mathrm{Tr} P_{S}^{2}\leq (d_{S}\epsilon)^{2}$, and the condition that $\epsilon$ is sufficiently small.
From this we conclude that we need at least $N$ measurements to disprove our hypothesis that $\rho$ is maximally mixed, where
\begin{equation}
N \gg \frac{1}{ S(\rho | | \sigma)} = \frac{1}{\log d_{S}- S(\rho)} \geq \frac{2}{d_{S}\epsilon ^{2}}.
\end{equation}
\end{proof}

In Definition~\ref{def:k-u} we proposed the definition of approximate $k$-uniform states. However, one should note that such states may still be non-existent under certain cases even if we relaxed the restriction. Next, we present our findings on the non-existency of approximate AME states within the framework of the shadow inequalities.

 For a state $\rho$, the  shadow inequalities are \cite{817508}:  
\begin{equation}
s_{T}(\rho)=\sum_{S\subseteq [n]} (-1)^{|S \cap T|} \mathrm{Tr}\rho_{S}^{2} \geq 0,\quad \forall \ T\subseteq [n].
\end{equation}
It is known that  the shadow inequalities can be used to show the non-existence of AME states \cite{huber2018bounds}; in fact, most non-existence results of AME states are given by this inequality. We take AME$(4,2)$ as an example. Suppose $\ket{\psi}$ is an AME$(4,2)$ state. Since all its reductions on $k\leq 2$ parties are maximally mixed, we have $\tr\rho_S^2=\frac{1}{2^{|S|}}$ for $|S|\leq 2$. Observe that for a pure state, the property $\tr\rho_S^2=\tr \rho_{S^c}^2$ holds, enabling us to compute all the local purities of the AME states. By replacing $\rho$ with $\ket\psi\bra\psi$ and taking $T=[4]$ in the shadow enumerator $s_{T}(\rho)$, we have
\begin{equation}
s_T(\rho)= {4\choose 0} - {4\choose 1} \frac{1}2 + {4\choose 2} \frac{1}4 - {4\choose 3} \frac 12 + {4\choose 4}= -\frac 12<0,
\end{equation}
which violates the shadow inequality and proves the non-existence of AME$(4,2)$ states.

Note that  the shadow enumerator $s_T(\rho)$ is a continuous function on the quantum state space. If $s_T(\rho)<0$,  there exists an $\epsilon>0$ such that for all  $\tilde\rho$ within the $\epsilon$-ball $B_\epsilon(\rho)$ centered around $\rho$, we have $s_T(\tilde\rho)<0$. The following theorem gives an explicit upper bound on $\epsilon$.

\begin{theorem}
    If AME$(n,d)$ states do not exist due to violating the shadow inequality $s_T(\rho)<0$, $\epsilon$-approximate AME$(n,d)$ states do not exist for $\epsilon<\sqrt{\frac{-s_T(\rho)}{f(d,n,t)}}$, where 
    
    \begin{equation}f(d,n,t)= \sum_{l=0}^{\lfloor \frac{t}2 \rfloor} \sum_{k'=0}^{n-t} {t\choose 2l} {n-t \choose k'}d^{2\max\left(\lfloor \frac n2\rfloor -(2l+k'), (2l+k')- \lceil \frac n2 \rceil\right)}
    \end{equation} 
    is a function dependent on local dimension $d$, the system number $n$, and the size of selected system $t:=|T|$.
\end{theorem}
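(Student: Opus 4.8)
The plan is to argue by contradiction, exactly mirroring the exact-AME argument: suppose $\ket\psi$ were an $\epsilon$-approximate AME$(n,d)$ state with $\epsilon<\sqrt{-s_T(\rho)/f(d,n,t)}$, and show that its shadow enumerator $s_T(\ket\psi\bra\psi)$ is strictly negative, contradicting the shadow inequality (which holds for \emph{every} physical state, approximate or not). Write $\tilde\rho=\ket\psi\bra\psi$, set $k=\lfloor n/2\rfloor$, and for $S\subseteq[n]$ let $m_S:=\min(|S|,n-|S|)\le k$. The putative exact AME purities are $\tr\rho_S^2=d^{-m_S}$, so the number $s_T(\rho)=\sum_{S}(-1)^{|S\cap T|}d^{-m_S}$ appearing in the statement is well defined even when no exact AME state exists.

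The key step is a two-sided control of the deviation $\delta_S:=\tr\tilde\rho_S^2-d^{-m_S}$ for every $S$. First, $\delta_S\ge 0$: for any pure state $\tr\tilde\rho_S^2\ge d^{-|S|}$, and when $|S|>k$ one uses purity, $\tr\tilde\rho_S^2=\tr\tilde\rho_{S^c}^2\ge d^{-|S^c|}=d^{-m_S}$. Second, $\delta_S\le d^{2(k-m_S)}\epsilon^2$: when $|S|=m_S\le k$ this is Lemma~\ref{lemma:k_1_approximate} iterated $k-m_S$ times (an $\epsilon$-approximate $k$-uniform state is a $d^{\,k-m_S}\epsilon$-approximate $m_S$-uniform state) together with Definition~\ref{def:k-u}; when $|S|>k$ one transfers to the complement $S^c$, whose size is $m_S\le k$, again via purity of $\tilde\rho$.

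Next I would expand the shadow enumerator of $\tilde\rho$ around the AME pattern:
\[
s_T(\tilde\rho)=\sum_{S\subseteq[n]}(-1)^{|S\cap T|}\bigl(d^{-m_S}+\delta_S\bigr)=s_T(\rho)+\sum_{S\subseteq[n]}(-1)^{|S\cap T|}\delta_S .
\]
Because all $\delta_S\ge 0$, the terms with $|S\cap T|$ odd contribute with a nonpositive sign and may be discarded, while the terms with $|S\cap T|$ even are bounded by $d^{2(k-m_S)}\epsilon^2$. Parameterising a subset with $|S\cap T|$ even by $|S\cap T|=2l$ (with $0\le l\le\lfloor t/2\rfloor$) and $|S\setminus T|=k'$ (with $0\le k'\le n-t$), one has $|S|=2l+k'$ and $k-m_S=\max\!\bigl(\lfloor n/2\rfloor-(2l+k'),\,(2l+k')-\lceil n/2\rceil\bigr)$ (a short case check on whether $2l+k'$ is at most $\lfloor n/2\rfloor$ or at least $\lceil n/2\rceil$, the only possibilities for an integer). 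Hence $\sum_{S:\,|S\cap T|\text{ even}} d^{2(k-m_S)}$ is exactly $f(d,n,t)$, so $s_T(\tilde\rho)\le s_T(\rho)+\epsilon^2 f(d,n,t)$, which is $<0$ once $\epsilon<\sqrt{-s_T(\rho)/f(d,n,t)}$. This contradicts $s_T(\tilde\rho)\ge 0$ and proves nonexistence.

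The main obstacle is the sharp two-sided estimate on $\delta_S$: getting the clean sign $\delta_S\ge 0$ \emph{and} the tight upper bound $d^{2(k-m_S)}\epsilon^2$, while keeping the complement bookkeeping straight for subsets larger than $n/2$. This is precisely what allows us to retain only the even-intersection terms and recover the stated $f(d,n,t)$ rather than a cruder $2^n$-type bound; everything else is routine combinatorial reindexing plus the fact that the shadow enumerator is a genuine quantity nonnegative on all states.
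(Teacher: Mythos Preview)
Your proposal is correct and follows essentially the same route as the paper: both establish the two-sided bound $0\le \tr\tilde\rho_S^2-d^{-m_S}\le d^{2(k-m_S)}\epsilon^2$ via iterated Lemma~\ref{lemma:k_1_approximate} and the pure-state relation $\tr\tilde\rho_S^2=\tr\tilde\rho_{S^c}^2$, then discard the odd-$|S\cap T|$ terms and reindex the even ones by $(2l,k')$ to produce $s_T(\tilde\rho)\le s_T(\rho)+f(d,n,t)\epsilon^2$. The only cosmetic difference is that the paper parametrises the deviations as $d^{2(k-m_S)}\epsilon_S^2$ with $0\le\epsilon_S\le\epsilon$ before dropping and bounding, whereas you work directly with $\delta_S$; the substance is identical.
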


\begin{proof}
For an exact AME$(n,d)$ state, 
it must have
\begin{equation}
\tr\rho_S^2 = \frac{1}{d^{\min(|S|,n-|S|)}}, \, \forall S\subseteq [n].
\end{equation}
If an $\epsilon$-approximate AME$(n,d)$ state $\tilde\rho$ exist, then due to non-negativity of the shadow enumerator,
\begin{equation}
s_T(\tilde\rho)\ge 0.
\end{equation}
 We can use the Lemma~\ref{lemma:k_1_approximate} to show that $\epsilon$-approximate $\lfloor \frac n2\rfloor$ uniform states are $\epsilon'$-approximate $|S|$-uniform states with $\epsilon'=\epsilon d^{\lfloor \frac n2\rfloor - |S|}$ when $|S|\leq \lfloor \frac n2 \rfloor$. Next, we use this conclusion and $\tr \rho_S^2 =\tr \rho_{S^c}^2$ for pure state density matrix $\rho$ when $|S|\geq \lceil \frac n2 \rceil$, we have for any $S\subseteq [n]$ 
\begin{equation}
\frac{1}{d^{\min(|S|,n-|S|)}}\leq \mathrm{Tr}\tilde{\rho}_{ S }^{2}\leq \frac{1}{d^{\min(|S|,n-|S|)}} + d^{2\max(\lfloor \frac n2\rfloor - |S|, |S|- \lceil \frac n2 \rceil)}\epsilon^{2},
\end{equation}
where the lower bound is trivial and gained when $\tilde{\rho}_S$ is a maximally mixed state, and the upper bound is from Definition~\ref{def:k-u} and Lemma~\ref{lemma:k_1_approximate}. Without loss of generality, assume that $\tr\tilde\rho_S^2= {d^{-\min(|S|,n-|S|)}} +  d^{2\max(\lfloor \frac n2\rfloor - |S|, |S|- \lceil \frac n2 \rceil)}\epsilon_S^2$ with small deviations $0\leq\epsilon_S^2\leq \epsilon^2$ for any $S\subseteq [n]$.

The shadow enumerator of approximate AME states are close to the exact ones. It is possible to give an upper bound of shadow enumerator on approximate AME$(n,d)$ states:
\begin{equation}\label{eq:s_t_tidle}
\begin{aligned}
s_T(\tilde{\rho}) &= \sum_{S\subseteq [n]} (-1)^{|S\cap T|} \tr \tilde{\rho}_S^2\\
&=  \sum_{S_1\subseteq T} (-1)^{|S_1|} \sum_{S_2\subseteq T^c} \tr \tilde\rho^2_{S_1\cup S_2}\\
&= \sum_{k=0}^{t}\sum_{k'=0}^{n-t}(-1)^k{t\choose k}{n-t \choose k'} \left[\left(\frac{1}{d}\right)^{\min(k+k', n-k-k')} + d^{2\max(\lfloor \frac n2\rfloor -(k+k'), (k+k')- \lceil \frac n2 \rceil)}\epsilon_S^2\right]\\
&= s_T(\rho)+\sum_{k=0}^{t} \sum_{k'=0}^{n-t}(-1)^k {t \choose k} {n-t \choose k'}d^{2\max(\lfloor \frac n2\rfloor -(k+k'), (k+k')- \lceil \frac n2 \rceil)} \epsilon_S^2 \\ 
&\leq s_T(\rho) + \sum_{l=0}^{\lfloor \frac {t}2 \rfloor} \sum_{k'=0}^{n-t} {t \choose 2l} {n-t \choose k'} d^{2\max(\lfloor \frac n2\rfloor -(2l+k'), (2l+k')- \lceil \frac n2 \rceil)} \epsilon^2\\
&= s_T(\rho) + f(d,n,t)\epsilon^2.
\end{aligned}
\end{equation}
Note that in the second equality of Eq.~\eqref{eq:s_t_tidle}, we split $S$ into disjoint two parts $S_1:= S\cap T$ and $S_2:= S\cap T^c$, then do summation separately.
In the third equality of Eq.~\eqref{eq:s_t_tidle}, we take $|T|=t$ and use Definition~\ref{def:k-u}.
In the fourth equality of Eq.~\eqref{eq:s_t_tidle}, we identify the first term to be shadow enumerator of the hypothetical AME$(4,2)$ state $\rho$. 
In the inequality of Eq.~\eqref{eq:s_t_tidle}, we gain an upper bound by taking $k=2l$, removing all negative terms with $k=2l+1$ for integer $l$ and take $\epsilon_S=\epsilon$ for the rest. If $s_T(\rho)<0$ and $\epsilon < \sqrt{\frac {-s_T(\rho)}{f(d,n,|T|)}}$, the inequality gives the violation $s_T(\tilde\rho)<0$, so the approximate AME states do not exist.
\end{proof}
\vspace{0.4cm}

Next we use the simplest case as an example. For AME(4,2) states, we already know that exact AME(4,2) states do not exist since $s_{T}(\rho)=-\frac{1}{2}<0$ for $|T|=[4]$. In the case of $\epsilon$-approximate AME(4,2) state, the shadow inequality tells us
\begin{equation}
\epsilon^2 \geq \frac{\frac 12}{1\cdot 2^4 + 6\cdot 2^0 + 1\cdot 2^4}=\frac 1{76},
\end{equation}
so $\epsilon$-approximate AME(4,2) states do not exist for any $\epsilon< \frac{1}{2\sqrt{ 19 }}\approx 0.115$. In Section.~\ref{num} we used numerical optimization and found that there exists approximate AME$(4,2)$ state with $\epsilon\approx 0.2887$.

\subsection{Haar random unitary construction of approximate $k$-uniform states}\label{sec:haar-approximate-uniform}
Next we show our results on generating $\epsilon$-approximate $k$-uniform states from Haar random unitary. Previous works \cite{Page_1993} have fully investigated the eigenvalue distribution of reduced Haar random states. 
To study the property of Haar random states within the framework of $k$-uniform states, we can relax the condition to approximate $k$-uniform states. With Definition~\ref{def:k-u} of approximate $k$-uniform states, we first show that under certain conditions, a Haar random state $\ket\psi$ is an $\epsilon$-approximate $k$-uniform state with high probability. 

\begin{theorem}
\label{thm:approx-uniform-prob}
A Haar random state $\ket\psi\in (\mathbb C^d)^{\otimes n}$ is an $\epsilon$-approximate $k$-uniform state with success probability $\mathrm{Prob}_H\mathrm{(success)}$ greater than
\begin{equation}
\mathrm{Prob}_H\mathrm{(success)}\geq 1- {n\choose k } 2\exp \left(-\frac{d^{n}\delta^{2}}{72\pi^{3}\log 2}\right) ,
\end{equation}
where
\begin{equation}
\begin{aligned}
\delta &:=\epsilon^2-\Delta, \\
\Delta &:= \frac{d^{k}+d^{n-k}}{d^{n}+1}- \frac{1}{d^{k}},
\end{aligned}\label{def:delta-and-Delta}
\end{equation}
given the constraint that $\epsilon^2\geq 2\Delta$.\label{constraint} $\mathrm{Prob}_H$ refers to the probability under Haar measure.
\end{theorem}
Before giving the proof of Theorem~\ref{thm:approx-uniform-prob}, we first do basic asymptotic analysis on the result. We define the parameter $\alpha:= \frac kn$ with constraint $0<\alpha<\frac 12$. Note that $\Delta=\frac{1}{d^{n-k}}+ O( \frac{1}{d^{n}} \max( \frac{1}{d^{k}}, \frac{1}{d^{n-k}}))$ under large $n$ limit. Since $k<n$, we take $\epsilon\geq \sqrt{ 2\Delta }=\sqrt{ 2 } (d^{- (n-k)/2 }+O(d^{- (2n-k)/2}))$, and use the inequality on combinatorial numbers 
\begin{equation}
    \log {n\choose n\alpha}\leq n S(\alpha)+ \log(n+1)-1,\label{eq:log-comb-number}
\end{equation}
where $S(\alpha):=-\alpha\log \alpha -(1-\alpha)\log (1-\alpha)$ is the Shannon entropy. We have
\begin{equation}
\mathrm{Prob}_H(\mathrm{success}) \geq 1-\exp \left[nS(\alpha)+ \log(n+1) + (\log2-1) - \frac{d^{n}}{72\pi^{3}\log 2}\left( \epsilon ^{2} - \frac{1}{d^{n-k}} + O\left(  \frac{1}{d^{2n-k}}\right)\right)^{2}\right].\label{eq:H-succ}
\end{equation}
We prefer the success probability close to 1, which is equivalent to decreasing the value in the exponential in Eq.~\eqref{eq:H-succ} and make it negative in the large $n$ limit. We can achieve this goal by choosing the parameters carefully. The last term is negative and has factor $d^n$, which contributes significantly in the probability concentration in large $n$ limit. If the proximity satisfies the constraint $\epsilon=\omega(d^{- (n-k)/2})$ and $\epsilon=\omega(d^{-n/4})$, we have $\epsilon^2\gg \frac{1}{d^{n-k}}$ and $d^n\epsilon^4\gg 1$. This allows us to remove the other non-leading terms in the exponential, and say that the success probability is asymptotically greater than $1-e^{-cd^n\epsilon^4}$ for some constant $c>0$, which is doubly-exponentially close to 1 in the large $n$ limit. We have the following corollary: 

\begin{corollary}
A Haar random state $\ket\psi\in (\mathbb C^d)^{\otimes n}$ is an $\epsilon$-approximate $(\alpha n)$-uniform state with failure probability satisfying
\begin{equation}
 \gamma := -\log \mathrm{Prob}_H(\mathrm{fail})\gtrsim  \frac{d^{n}}{72\pi^{3}\log2}\epsilon ^{4}
\end{equation}
as $n\to \infty$, given the constraint that $\epsilon= \omega(d^{-n(1-\alpha)/2})$. The failure probability vanishes quickly under the constraint $\epsilon=\omega(d^{-n/4})$. Here $\mathrm{Prob}_H(\mathrm{fail})=1-\mathrm{Prob}_H(\mathrm{success})$ is the probability that Haar random state fails to be an $\epsilon$-approximate $(\alpha n)$-uniform state. We use $\gtrsim$ notation to represent ``\emph{asymptotically greater or equal to}" relation; more specifically, $f(n)\gtrsim g(n)$ represents for $\lim_{n\to\infty} \frac{f(n)}{g(n)}\geq 1$.
\end{corollary}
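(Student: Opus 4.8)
The corollary is purely an asymptotic simplification of Theorem~\ref{thm:approx-uniform-prob} together with the combinatorial inequality displayed in Eq.~\eqref{eq:H-succ}. So the plan is to start from the bound
\[
-\log\mathrm{Prob}_H(\mathrm{fail})\ \geq\ \frac{d^{n}}{72\pi^{3}\log2}\,\delta^{2}\ -\ nS(\alpha)\ -\ \log(n+1)\ -\ (\log2-1),
\]
which is just the negative logarithm of the right-hand side of Theorem~\ref{thm:approx-uniform-prob} after the substitution $k=\alpha n$, and show that under the stated hypothesis $\epsilon=\Omega(d^{-n(1-\alpha)/2})$ the first term dominates all the others, so that the whole expression is asymptotically $\geq \frac{d^{n}}{72\pi^{3}\log2}\,\epsilon^{4}(1-o(1))$.

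\textbf{Key steps.} First I would record that $\delta=\epsilon^{2}-\Delta$ with $\Delta=d^{-(n-k)}+O(d^{-n}\max(d^{-k},d^{-(n-k)}))=d^{-n(1-\alpha)}(1+o(1))$; under the hypothesis $\epsilon^{2}=\Omega(d^{-n(1-\alpha)})$ there is a constant $c>1$ with $\epsilon^{2}\geq c\,\Delta$ for large $n$ (the constraint $\epsilon^{2}\geq 2\Delta$ of Theorem~\ref{thm:approx-uniform-prob} is a special case, and in fact any $c>1$ forces $\delta\geq(1-1/c)\epsilon^{2}$, hence $\delta^{2}\geq(1-1/c)^{2}\epsilon^{4}$); letting the implied constant grow shows $\delta^{2}=\epsilon^{4}(1-o(1))$. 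Second, I would bound the leading term from below: since $\epsilon^{2}\geq \Delta(1+o(1))=d^{-n(1-\alpha)}(1+o(1))$, we get $\frac{d^{n}}{72\pi^{3}\log2}\delta^{2}\geq \frac{d^{n}}{72\pi^{3}\log2}\,d^{-2n(1-\alpha)}(1-o(1))=\frac{1}{72\pi^{3}\log2}\,d^{n(2\alpha-1)}(1-o(1))$, which grows like $d^{n(2\alpha-1)}$ --- wait, this is $\to 0$ when $\alpha<1/2$. The point the corollary is actually making is the conditional statement ``\emph{if} $\epsilon$ is chosen so that the leading term beats $nS(\alpha)$''; so the correct third step is: assume additionally that $\epsilon$ is large enough that $\frac{d^{n}}{72\pi^{3}\log2}\epsilon^{4}=\omega(nS(\alpha))=\omega(n)$ --- equivalently $\epsilon=\omega(d^{-n/4}n^{1/4})$, which is compatible with and stronger than the stated constraint whenever $\alpha$ is bounded away from $0$ and the exponent arithmetic works out --- and then every subtracted term ($nS(\alpha)$, $\log(n+1)$, the constant) is $o$ of the leading term, so $\gamma\gtrsim \frac{d^{n}}{72\pi^{3}\log2}\epsilon^{4}$ follows from $\delta^{2}=\epsilon^{4}(1-o(1))$ proved in step one.

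\textbf{Main obstacle.} The genuinely delicate point is precisely the bookkeeping in the previous sentence: one must be careful about \emph{which} regime of $\epsilon$ (as a function of $n$ and $\alpha$) simultaneously (a) satisfies the constraint $\epsilon^{2}\geq 2\Delta$ needed to even invoke Theorem~\ref{thm:approx-uniform-prob}, (b) makes $\delta=\epsilon^{2}(1-o(1))$ so that $\delta^{4}$ can be replaced by $\epsilon^{4}$, and (c) makes the subtracted entropy/log terms negligible against the leading term. The hypothesis ``$\epsilon=\Omega(d^{-n(1-\alpha)/2})$'' in the corollary handles (a) and (b) but only delivers a \emph{conditional} conclusion for (c) --- which is exactly why the statement is phrased with ``$\gtrsim$'' and ``asymptotically'': the clean inequality $\gamma\gtrsim \frac{d^{n}}{72\pi^{3}\log2}\epsilon^{4}$ is the identity one reads off after discarding lower-order exponential corrections, and the role of the proof is just to certify that each discarded term is $o(1)$ relative to what is kept, under the stated scaling of $\epsilon$. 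Everything else is the elementary algebra of $\log\binom{n}{\alpha n}$ and of $\Delta$'s expansion, which the excerpt has already carried out.
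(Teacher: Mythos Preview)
Your approach is essentially identical to the paper's: take the negative logarithm of the bound in Theorem~\ref{thm:approx-uniform-prob}, expand $\Delta$, bound $\log\binom{n}{\alpha n}$ by $nS(\alpha)+\log(n+1)-1$, and then argue that under the scaling hypothesis the term $\frac{d^{n}}{72\pi^{3}\log 2}\delta^{2}$ dominates the subtracted entropy and log terms so that $\delta^{2}$ may be replaced by $\epsilon^{4}$. The subtlety you flag in your ``Main obstacle'' --- that the bare hypothesis $\epsilon=\Omega(d^{-n(1-\alpha)/2})$ does not quite force $\delta^{2}=\epsilon^{4}(1-o(1))$ nor guarantee the leading term beats $nS(\alpha)$ --- is genuine, and the paper simply glosses over it with the phrase ``fully satisfies the constraint''; you are being more careful than the source on this point.
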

In practical cases, we often want failure probability to vanish (i.e. $\gamma=\omega(1)$) while $d_S\epsilon$ is also vanishingly small (i.e. $d_S \epsilon=o(1)$), or the state is locally indistinguishable unless exponentially many experiments are performed ($d_S\epsilon^2=o(1)$ according to Lemma~\ref{lemma:local-indistinguish}) in the large $n$ limit. Without loss of generality, suppose that in the large $n$ limit $\epsilon\sim d^{n\lambda}$ and $\gamma\sim d^{n\mu}$ with negative $\lambda$ and positive $\mu$ being constants, note that $d_S=d^{k}=d^{n\alpha}$, these conditions and constraints leads us to the following linear programming problem:
\begin{equation}
\left\{\begin{aligned}
&\mu\geq 1+4\lambda> 0, &(\text{vanishing failure probability}) \\
& \lambda+\alpha <0 ,&(\text{vanishing proximity})\\
&\frac12 > \alpha > 0,&(\text{linear uniformity rate}) \\
&\lambda > -\frac12 + \frac 12\alpha &(\text{constraint})\\
&\alpha + 2\lambda<0, &(\text{local indistinguishable})
\end{aligned}\right.\label{eq:constraint}.
\end{equation}
 We can make a plot on the $(\alpha,\lambda)$ coordinate system to show the behavior of random construction (note that the horizontal axis is related to $k$ and vertical axis related to $\epsilon$). Within this region, $(\alpha n)$-uniform states can be easily generated with linear uniformity, high proximity and vanishing failure probability under large $n$ limit, see FIG.~\ref{fig:high-perf-area}.

\begin{figure}[h]
\centering
\includegraphics[width=0.8\textwidth]{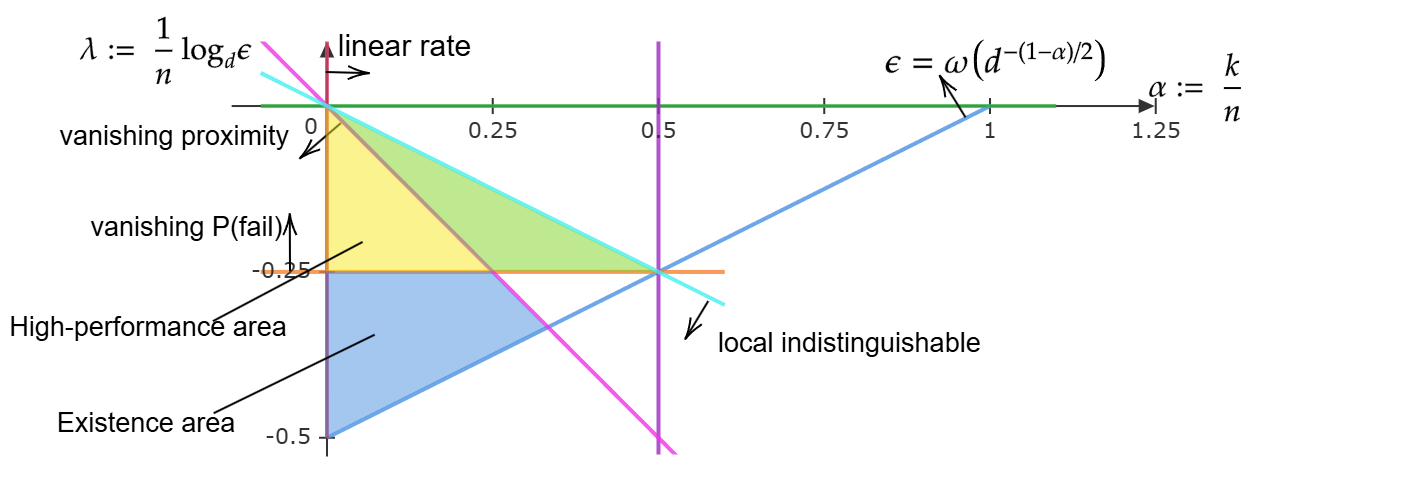}
\caption{
The performance of Haar random construction of $\epsilon$-approximate $k$-uniform states under large $n$ limit. The horizontal axis refers to the uniformity rate $\alpha:=\frac kn$ and the vertical axis refers to the averaged logarithmic error on each qudit $\lambda:=\frac 1n \log_d\epsilon$. The lines represent restrictions from Eq.~\eqref{eq:constraint}. The yellow area represents the high-performance area, where the approximate uniform states can be generated with linear uniformity rate, vanishing proximity and vanishing failure probability; the blue area represents the states that have better properties but may not be Haar-randomly constructed with high probability; the green area represents the states that do not have vanishingly small proximity but is both constructible and locally indistinguishable from exact $(\alpha n)$-uniform states unless exponentially many local measurements are performed.
}
\label{fig:high-perf-area}
\end{figure}

To prove Theorem~\ref{thm:approx-uniform-prob}, we need to provide two facts about purity function first: the average under Haar measure and the Lipschitz constant. Using Levy's Lemma, we can derive the probabilistic result on one subsystem $S$ with $|S|=k$. Then we use de-Morgan's rule to get our final result.
\subsubsection{Proof of Average Purity}
Define the purity function of $\ket\psi$ with density matrix $\rho=\ket\psi\bra\psi$ for some subsystem $A\subseteq [n]$ as $f_A(\psi) := \tr { \rho_A^2}$, and we denote $B:= A^c$ so that $A,B$ is a bipartition of the total system. We can calculate the Haar average of this function:
\begin{lemma}\label{avg-pur}
    If a circuit ensemble $\mu$ is a $2$-design on the composite system $AB$, then the average subsystem purity with respect to $\mu$ is:
\begin{equation}
\mathbb {E}_{U\sim \mu} [f_A(\psi)] = \frac{d_A+d_B}{d_Ad_B+1},
\end{equation}
where $d_A$ and $d_B$ is the dimension of subsystem A and B respectively.
\end{lemma}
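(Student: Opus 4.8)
The plan is to evaluate the average by the standard replica (swap-trick) argument, exploiting that a $2$-design reproduces the Haar second moment. Write $\rho=\ketbra{\psi}{\psi}$ for the state $\ket\psi$ produced by applying $U\sim\mu$ to a fixed product reference state. First I would express the subsystem purity as a linear functional of two copies of $\rho$: introducing a second copy of $\cH_{AB}$ and reorganizing the two-copy space as $(\cH_A\otimes\cH_A)\otimes(\cH_B\otimes\cH_B)$, the elementary identity $\tr\big[(\tr_B\rho)^2\big]=\tr_{AA}[(\rho_A\otimes\rho_A)S_A]$ together with $\rho_A\otimes\rho_A=\tr_{BB}[\rho^{\otimes2}]$ gives $f_A(\psi)=\tr\big[(S_A\otimes I_{BB})\,\rho^{\otimes2}\big]$, where $S_A$ swaps the two copies of $\cH_A$ and acts trivially on both copies of $\cH_B$.

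Next, by linearity of the trace the expectation passes onto the state: $\mathbb{E}_{U\sim\mu}[f_A(\psi)]=\tr\big[(S_A\otimes I_{BB})\,\mathbb{E}_{U\sim\mu}(U\ketbra{0}{0}U^\dagger)^{\otimes2}\big]$. Because $\mu$ is a $2$-design on $AB$, this second moment equals its Haar value, namely $\mathbb{E}_{U\sim\mu}(U\ketbra{0}{0}U^\dagger)^{\otimes2}=\frac{I_{AB}^{\otimes2}+S_{AB}}{d(d+1)}$, where $d:=d_Ad_B$ and $S_{AB}$ is the swap on the full two-copy system; crucially, under the reorganized factorization $S_{AB}=S_A\otimes S_B$.

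It then remains to evaluate two elementary traces. For the identity term, $\tr[S_A\otimes I_{BB}]=\tr[S_A]\,\tr[I_{BB}]=d_A d_B^2$, using $\tr[S_A]=d_A$. For the swap term, $(S_A\otimes I_{BB})(S_A\otimes S_B)=S_A^2\otimes S_B=I_{AA}\otimes S_B$, whose trace is $\tr[I_{AA}]\,\tr[S_B]=d_A^2 d_B$. Summing these and dividing by $d(d+1)=d_Ad_B(d_Ad_B+1)$ yields $\mathbb{E}_{U\sim\mu}[f_A(\psi)]=\frac{d_Ad_B^2+d_A^2d_B}{d_Ad_B(d_Ad_B+1)}=\frac{d_A+d_B}{d_Ad_B+1}$, which is the claim.

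The argument is essentially mechanical, so I do not anticipate a genuine obstacle; the only point demanding care is the bookkeeping of tensor-factor orderings — that $\cH_{AB}^{\otimes2}\cong\cH_A^{\otimes2}\otimes\cH_B^{\otimes2}$, and that under this identification the global swap factors as $S_{AB}=S_A\otimes S_B$ while the operator appearing in the purity formula is the \emph{partial} swap $S_A\otimes I_{BB}$ — together with checking that the $2$-design hypothesis is precisely what licenses replacing the $\mu$-average of the second-moment operator by its Haar counterpart.
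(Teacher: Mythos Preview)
Your proposal is correct and follows essentially the same approach as the paper: both express the purity via the swap trick as $\tr[(\mathbb{F}_A\otimes\mathbb{I}_B)\,\rho^{\otimes 2}]$, invoke the $2$-design property to replace $\mathbb{E}_\mu[\rho^{\otimes 2}]$ by the Haar second moment $(\mathbb{I}+\mathbb{F})/D(D+1)$, and then evaluate the two elementary traces $\tr[\mathbb{F}_A\otimes\mathbb{I}_B]=d_Ad_B^2$ and $\tr[\mathbb{I}_A\otimes\mathbb{F}_B]=d_A^2d_B$. The only cosmetic difference is that the paper first records the general Weingarten-style formula $\mathbb{E}[U^{\otimes 2}OU^{\dagger\otimes 2}]=c_{\mathbb{I}}\mathbb{I}+c_{\mathbb{F}}\mathbb{F}$ before specializing to $O=\proj{0}^{\otimes 2}$, whereas you write down the projector onto the symmetric subspace directly.
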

Since the Haar measure can be considered as an $\infty$-design, this lemma surely applies to Haar random measure. The lemma can also be found in the related literatures on Haar measure \cite{Mele_2024}.

\begin{proof}
Using the results of Haar-random expectation on $D:= d_Ad_B$ dimensional space with moment 2:
\begin{equation}
\mathbb{E}_{U\sim \mu}[U^{\otimes 2} O U^{\dagger \otimes 2}] = c_{\mathbb{I}}\mathbb{I}+c_{\mathbb{F}}\mathbb{F},
\end{equation}
where the $\mathbb I$ and $\mathbb F$ are identity and exchange operators respectively. The coefficients $c_{\mathbb I},c_{\mathbb F}$ are given by
\begin{align}
c_{\mathbb{I}} & = \frac{ \mathrm{Tr}O- \frac{1}{D}\mathrm{Tr}(\mathbb{F}O) }{D^{2}-1};  \\
c_{\mathbb{F}}  & = \frac{ \mathrm{Tr}(\mathbb{F}O) - \frac{1}{D}\mathrm{Tr}O }{D^{2}-1}.
\end{align} 
Using the property of 2-design, we have
\begin{equation}
    \begin{aligned}
\mathbb{E}_{U\sim \mu}[f_A(\psi)]  & = \mathbb{E}_{U}\mathrm{Tr}\left[(U^{\otimes 2}(\ket{0} \bra{0} _{A} \otimes  \ket{0} \bra{0} _{B})^{\otimes 2} U^{{\dagger}\otimes 2}) (\mathbb{F}_{A}\otimes  \mathbb{I}_{B})\right] \\
  & =\mathrm{Tr} \left[  \frac{ 1- \frac{1}{D} }{D^{2}-1} (\mathbb{I}+\mathbb{F})(\mathbb{F}_{A}\otimes  \mathbb{I} _{B}) \right] \\
  & = \frac{1}{D(D+1)}\mathrm{Tr} [\mathbb{F}_{A}\otimes \mathbb{I}_{B} + \mathbb{I}_{A}\otimes  \mathbb{F}_{B}] \\
  & = \frac{d_{A}d_{B}^{2}+d_{B}^{2}d_{A} }{d_{A}d_{B}(d_{A}d_{B}+1)}  =\frac{d_{A}+d_{B}}{d_{A}d_{B}+1},
\end{aligned}
\end{equation}
where we denoted $\mathbb I_S,\mathbb F_S$ as the identity and exchange operator on the two copies of subsystem $S$, and since the total system is composed from $A,B$, we have $\mathbb I= \mathbb I_A\otimes \mathbb I_B$, $\mathbb F=\mathbb F_A\otimes \mathbb F_B$.
\end{proof}
\subsubsection{Proof of Upper Bound on Lipschitz Constant}
Next we find an upper bound for the Lipschitz constant of purity function:
\begin{lemma}
    The Lipschitz constant of purity function is upper bounded by $4$.\label{lipschitz-pur}
\end{lemma}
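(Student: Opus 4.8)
The goal is to show that $f_A(\psi) = \tr \rho_A^2$, viewed as a real-valued function on the unit sphere of $(\mathbb{C}^d)^{\otimes n}$, is Lipschitz with constant at most $4$ with respect to the Euclidean (Hilbert–Schmidt) distance on state vectors. My plan is to bound the operator norm of the gradient of $f_A$, since for a differentiable function on the sphere the best Lipschitz constant equals $\sup_\psi \|\nabla f_A(\psi)\|$. First I would write $f_A(\psi) = \tr_A[(\tr_B \proj{\psi})^2]$ and differentiate along a curve $\ket{\psi_t}$ with $\ket{\dot\psi_0} = \ket{v}$. Using $\tfrac{d}{dt}\rho_A = \tr_B(\ketbra{v}{\psi} + \ketbra{\psi}{v})$ and $\tfrac{d}{dt}\tr\rho_A^2 = 2\tr(\rho_A \dot\rho_A)$, one gets
\begin{equation}
\left.\frac{d}{dt} f_A(\psi_t)\right|_{t=0} = 4\,\mathrm{Re}\,\bra{\psi}(\rho_A\otimes I_B)\ket{v}.
\end{equation}
Hence $|\tfrac{d}{dt}f_A| \le 4\,\|(\rho_A\otimes I_B)\ket{\psi}\|\cdot\|v\|$, and since $\|\rho_A\otimes I_B\|_\infty = \|\rho_A\|_\infty \le 1$ and $\|\psi\|=1$, the directional derivative is bounded by $4\|v\|$. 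This yields $\|\nabla f_A\|\le 4$ and therefore the Lipschitz constant is at most $4$.

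An alternative, perhaps cleaner, route avoids calculus entirely: for two unit vectors $\ket{\psi}, \ket{\phi}$, use the chain $|f_A(\psi) - f_A(\phi)| = |\tr\rho_A^2 - \tr\sigma_A^2| = |\tr[(\rho_A-\sigma_A)(\rho_A+\sigma_A)]| \le \|\rho_A-\sigma_A\|_1 \cdot \|\rho_A+\sigma_A\|_\infty \le 2\|\rho_A - \sigma_A\|_1 \le 2\|\rho-\sigma\|_1$, where the last step is monotonicity of trace distance under partial trace. Then I would invoke the standard bound relating trace distance of pure states to vector distance, $\|\proj{\psi} - \proj{\phi}\|_1 = 2\sqrt{1-|\braket{\psi}{\phi}|^2} \le 2\|\ket{\psi}-\ket{\phi}\|$, to conclude $|f_A(\psi)-f_A(\phi)| \le 4\|\ket{\psi}-\ket{\phi}\|$.

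The main obstacle is not any single hard estimate but rather being careful about constants and conventions: which norm is meant by ``Lipschitz'' (vector $2$-norm on the sphere, as is standard for Levy's lemma), the factor-of-two discrepancies between trace norm and trace distance, and the inequality $\sqrt{1-|\braket{\psi}{\phi}|^2}\le \|\ket\psi - \ket\phi\|$ which requires $\|\ket\psi-\ket\phi\|^2 = 2-2\mathrm{Re}\braket{\psi}{\phi}$ and a short argument that $1-|z|^2 \le 2(1-\mathrm{Re}\,z)$ for $|z|\le 1$. I would present the gradient-norm computation as the primary proof since it most transparently produces the constant $4$ and meshes directly with the hypotheses of Levy's lemma used in the proof of Theorem \ref{thm:approx-uniform-prob}.
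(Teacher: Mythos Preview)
Both of your approaches are correct, but neither matches the paper's argument. The paper factors through the function $g(\varphi)=\sqrt{\tr\rho_A^2}$, quotes Lemma~III.8 of Hayden et al.\ (which gives that $g$ is $2$-Lipschitz), and then observes that $f_A=g^2$ with $0\le g\le 1$, so $|f_A(\varphi)-f_A(\psi)|=(g(\varphi)+g(\psi))\,|g(\varphi)-g(\psi)|\le 2\cdot 2\,\|\ket\varphi-\ket\psi\|_2$. In other words, the paper outsources the analytic work to an external lemma about $\sqrt{\tr\rho_A^2}$ and then does a one-line algebraic step.

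Your gradient computation and your Hölder/monotonicity chain are both self-contained and avoid the citation entirely; the second of your routes is essentially the same inequality chain that underlies the Hayden et al.\ lemma, just applied directly to $f_A$ rather than to its square root. One small point worth making explicit in the gradient route: to pass from the pointwise bound $\|\nabla f_A(\psi)\|\le 4$ to a Lipschitz bound in the Euclidean (chord) metric, you should note that $f_A$ extends to the closed unit ball with $\|\nabla f_A(\psi)\|\le 4\|\psi\|^3\le 4$ there, so the mean-value inequality along the straight segment between two unit vectors finishes the job. As written you only computed the derivative at $t=0$, and the sphere is not convex, so this extension (or the equivalent geodesic argument) is the step that actually delivers the global Lipschitz statement.
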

\begin{proof}
Using Lemma III.8 in \cite{Hayden_2006} we know that the Lipschitz constant of the function $g(\varphi)=\sqrt{\tr(\varphi_A^2)}$ is upper bounded by $2$. In our case $f_A(\varphi)=g^2(\varphi)$, which gives that for any $\varphi,\psi$ with $f_A(\varphi)\geq f_A(\psi)$,
\begin{equation}
\begin{aligned}
    f_A(\varphi)-f_A(\psi) &= g^2(\varphi)-g^2(\psi)\\ &= (g(\varphi)-g(\psi) )(g(\varphi)+g(\psi))\\
     &\leq 2(g(\varphi)-g(\psi))\\
     &\leq 4|| \ket\varphi-\ket\psi||_2,
\end{aligned}
\end{equation}
where we have used the result that $g(\varphi)\leq 1$. So the Lipschitz constant is upper bounded by $4$.

\end{proof}
\subsubsection{Final Proof}

With the mean and Lipschitz constant of the purity function,
taking $n$ qudits and with each local dimension $d$. Select the subsystem size $|A|=k$, so $d_{A}=d^{k}$ and $d_{B}=d^{n-k}$, we can prove the concentration result (Theorem~\ref{thm:approx-uniform-prob}) on Haar random state $\ket{\psi}$ on $d_{A}d_{B}$-dimensional Hilbert space:

\begin{proof}{(Proof of Theorem~\ref{thm:approx-uniform-prob})}

By applying Levy's lemma (see Lemma III.1.1 in \cite{Hayden_2006}), we know the concentration of purity function under Haar measure: for any Haar random state $\ket\psi\in (\mathbb C^d)^{\otimes n}$, the probability of the event that purity function deviates from its mean value is greater than $\delta$ is upper bounded by $Ce^{-a \delta^2}$, where $C,a$ are some given constants independent from $\delta$. 

Consider the total quantum state space, it has real dimension $d_R=2d_Ad_B-1$. According to Lemma~\ref{avg-pur}, one choice of the coefficients are $C=2$, and $a=\frac{d_R+1}{9\pi^3 L^2\log 2 }\geq \frac{2d_Ad_B}{144\pi^3\log 2}$, where we have proved that the Lipschitz constant $L\leq 4$ in Lemma~\ref{lipschitz-pur}. Then we come to the following probabilistic deviation bound:
\begin{equation}
\mathrm{Prob}_H \left[ \left| \mathrm{Tr}\rho_{A}^{2}-  \frac{d_{A}+d_{B}}{d_{A}d_{B}+1}\right| \geq \delta \right] \leq 2\exp \left( - \frac{d_{A}d_{B}{\delta}^{2}}{72\pi^{3}\log 2}\right).
\end{equation}
By Definition~\ref{def:k-u} the subsystem purity $\tr \rho_A^2\in [\frac 1 {d_A}, \frac 1 {d_A} + \epsilon^2]$. In our discussion we use the symbols $\delta,\Delta$ defined in Eq.~\eqref{def:delta-and-Delta} and impose the constraint $\epsilon^2\geq 2\Delta$ on $\epsilon$.
\begin{figure}
    \centering
    \includegraphics[width=\linewidth]{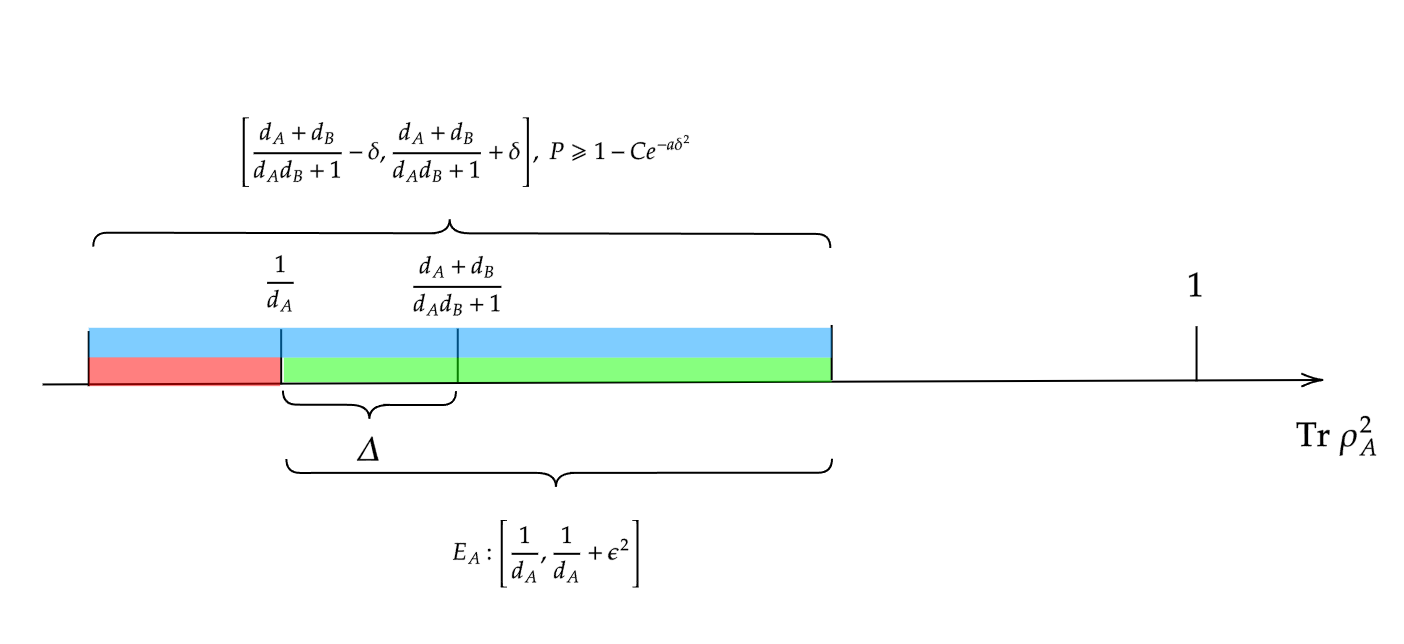}
    \caption{Description of Eq.~\eqref{eq:convert} when $\epsilon^2\geq 2\Delta$. Since $\tr \rho_A^2< \frac {1}{d_A}$ is impossible (red region), setting $\delta=\epsilon^2-\Delta$ converts the concentration results centering average value (blue region) to the probability lower bound of event $E_A$ (green region).}
    \label{fig:region}
\end{figure}
Since the purity can not be smaller than $\frac 1 {d_A}$, the two probabilities are equal (see FIG.~\ref{fig:region}), and the Haar random state satisfies the $\epsilon$-approximate $k$-uniform state condition on subsystem $A$ (denote this event as $E_{A}$) with probability
\begin{equation}
\mathrm{Prob}_H[E_{A}] = \mathrm{Prob} _H\left[ \left| \mathrm{Tr}\rho_{A}^{2}-  \frac{d_{A}+d_{B}}{d_{A}d_{B}+1}\right|\leq \delta \right] \geq 1-2 \exp \left( - \frac{d_{A}d_{B}\delta^{2}}{72\pi^{3}\log 2} \right).\label{eq:convert}
\end{equation}
Denote the event that a Haar random state satisfy the $\epsilon$-approximate $k$-uniform state condition on any $k$-partite subsystem as $E$. In the following discussion we use superscript $c$ to refer to the compliment of a event. By definition $E=\bigcap_{A\subseteq[n],|A|=k}E_{A}$. Using de-Morgan's rule and union theorem of probability measure,
\begin{equation}
\begin{aligned}
\mathrm{Prob}_H(E^{c}) & =\mathrm{Prob}_H\left( \bigcup_{A\subseteq[n],|A|=k} E_{A}^{c}\right)  \\
  & \leq \sum_{A\subseteq[n],|A|=k} \mathrm{Prob}_H\left( E_{A}^{c}\right) .
\end{aligned}
\end{equation} 
In our settings $d_{A}=d^{k},d_{B}=d^{n-k}$, so
\begin{equation}
\mathrm{Prob}_H[E_{A}^{c}]\leq 2\exp\left(- \frac{d^{n}\delta^{2}}{72\pi^{3}\log 2}\right),
\end{equation}
and we come to the final result
\begin{equation}
    \begin{aligned}
\mathrm{Prob}_H[E^{c}] & \leq {n\choose k} 2\exp\left(-\frac{d^{n}\delta^{2}}{72\pi^{3}\log2}\right),  \\
  \mathrm{Prob}_H[E] &\geq 1- {n\choose k } 2\exp \left(-\frac{d^{n}\delta^{2}}{72\pi^{3}\log2}\right) .
    \end{aligned}
\end{equation}
We have thus proved the probability lower bound for Haar random state being an approximate $k$-uniform state.
\end{proof}

\subsection{Random circuit construction of approximate $k$-uniform states}
From the discussions above, we have shown that it is simple to generate an $\epsilon$-approximate $k$-uniform state if we can generate Haar random unitary. However, Haar random unitary is too ideal since it requires a quantum circuit of depth exponential in the number of qubits, which is experimentally challenging for today's quantum experiments. To simulate the behavior of random sampling under Haar measure in a unitary group, the concept of $\epsilon$-approximate $t$-design was proposed. 

There are multiple forms of definitions to the approximate uniform states (see Ref.\cite{low2010pseudorandomnesslearningquantumcomputation} for different kinds of definitions). These definitions are useful in different scenarios. In this section, we will relate the definition based on relative error (used in Ref.~\cite{schuster2024random}) and the definition based on monomials (used in Ref.~\cite{Low_2009}).
\begin{definition}[$\epsilon$-approximate $t$-design based on relative error]
\label{approx-k-design-def1}
     An $\epsilon$-approximate unitary $t$-design is defined as the random unitary ensemble $\mathcal E$  such that the corresponding channel $\Phi_\mathcal{E}$ is $\epsilon$-close to the Haar random unitary channel $\Phi_H$:
\begin{equation}
(1-\epsilon) \Phi_H \preceq \Phi_\mathcal E \preceq (1+\epsilon) \Phi_H,
\end{equation}
where the quantum channel $\Phi_\mathcal E(\cdot)$ is defined as
\begin{equation}
\Phi_\mathcal{E}(A):=\mathbb E_{U\sim \mathcal E}\left[U^{\otimes t} A U^{\dagger \otimes t}\right],
\end{equation}
and $\Phi \preceq \Phi'$ denotes that $\Phi'-\Phi$ is a completely-positive map.
 \end{definition}

 In the paper presenting large deviation bounds \cite{Low_2009}, the definition based on monomials is used: we call the matrix function $M(U)$ a \emph{balanced monomial of degree $k$ }if it has the form $M(U)=U_{i_1j_1}U_{i_{2}j_{2}}\dots U_{i_{k}j_{k}}U^*_{i_{1}'j_{1}'}\dots U^{*}_{i_{k}'j_{k}'}$, where $i_{\alpha},j_{\alpha},i_{\alpha}',j_{\alpha}'\in [d](\alpha=1,2,\dots,k)$ are arbitrary indices. We have the following definition of $\epsilon$-approximate $k$-design:

\begin{definition}[$\epsilon$-approximate $t$-design based on monomials]
\label{approx-k-design-def2}
$\nu$ is an $\epsilon$-approximate unitary $t$-design if, for all balanced monomials $M$ of degree $\leq t$, 
\begin{equation}
\left| \mathbb{E}_{U\sim \nu} M(U) - \mathbb{E}_{U\sim H} M(U) \right| \leq \frac{\epsilon}{D^{t}},
\end{equation}
where $D$ is the total system dimension.
\end{definition}

These definitions are equivalent in the sense that if $\nu$ is an $\epsilon$-approximate $k$-design in Definition~\ref{approx-k-design-def1}, then it is an $\epsilon'$ approximate $k$-design in Definition~\ref{approx-k-design-def2} with $\epsilon'=\alpha\epsilon$, where $\alpha$ is related to system dimension. To find their equivalence factor, we integrate the results from Ref.~\cite{schuster2024random,low2010pseudorandomnesslearningquantumcomputation}, and proposed the following lemma connecting the multiplicative error and monomial-based error:
\begin{lemma}[Multiplicative and monomial-based error; see Lemma 4 in \cite{schuster2024random} and Lemma 2.2.14 in \cite{low2010pseudorandomnesslearningquantumcomputation}]
Any unitary $t$-design with relative error $\epsilon'$ (Definition~\ref{approx-k-design-def1}) is a unitary $t$-design with an additive error (under diamond norm) $2\epsilon'$; Any unitary $t$ design with additive error (under diamond norm) $2\epsilon'$ is a $2\epsilon' D^{\sigma t}$-approximate unitary $t$-design based on monomials (Definition~\ref{approx-k-design-def2}), where $\sigma$ is a constant that can be taken as $\frac 72$.
\end{lemma}

There have been many remarkable developments\cite{schuster2024random,Low_2009,Zhou_2022,Zhu_2017,zhu2016cliffordgroupfailsgracefully} on the technique of generating $\epsilon$-approximate $k$-design ensembles: for example, Ref.~\cite{Zhou_2022} investigated the properties of random hypergraph states generated by random controlled-Z and controlled-controlled-Z gates; Ref.~\cite{schuster2024random} proved that logarithmic-depth circuit is enough to generate an $\epsilon'$-approximate $t$-design with large $t$: more specifically,
\begin{theorem}[Generating $\epsilon'$-approximate $t$-designs in low depth; see~\cite{schuster2024random}]\label{approx-t-design-in-low-depth}
An $\epsilon'$-approximate $t$-design (based on Definition~\ref{approx-k-design-def1}) can be generated in circuit depth 
\begin{itemize}
    \item $L=\mathcal{O}(\log(n/\epsilon')\cdot t\operatorname{polylog}(t))$ for 1D circuits without ancilla qubits, or
    \item $L=\mathcal O(\log\log (n/\epsilon'))$ for all-to-all circuits with $\mathcal O(n\log (n/\epsilon'))$ ancilla qubits and $t\leq 3$.
\end{itemize} 
\end{theorem}

Moreover, Ref.~\cite{Low_2009} presented concentration results on pseudo-random distribution, which is a similar result to Levy's lemma, but has weaker constraint on the underlying probability measure in the sense that the ensemble only forms an $\epsilon'$-approximate $t$-design instead of an ideal Haar measure. This makes the corresponding results applicable in shallow random quantum circuits.
\begin{lemma}[Deviation bounds on $\epsilon'$-approximate $t$-design; see \cite{Low_2009}\label{thm:prob-concen-k-design}]

Let $f$ be a polynomial of degree $K$. Let $f(U)=\sum_{i}\alpha_{i}M_{i}(U)$ where $M_{i}(U)$ are monomials and let $\beta(f)=\sum_{i}|\alpha_{i}|$. Suppose $f$ has probability concentration under Haar measure:
\begin{equation}
\mathrm{Prob}_H(|f-\mu|\geq\delta) \leq Ce^{-a\delta ^{2}},
\end{equation}
where $C,a$ are some given constants. Let $\nu$ be an $\epsilon'$-approximate unitary $t$-design (Definition~\ref{approx-k-design-def2} based on monomials). Then
\begin{equation}
\mathrm{Prob}_{\nu}(|f-\mu|\geq\delta) \leq \frac{1}{\delta ^{2m}} \left( C \left( \frac{m}{a} \right)^{m} + \frac{\epsilon'}{D^{t}}(\beta(f) + |\mu|)^{2m} \right)
\end{equation}
for integer $m$ with $2mK\leq t$, where $D$ is the total system dimension, and $\mathrm{Prob}_H,\mathrm{Prob}_\nu$ represents the probability under Haar measure and approximate $t$ design measure, respectively.
\end{lemma}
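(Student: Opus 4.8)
\noindent\emph{Proof plan.} The strategy I would use is the moment method: bound the deviation probability under $\nu$ by a high even moment via Markov's inequality, then split that moment into its Haar value (which the assumed Gaussian tail controls) plus the discrepancy between the $\nu$-average and the Haar average of a fixed polynomial of degree at most $t$ (which the design property controls). Throughout I read ``degree $K$'' as saying $f$ is a \emph{balanced} polynomial --- each monomial $M_i(U)$ has equal degree in the entries of $U$ and of $\bar U$, and that common degree is at most $K$ --- so that $g(U):=(f(U)-\mu)^{2m}$ is again balanced, of degree at most $2mK\le t$; this is exactly the role of the hypothesis $2mK\le t$.

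First I would apply Markov's inequality to the nonnegative variable $(f-\mu)^{2m}$:
\begin{equation}
\mathrm{Prob}_\nu(|f-\mu|\geq\delta)=\mathrm{Prob}_\nu\!\left((f-\mu)^{2m}\geq\delta^{2m}\right)\leq\frac{1}{\delta^{2m}}\,\mathbb{E}_\nu\!\left[(f-\mu)^{2m}\right],
\end{equation}
and then write $\mathbb{E}_\nu[(f-\mu)^{2m}]=\mathbb{E}_H[(f-\mu)^{2m}]+\big(\mathbb{E}_\nu-\mathbb{E}_H\big)\big[(f-\mu)^{2m}\big]$. For the Haar term, the layer-cake identity turns the assumed tail bound into a moment bound:
\begin{equation}
\mathbb{E}_H\!\left[(f-\mu)^{2m}\right]=\int_0^{\infty}2m\,x^{2m-1}\,\mathrm{Prob}_H(|f-\mu|\geq x)\,dx\leq C\int_0^{\infty}2m\,x^{2m-1}e^{-ax^2}\,dx=\frac{C\,m!}{a^{m}}\leq C\!\left(\frac{m}{a}\right)^{\!m},
\end{equation}
the last step using $m!\le m^{m}$; this reproduces the first term of the claimed bound.

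For the design-error term, I would use that $g=(f-\mu)^{2m}$ is a balanced polynomial of degree at most $t$, so the defining inequality $(1-\epsilon')\Phi_H\preceq\Phi_\nu\preceq(1+\epsilon')\Phi_H$ of an $\epsilon'$-approximate $t$-design forces the $\nu$-expectation of every balanced monomial of degree at most $t$ to agree with its Haar expectation up to an additive error $\epsilon'/D^{t}$. Expanding $g=\sum_j\gamma_jN_j$ and applying this term by term gives $|(\mathbb{E}_\nu-\mathbb{E}_H)[g]|\le(\epsilon'/D^{t})\,\beta(g)$. Since $\beta(\cdot)$ is subadditive under sums and submultiplicative under products of polynomials, and $\beta$ of the constant polynomial equals $|\mu|$, one gets $\beta(g)=\beta\big((f-\mu)^{2m}\big)\le(\beta(f)+|\mu|)^{2m}$. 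Combining the three estimates yields
\begin{equation}
\mathrm{Prob}_\nu(|f-\mu|\geq\delta)\leq\frac{1}{\delta^{2m}}\!\left(C\!\left(\frac{m}{a}\right)^{\!m}+\frac{\epsilon'}{D^{t}}\big(\beta(f)+|\mu|\big)^{2m}\right),
\end{equation}
which is the assertion.

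The Markov step and the tail-to-moment integral are routine; the delicate step --- and the main obstacle --- is justifying the per-monomial design error $\epsilon'/D^{t}$, i.e.\ passing from the \emph{channel}-level guarantee $(1\pm\epsilon')\Phi_H$ to a \emph{scalar} bound valid for \emph{every} balanced monomial of degree at most $t$, including those of degree strictly below $t$ that occur in $g$ (the constant $\mu^{2m}$ contributes zero error, but the intermediate degrees must still be accounted for). The standard device is to realise a balanced monomial of degree $s<t$ as a fixed matrix element of $\Phi_{\mathcal E}\big(|\vec b\rangle\langle\vec c|\otimes I^{\otimes(t-s)}\big)$ (for $\mathcal E=\nu$ and for $\mathcal E=H$) after contracting the $t-s$ spectator legs against the identity, so that the discrepancy on that monomial becomes a matrix element of $(\Phi_\nu-\Phi_H)(\,\cdot\,)$ with the common normalisation $D^{-t}$; one then decomposes the generally non-Hermitian argument $|\vec b\rangle\langle\vec c|$ into positive parts in order to invoke the operator inequality, and uses balancedness of $f$ to ensure only such monomials appear. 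Carrying this reduction through carefully, and checking that the constants $C$ and $a$ survive unchanged into the $\nu$-bound, is the technical heart; everything else is the elementary assembly above (for which, in a full write-up, I would simply cite \cite{Low_2009}).
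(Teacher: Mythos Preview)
The paper does not prove this lemma: it is quoted verbatim as a result of Low \cite{Low_2009} and used as a black box in the proof of Theorem~\ref{thm:kdesign-kuniform}. Your reconstruction --- Markov on the $2m$-th moment, layer-cake to turn the Gaussian tail into $C\,m!/a^m\le C(m/a)^m$, then the $\epsilon'/D^t$ per-monomial design error summed against $\beta\big((f-\mu)^{2m}\big)\le(\beta(f)+|\mu|)^{2m}$ --- is exactly the argument in Low's paper, so there is nothing to compare beyond noting that your proposal matches the original source the paper cites.
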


It seems that it only requires the ensemble to be a $2$-design to generate approximate $k$-uniform states, since approximate $k$-uniform only has purity constraints on reduced density matrices. However, we find that for pseudo-random unitary ensemble, it requires $t$ to be bigger to reach similar concentration results from Haar random ensemble (Theorem~\ref{thm:approx-uniform-prob}), but a constant $t$ suffices for our purpose.

In this section, we integrate these results and show that low depth circuit can generate approximate $k$-uniform states with high probability. We come to the following conclusion:

\begin{theorem}\label{thm:kdesign-kuniform}
$\epsilon$-approximate $(\alpha n)$-uniform state on $n$ qudits with local dimension $d$, $\epsilon$ vanishingly small can be generated from one-dimensional(1D) random circuit ensemble of depth $L=O(n)$ with vanishingly small failure probability,
given the constraint that $\epsilon=\omega(d^{- (1-\alpha)n/2})\label{eq:orig-restr}$ and $\epsilon=\omega(d^{-n/4})$.
\end{theorem}

\begin{proof}

First for our purposes, the constants in Theorem~\ref{thm:prob-concen-k-design} are $a= \frac{d_{A}d_{B}}{72\pi^{3}\log 2}$, $C=2$, and $\mu= \frac{d_{A}+d_{B}}{d_{A}d_{B}+1}$. Next we calculate the explicit expression of purity function $f_{A}(\psi)=\mathrm{Tr}\rho _A^{2}$ in terms of polynomial in $U$: due to unitary invariance, suppose random state $\psi$ is evolved from $\ket{0}_{A}\ket{0}_{B}$ and the corresponding unitary $U$ in the ensemble has matrix element $U_{ab,a'b'}:=\bra a\bra b U \ket {a'}\ket{b'}$:

\begin{equation}
\ket{\psi}   =U\ket{0}_{A}\ket{0} _{B}=\sum_{a=1}^{d_{A}}\sum_{b=1}^{d_{B}}U_{ab,00}\ket{a} _{A}\ket{b}_{B}.
\end{equation}
Then the purity function can be expressed in terms of $U$:
\begin{equation}
\mathrm{Tr} \rho_{A}^{2} =  \sum_{bb',aa'} U_{a'b,00 }U^*_{ab,00}U_{ab',00}U^*_{a'b',00}.
\end{equation} 
It is direct to recognize that $\mathrm{Tr}\rho _A^{2}$ is a polynomial of $U$ with degree $K=2$ and $\beta(f)=d_{A}^{2}d_{B}^{2}$, then for $\epsilon'$-approximate $t$-design $\nu$, the concentration result gives

\begin{equation}
\mathrm{Prob}_{\nu}\left( \left|f_{A}(\psi)- \frac{d_{A}+d_{B}}{d_{A}d_{B}+1}\right| \geq \delta \right) \leq \delta ^{-2m}\left( 2 \left( \frac{72\pi^{3}\log2}{d_{A}d_{B}}m \right)^{m} + \frac{\epsilon'}{D^{t}} \left( d_{A}^{2}d_{B}^{2} + \frac{d_{A}+d_{B}}{d_{A}d_{B}+1} \right)^{2m} \right)
\end{equation}
with integer $4m\leq t$ and total dimension $D=d^n$. We may set this to the largest $m=\lfloor \frac{t}{4} \rfloor$ for optimal result. Similar to the discussion in the Haar random case, we use the symbol $\Delta$ defined in Eq.~\eqref{def:delta-and-Delta} and impose the constraint $\epsilon^2\geq 2\Delta$ on $\epsilon$.
Then it satisfies the $\epsilon$-approximate $k$-uniform state constraint on $k$-partite subsystem $A$ with probability greater than
\begin{equation}
\mathrm{Prob}_\nu[E_{A}]\geq 1- \left(\frac{1}{d_{A}} + \epsilon^2 - \frac{d_{A}+d_{B}}{d_{A}d_{B}+1}\right)^{-2m}\left( 2 \left( \frac{72\pi^{3}\log2}{d_{A}d_{B}}m \right)^{m} + \frac{\epsilon'}{D^{t}} \left( d_{A}^{2}d_{B}^{2} + \frac{d_{A}+d_{B}}{d_{A}d_{B}+1} \right)^{2m} \right),
\end{equation}
where $m=\lfloor \frac{t}{4} \rfloor$. For the case of satisfying all possible $k$ parties. Using similar techniques to the proof above, we find that the failure probability is lower bounded by
\begin{equation}
\mathrm{Prob_\nu(fail)}  \leq {n\choose k} \left(\frac{1}{d_{A}} + \epsilon^2 - \frac{d_{A}+d_{B}}{d_{A}d_{B}+1}\right)^{-2m}\left( 2 \left( \frac{72\pi^{3}\log2}{d_{A}d_{B}}m \right)^{m} + \frac{\epsilon'}{D^{t}} \left( D^{2} + \frac{d_{A}+d_{B}}{d_{A}d_{B}+1} \right)^{2m} \right).\label{eq:probfail}
\end{equation}
Note that we have been using the Definition~\ref{approx-k-design-def2} of approximate $t$-designs. Here we convert the definition to Definition~\ref{approx-k-design-def1} by simply making the replacement $\epsilon'\to 2\epsilon'D^{\sigma t}$ where $D=d^n$ and $\sigma=\frac 72$, so that we can relate the $\epsilon'$ and $t$ to circuit depth $L$ using Theorem~\ref{approx-t-design-in-low-depth}.
\begin{equation}
\mathrm{Prob_\nu(fail)}  \leq {n\choose k} \left(\frac{1}{d_{A}} + \epsilon^2 - \frac{d_{A}+d_{B}}{d_{A}d_{B}+1}\right)^{-2m}\left( 2 \left( \frac{72\pi^{3}\log2}{d_{A}d_{B}}m \right)^{m} + \frac{2\epsilon'D^{\sigma t}}{D^{t}} \left( D^{2} + \frac{d_{A}+d_{B}}{d_{A}d_{B}+1} \right)^{2m} \right).
\end{equation}

Denote error magnitude $\gamma:=-\log \mathrm{Prob}(\mathrm{fail}) $. Taking negative logarithm on both sides of Eq.~\eqref{eq:probfail}, that is
\begin{equation}
\begin{aligned}
\gamma  & \geq -nS(\alpha)-\log(n+1)+1 +2m \log \left(\frac 1{d_A} + \epsilon^2 - \frac{d_{A}+d_{B}}{d_{A}d_{B}+1} \right) - \log \left[ 2 \left( \frac{72\pi^{3}\log 2}{D}m \right)^{m} + 2\frac{\epsilon'D^{\sigma t}}{D^{t}} \left( D^{2} + \frac{d_{A}+d_{B}}{d_{A}d_{B}+1} \right)^{2m} \right]\\
  &  \simeq -n S(\alpha) + t\log\epsilon - \max  \left( \frac{t}{4} (\log t - n \log d), \sigma tn\log d - \log  \frac{1}{\epsilon'} \right) \\
  & \simeq  -n S(\alpha) + t\log \epsilon + \frac{n}{4}t\log d -\max \left( 0, -\log \frac{1}{\epsilon'}+  \left( \sigma + \frac 14 \right) nt \log d \right).\label{derivation-approx-design-k-uniform}\\
\end{aligned}
\end{equation}
In the first line of Eq.~\eqref{derivation-approx-design-k-uniform}, we have used Eq.~\eqref{eq:log-comb-number} for bounding logarithm of combinatorial numbers. In the second line of Eq.~\eqref{derivation-approx-design-k-uniform}, to do asymptotic analysis, we have taken $d_{A}=d^{k}$, $d_{B}=d^{n-k}$, $D=d_Ad_B=d^n$, $m=\frac{t}{4}$, $\log(a+b)\simeq\max(\log a,\log b)$ holds asymptotically, assumed $\epsilon=\omega(d^{- n(1-\alpha)/2})$ and removed all non-leading terms; for example, we have $\frac 1 {d_A}+\epsilon^2 - \frac{d_A+d_B}{d_Ad_B+1}\simeq \epsilon^2$, $D^2+ \frac{d_A+d_B}{d_Ad_B+1}\simeq D^2$, and $\log\left[ (\frac c Dm)^m\right]=\frac t4\log(\frac c {d^n}\frac t4)\simeq \frac t4(\log t - n\log d)$. In the third line of Eq.~\eqref{derivation-approx-design-k-uniform} we used the assumption that $\log t=o(n)$, so $t(\log t-n\log d)\simeq -tn\log d$, otherwise the random circuit has exponential depth, which is capable of generating exact Haar random unitary. Then we take $-\frac{tn}4 \log d$ out of maximization for clarity.

To determine which term has larger magnitude, we have to discuss the magnitude of $\log \frac{1}{ \epsilon'}$ and $t$ case by case:
\begin{itemize}
    \item $\log \frac{1}{\epsilon'} \geq (\sigma+\frac 14)nt \log d$, in this case we have $\gamma\geq -nS(\alpha) + t\log \epsilon  + \frac{n}{4}t\log d$, we can take the $\log \frac{1}{\epsilon'}$ to be $\Theta(nt)$ to have optimal circuit depth $L=O(nt^{2}\operatorname{polylog} t)$, next we can find the minimum possible $t$ by letting $\gamma$ to grow linearly in $n$: let $S_{d}(\alpha)= \frac{S(\alpha)}{\log d}$ to be the entropy taken in base $d$, and $\lambda : = \frac{1}{n}\log_{d}\epsilon$ to be the error rate. We have that in the large-$n$ limit,
\begin{equation}
\gamma \gtrsim n \log d\left(\left(\frac{1}{4} + \lambda \right)t- S_{d}(\alpha)\right),
\end{equation}
to have vanishingly small failure probability we only require that $t>\frac{S_{d}(\alpha)}{1 / 4 + \lambda  }= O(1)$ and $\lambda > -\frac 14$ (or $\epsilon=\omega(d^{-n/4})$ equivalently). And thus the corresponding circuit depth requirement is $L=O(n)$. This completes the proof that linear depth circuit is capable of generating $\epsilon$-approximate $k$-uniform states with $\epsilon$ vanishingly small, $k=\Theta(n)$ and failure probability vanishingly small.
\item  $\log \frac{1}{\epsilon'} \leq (\sigma +\frac 14) nt\log d$, in this case we have $\gamma\geq -nS(\alpha)+ t\log\epsilon + \frac{n}{4}t\log d - ( \left( \sigma + \frac 14 \right) nt\log d - \log \frac{1}{\epsilon'})$. This can be rewritten as
\begin{equation}
\gamma\gtrsim n\log d \left( -S_{d}(\alpha) + \lambda t  + \frac{1}{n}\log_{d} \frac{1}{\epsilon'} -  \sigma t\right).
\end{equation}
Since $\lambda<0$, we can see that to keep the right hand side of inequality to be positive, we have to set $ \log_{d} \frac{1}{\epsilon'}>  n((\sigma -\lambda) t+S_{d}(\alpha))$, note that such $\epsilon'$ can be acquired only when $t> S_d(\alpha)/(\lambda + \frac 14)$ and $\lambda > -\frac 14$. In such conditions we can also construct high-performance $k$-uniform states within linear circuit depth.

\end{itemize}

\end{proof}

\subsection{Approximate QECCs}
In this section we propose a definition of approximate QECCs based on approximate $k$-uniform states. 

We first introduce the notion of QECC. Let $\{e_j\}_{i\in \bbZ_{d^2}}$ ($e_0=I_d$) be an orthogonal basis on linear operators acting on $\bbC^{d}$, i.e., $\tr(e_i^{\dagger}e_j)=\delta_{i,j}d$. Then  
$\cE:= \{E_{\alpha}=e_{\alpha_1}\otimes e_{\alpha_2}\otimes\cdots\otimes e_{\alpha_n}\mid \alpha=(\alpha_1,\alpha_2,\ldots,\alpha_n)\in \bbZ_{d^2}^n\}$ is the error ensemble on $(\bbC^{d})^{\otimes n}$. Every error $E_\alpha\in\mathcal E$ has norm $||E_\alpha||=\sqrt{\tr (E_\alpha^\dagger E_\alpha})=\sqrt{d^n}$. For an error $E_\alpha \in \cE$, the support of $E_\alpha$ is defined as $\mathrm{supp}(E_\alpha):=\{i\mid e_{\alpha_i}\neq I_d,\, 1\leq i\leq n\}$, and the weight of $E_\alpha$ is defined as $\mathrm{wt}(E_\alpha):=|\mathrm{supp}(E_\alpha)|$. 

Since $\mathcal E$ forms a complete basis of linear operators acting on the space, we can expand any linear error operator $E\in \mathcal L((\mathbb C^d)^{\otimes n})$ as the linear combination of $E_\alpha\in\mathcal E$: $E=\sum_{\alpha\in \mathbb Z_{d^2}^2}c_\alpha E_\alpha$, where $c_\alpha\in\mathbb C$ are the coefficients. We define the support of a linear operator $E$ to be the union of the support of all its expansion basis with non-zero coefficients:
$\mathrm{supp}(E):=\bigcup_{\alpha,c_\alpha\neq 0}\supp(E_\alpha)$, and the weight of $E$ being $\mathrm{wt}(E):=|\supp(E)|$.
A QECC on $n$ qudits with local dimension $d$ and code distance $\delta$, denoted as $((n,K,\delta))_d$, is a $K$-dimensional subspace $\cal C$ of $(\bbC^{d})^{\otimes n}$ satisfying that: for all states $\ket{\psi}\in \cal C$, and all errors $E_\alpha\in \cal E$ with $\mathrm{wt}(E_\alpha)<\delta$,
\begin{equation}
 \bra\psi E_\alpha \ket\psi = C(E_\alpha),   
\end{equation}
where $C(E_\alpha)$ is a constant independent of $\ket{\psi}$. Moreover,  a QECC is called \emph{pure} if $C(E_\alpha)=\frac{\tr(E_\alpha)}{d^n}$. Note that, a subspace $\cC$ of $(\bbC^d)^{\otimes n}$ is a pure $((n,K,\delta))_d$ QECC if and only if each pure state of $\cC$ is a $(\delta-1)$-uniform state \cite{PhysRevA.104.032601,Huber_2020}.

The condition for perfect error correction is that the entanglement of logical qubits with other system can be restored perfectly. The approximate error correction (AQEC) can be considered in this framework as the entanglement entropy between reference system and environment being small \cite{schumacher2001approximatequantumerrorcorrection}. There are many works on understanding the properties of AQEC \cite{Leung_1997,Liu_2023,Yi_2024,liu2025approximatequantumerrorcorrection}.  In this work, relating to approximate $k$-uniform states, we give the following definition for approximate QECCs.

\begin{definition}\label{thm:pure-aqec}
A $K$-dimensional subspace $\mathcal{C}$
of $(\bbC^d)^{\otimes n}$ is called an $\epsilon$-approximate $((n,K,\delta))_d$ QECC if for all states $\ket{\psi}\in \mathcal{C}$, and for any linear error operator $E$ with  $\mathrm{wt}(E)<\delta$ and norm $||E||=\sqrt {d^n}$ (suppose $S= \mathrm{supp}(E)$),
 we have
\begin{equation}
\left| \left<\psi|E|\psi\right> - \mathrm{Tr}\left( E \frac{P}{K}\right) \right| \leq \epsilon\sqrt{ d_S }.
\end{equation}
Furthermore, the  $\epsilon$-approximate $((n,K,\delta))_d$ QECC is called pure if
\begin{equation}
\left| \left<\psi|E|\psi\right> - \frac{1}{d_S}\mathrm{Tr}(E_S) \right| \leq \epsilon\sqrt{ d_S },
\end{equation}
where $E_S$ is the part of the operator that acts on subsystem S: $E=E_S\otimes I_{S^c}$.
\end{definition}
When $\epsilon=0$, an $\epsilon$-approximate (respectively, pure) $((n,K,\delta))_d$ QECC must be an exact (respectively, pure)  $((n,K,\delta))_d$ QECC.
Note that one-dimensional $\epsilon$-approximate $((n,1,\delta))_d$   QECCs refer only to $\epsilon$-approximate pure $((n,1,\delta))_d$ QECCs. Based on the connection  between QECCs and $k$-uniform states \cite{PhysRevA.104.032601,Huber_2020}, we also build the connection between approximate QECCs and approximate $k$-uniform states.

\begin{theorem}\label{thm:aqecc-eq-uniform}
A subspace $\cC$ of $(\bbC^{d})^{\otimes n}$ is an $\epsilon$-approximate pure $((n,K,\delta))_{d}$ QECC if and only if each pure state of $\cC$ is an $\epsilon$-approximate $(\delta-1)$-uniform state.  In particular, an $\epsilon$-approximate  pure $((n,1,k+1))_{d}$ QECC is equivalent to an $\epsilon$-approximate $k$-uniform state.
\end{theorem}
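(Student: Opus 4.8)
The strategy is to reduce the statement about an $\epsilon$-approximate pure QECC to a statement purely about the reduced density matrices of states in $\cC$, and then recognize that condition as the definition of an $\epsilon$-approximate $(\delta-1)$-uniform state. The key observation is the standard operator-expansion trick used in the exact theory: for a pure state $\ket\psi$ and a subsystem $S$, the quantity $\bra\psi E\ket\psi$ for $E = E_S\otimes I_{S^c}$ equals $\tr(E_S\,\rho_S)$, so all the error-correction data on weight-$<\delta$ errors is encoded in the collection of marginals $\{\rho_S : |S|\le \delta-1\}$. First I would fix $k=\delta-1$ and a subsystem $S$ with $|S|=k$, write $\rho_S = \frac{1}{d_S}(I_S+P_S)$ as in equivalent Definition~(\ref{alternate-def2}), and note that $\tr(E_S) = \tr(E_S I_S)$ while $\frac{1}{d_S}\tr(E_S) $ is exactly the ``pure'' reference value appearing in Definition~\ref{thm:pure-aqec}. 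Hence
\begin{equation}
\left|\bra\psi E\ket\psi - \tfrac{1}{d_S}\tr(E_S)\right| = \frac{1}{d_S}\left|\tr(E_S P_S)\right| \le \frac{1}{d_S}\,\norm{E_S}\,\norm{P_S},
\end{equation}
by Cauchy--Schwarz for the Hilbert--Schmidt inner product, and $\norm{E_S} = \sqrt{d_S}$ because $\norm{E} = \sqrt{d^n}$ and $E = E_S\otimes I_{S^c}$ with $\norm{I_{S^c}}=\sqrt{d^{n-k}}$. This immediately gives the forward direction: if $\ket\psi$ is $\epsilon$-approximate $k$-uniform, then $\norm{P_S}\le d_S\epsilon$, so the bound above is at most $\frac{1}{d_S}\sqrt{d_S}\,d_S\epsilon = \epsilon\sqrt{d_S}$, which is the required inequality. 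One subtlety to handle: $S=\supp(E)$ may have size strictly less than $\delta-1$; there I would invoke Lemma~\ref{lemma:k_1_approximate} so that $\epsilon$-approximate $(\delta-1)$-uniformity implies the weaker marginal bound on smaller subsystems (and check that the weakened constant is still dominated appropriately — actually the cleaner route is to observe that for $|S|=k'<k$ the state is $d^{k-k'}\epsilon$-approximate $k'$-uniform, so $\norm{P_S}\le d_S d^{k-k'}\epsilon$, which overshoots; better is to note directly that $\norm{P_S}\le d_S\cdot d^{(k-k')/2}\cdot(\text{something})$ — I will need to be careful here, see below).

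For the converse direction, the plan is: assume $\cC$ is an $\epsilon$-approximate pure $((n,K,\delta))_d$ QECC, take any pure state $\ket\psi\in\cC$ and any subsystem $S$ with $|S|=k=\delta-1$, and show $\norm{P_S}\le d_S\epsilon$, i.e. $\tr\rho_S^2 \le \frac{1}{d_S}+\epsilon^2$. The idea is to choose the error $E$ adaptively to extract $\norm{P_S}$ from the hypothesis. Write $P_S = \sum_{\alpha\ne 0} c_\alpha\, e_{\alpha}$ in the traceless part of the operator basis on $S$ (here $\alpha$ ranges over the nonzero multi-indices supported in $S$), so $\norm{P_S}^2 = d_S\sum_\alpha |c_\alpha|^2$. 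Then set $E = \big(\sum_{\alpha\ne 0} \bar c_\alpha e_\alpha / \norm{\{c_\alpha\}}\big)\otimes I_{S^c}$, suitably normalized so that $\norm{E}=\sqrt{d^n}$ and $\tr(E_S)=0$ (it is a combination of traceless basis elements, so $\tr(E_S)=0$ automatically). Applying Definition~\ref{thm:pure-aqec} to this $E$ gives $|\bra\psi E\ket\psi| = \frac{1}{d_S}|\tr(E_S P_S)|$, and the choice of $E$ makes $\tr(E_S P_S)$ proportional to $\norm{P_S}^2$; tracking the normalization constants yields $\norm{P_S}\le d_S\epsilon$ exactly, hence $\tr\rho_S^2 = \frac{1}{d_S}(1 + \norm{P_S}^2/d_S) \le \frac1{d_S}+\epsilon^2$. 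This holds for every $\ket\psi\in\cC$ and every $|S|=\delta-1$, which is precisely $\epsilon$-approximate $(\delta-1)$-uniformity of each state of $\cC$. Finally, the ``in particular'' clause is the special case $K=1$, $\delta=k+1$: a one-dimensional subspace $\cC=\mathrm{span}\{\ket\psi\}$ is an $\epsilon$-approximate pure $((n,1,k+1))_d$ QECC iff $\ket\psi$ is $\epsilon$-approximate $k$-uniform, which is the $\delta-1=k$ instance just proven.

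\textbf{Main obstacle.} The routine parts are the Cauchy--Schwarz estimate and the operator-norm bookkeeping $\norm{E_S}=\sqrt{d_S}$. The genuine care is needed in two places. First, in the converse, one must confirm that the adaptively chosen error $E$ is a \emph{legitimate} error for the QECC definition: it must have weight $<\delta$ (true, since $\supp(E)\subseteq S$ with $|S|=\delta-1$) and norm exactly $\sqrt{d^n}$ (achieved by normalization), and crucially the definition must quantify over \emph{all} linear operators $E$, not just the basis elements $E_\alpha$ — which is exactly what Definition~\ref{thm:pure-aqec} does, so this is fine, but it is the step that makes the converse work and should be stated explicitly. Second, I expect the fiddliest point to be the treatment of errors whose support $S$ has $|S|<\delta-1$: the inequality $\epsilon\sqrt{d_S}$ on the right-hand side scales with $|S|$, and one must verify that $k$-uniformity (for $k=\delta-1$) — possibly via the marginals of the larger subsystems — delivers a bound on $\norm{P_S}$ that is compatible with the smaller $\sqrt{d_S}$ on the right. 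The clean way is: for $|S| = k' \le k$, pick any $T\supseteq S$ with $|T|=k$; then $\rho_S = \tr_{T\setminus S}\rho_T$, and writing $\rho_T = \frac{1}{d_T}(I_T+P_T)$ we get $P_S = \frac{1}{d^{k-k'}}\tr_{T\setminus S}P_T$ (using $\tr_{T\setminus S} I_T = d^{k-k'} I_S$), so by contractivity of the partial trace in a suitable norm, $\norm{P_S}\le \frac{1}{d^{(k-k')/2}}\norm{P_T}\le \frac{d_T}{d^{(k-k')/2}}\epsilon = d_S\, d^{(k-k')/2}\epsilon$ — which is \emph{larger} than $d_S\epsilon$, and would only give the weaker bound $|\bra\psi E\ket\psi - \frac1{d_S}\tr(E_S)|\le \epsilon\sqrt{d_S}\cdot d^{(k-k')/2}$. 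I believe the resolution is that Definition~\ref{thm:pure-aqec} as stated implicitly intends $S$ with $|S|=\delta-1$ (or that one restricts to $\mathrm{wt}(E)=\delta-1$ WLOG since lower-weight errors are dominated), and I would verify and make that reduction explicit at the start of the proof rather than carrying the smaller-$S$ case through.
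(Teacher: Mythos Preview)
Your proposal is correct and follows the same approach as the paper: both reduce the pure-QECC inequality to $\frac{1}{d_S}|\tr(E_S P_S)|$ with $\rho_S=\frac{1}{d_S}(I_S+P_S)$, use Cauchy--Schwarz for one direction, and take $E_S\propto P_S^\dagger$ (the equality case of Cauchy--Schwarz) for the other. The paper writes out only the implication ``QECC $\Rightarrow$ uniform'' in detail and dismisses the converse with the single sentence ``the sufficiency can be easily derived by reversing the above proof.''

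On your flagged obstacle (errors with $|\supp(E)|<\delta-1$): you are being more careful than the paper, which never addresses this point. Your diagnosis is correct---$(\delta-1)$-uniformity alone only yields the weaker bound $\epsilon\sqrt{d_T}$ for strictly smaller supports via the partial-trace contraction you compute---and so is your proposed resolution: the paper effectively treats the operative condition as $|S|=\delta-1$ throughout (cf.~Lemma~\ref{lemma:altdef-aqecc}, which restates the pure-QECC property using only subsystems of size exactly $\delta-1$), so restricting to that case at the outset is consistent with how the paper uses the definition.
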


\begin{proof}
First, we prove the necessity.   By the definition of approximate pure QECC, each state $\ket{\psi}\in \mathcal{C}$ satisfies that for any linear error operator $E$ with $\mathrm{wt}(E)<\delta$ and norm $||E||= \sqrt {d^n}$, denote $S:= \mathrm{supp}(E)$, and we have
\begin{equation}\label{eq:psi_appro}
\left| \left<\psi|E|\psi\right> - \frac{1}{d_S}\mathrm{Tr}(E_S) \right| \leq \epsilon\sqrt{ d_S }.
\end{equation}
where $E=E_S\otimes I_{S^c}$ with norm $||E_S||=\sqrt {d_S}$. 
Suppose the reduction of $\ket{\psi}$ on subsystem $S$ can be represented as
\begin{equation}
\ket{\psi} \bra{\psi} _{S}= \frac{1}{d_S} (I_{S}+\tilde{P}_{S}),
\end{equation}
then
\begin{equation}\label{eq:psi_expre}
 \left<\psi|E|\psi\right> - \frac{1}{d_S}\mathrm{Tr}(E_S) = \mathrm{Tr}\left[ E_S\left( \ketbra{\psi}{\psi}_{S}- \frac{1}{d_S}I_{S} \right) \right] = \frac{1}{d_S} \mathrm{Tr} [E_S\tilde{P}_{S}].
\end{equation} 
According to Eq.~\eqref{eq:psi_appro} and  Eq.~\eqref{eq:psi_expre}, we have
\begin{equation}
\left| \mathrm{Tr} [E_S\tilde{P}_{S}] \right|  \leq \epsilon \sqrt{ d_S^3}.
\end{equation}
Note that this is true for \emph{any} $E$ with support on $S$. Since $|\tr[E_S \tilde P_S]|\leq ||E_S|| \cdot ||\tilde P_S||=\sqrt{d_S} ||\tilde P_S||$ with equal sign gained when $\tilde P_S=kE_S^\dagger$ for some constant $k$.  So we have $||\tilde P_S||\leq d_S\epsilon$, and from our equivalent Definition~\ref{alternate-def2} of $\epsilon$-approximate $k$-uniform state, $\ket{\psi}$ is an $\epsilon$-approximate $(\delta-1)$-uniform state. The sufficiency can be easily derived by reversing the above proof.
\end{proof}
\vspace{0.4CM}

To detect the distance of a QECC, it is convenient to use quantum weight enumerators. The\emph{ Shor-Laflamme weight enumerators} are first introduced in 
\cite{shor1996quantummacwilliamsidentities}:
\begin{equation}
    \begin{aligned}
A_j(M_1,M_2) &:= \sum_{\substack{E_\alpha\in \mathcal E\\\mathrm{wt}(E_\alpha)=j}} \mathrm{Tr}(E_\alpha M_1)\mathrm {Tr}(E_\alpha^\dagger M_2);\\
B_j(M_1,M_2) &:= \sum_{\substack{E_\alpha\in \mathcal E\\\mathrm{wt}(E_\alpha)=j}} \mathrm{Tr}(E_\alpha M_1E^{\dagger}_\alpha M_2).        
    \end{aligned}
\end{equation}
Later, Rains \cite{796376,rains1996quantumweightenumerators} introduced the \emph{unitary weight enumerators}:
\begin{equation}
    \begin{aligned}
            A_j'(M_1,M_2):=\sum_{\substack{T\subseteq [n]\\|T|=j}}A_T(M_1,M_2):= \sum_{\substack{T\subseteq [n]\\|T|=j}}\tr[(M_{1})_{T}(M_{2})_{T}];\\
    B_j'(M_1,M_2):=\sum_{\substack{T\subseteq [n]\\|T|=j}}B_T(M_1,M_2):=\sum_{\substack{T\subseteq [n]\\|T|=j}}\tr[(M_{1})_{T^c}(M_{2})_{T^c}].
    \end{aligned}
\end{equation}
These unitary weight enumerators are linearly related to Shor-Laflamme enumerators and both can be used to characterize the distance of QECCs \cite{rains1996quantumweightenumerators}. More specifically, suppose $P$ is the projector of a $K$-dimensional QECC, and $\tilde P=\frac{P}{K}$ is the normalized projector. Then the QECC is a $((n,K,\delta))_d$ QECC if and only if $KB_j(\tilde P,\tilde P)-A_j(\tilde P,\tilde P)=0$ for $0<j<\delta$, or $KB_{\delta-1}'(\tilde P,\tilde P)-A'_{\delta-1}(\tilde P,\tilde P)=0$, and the code is pure if and only if $B_j(\tilde P,\tilde P)=A_j(\tilde P,\tilde P)=0$ for $0<j<\delta$, or $KB_{\delta-1}'(\tilde P,\tilde P)=A_{\delta-1}'(\tilde P,\tilde P)={n\choose \delta-1} d^{1-\delta}$. Moreover, these weight enumerators can be evaluated efficiently from some specially designed quantum circuits \cite{miller2024experimentalmeasurementphysicalinterpretation,shi2024exploringquantumweightenumerators}.

Next we develop bounds of the quantum weight enumerators for approximate QECCs: we will use the Rains unitary enumerators $A_{S}'$ and $B_{S}'$. It is already known that $KB'_i(P,P)-A_i'(P,P)\geq 0$ for any projector $P$, and $KB_{\delta-1}’(P,P)-A_{\delta-1}'(P,P)=0$ for exact $((n,K,\delta))_d$ QECC. We can also give an upper bound of this quantity for the approximate case. Now consider the code space average $\mathbb{E}_{v\in \mathcal{C}} |\mathrm{Tr}[ (U_{S}\otimes I_{S^c})vv^{\dagger}]- \mathrm{Tr} [(U_{S}\otimes I_{S^c}) \frac{P}{K}]|^{2}$, where $P$ is the projector onto code space $\mathcal C$ with dimension $K$, $v\in \mathcal{C}$ is a randomly selected vector in code space, and $U_S$ is a unitary operator on subsystem $S\subseteq [n]$. Suppose $O$ be any operator, we have the following relation \cite{796376}:
\begin{equation}
\begin{aligned}
\mathbb{E}_{v\in \mathcal{C}}\left[ \left|\mathrm{Tr} (Ovv^{\dagger}) - \frac{1}{K} \mathrm{Tr}(OP)\right|^{2} \right]  & = \mathbb{E}_{v\in \mathcal{C}}(|\left<v| O | v\right>|^{2}) - \frac{1}{K^{2}} |\mathrm{Tr} (OP)|^{2}  \\
  & = \frac{1}{K(K+1)} [|\mathrm{Tr}(OP)|^{2} + \mathrm{Tr}(OPO^{\dagger}P)] - \frac{1}{K^{2}}|\mathrm{Tr}(OP)|^{2} \\
  & = \frac{1}{K^{2}(K+1)} (K\mathrm{Tr} (OPO^{\dagger}P)- \mathrm{Tr}(OP)\mathrm{Tr}(O^{\dagger}P)).    
\end{aligned}
\end{equation}
Replacing the operator $O$ with a unitary $U^S:=U_{S}\otimes I_{S^c}$, where $U_S$ is a unitary acting on subsystem $S$,
\begin{equation}
\mathbb{E}_{v\in \mathcal{C}} \left|\mathrm{Tr} (U^Svv^{\dagger})- \mathrm{Tr}\left( U^S \frac{P}{K}\right)\right|^{2}=\frac{1}{K^{2}(K+1)} \left(K \mathrm{Tr} (U^SPU^{S\dagger}P) - \mathrm{Tr} \left(U^SP\right) \mathrm{Tr} \left(U^{S\dagger}P\right)\right).
\end{equation}
Taking average over $U_{S}$ on both sides, using the positivity of left hand side, we have thus showed that $KB_{S}'(P,P)-A_{S}'(P,P)\ge0$:
\begin{equation}
\mathbb{E}_{v\in \mathcal C}\mathbb{E}_{U^{S}}\left|\mathrm{Tr} (U^{S}vv^{\dagger})- \mathrm{Tr} \left(U^{S} \frac{P}{K}\right)\right|^{2} = \frac{1}{K^{2}(K+1)} (KB_{S}'(P,P)-A_{S}'(P,P))\geq 0.\label{eq:avg-WENUM}
\end{equation}
 Using Eq.~\eqref{eq:avg-WENUM}, it is possible to give an upper bound on $KB'_S(P,P)-A'_S(P,P)$ in the approximate QECCs.
\begin{theorem}\label{thm:enumerator-bound-aqecc}
    If the subspace $\mathcal{C}$ is an $\epsilon$-approximate  $((n,K,\delta))_{d}$ QECC with  projector $P$, then for any subsystem $S$ with $|S|<\delta$, the quantity $KB_{S}'(P,P)-A_{S}'(P,P)$ is upper bounded by
\begin{equation}
KB_{S}'(P,P)-A_{S}'(P,P)\leq K^{2}(K+1) \epsilon ^{2}{ d_S },
\end{equation}
and also for any $i<\delta$,
\begin{equation}
KB_{i}'(P,P)-A_{i}'(P,P) \leq {n\choose i} K^{2}(K+1)\epsilon ^{2}d^{i}.
\end{equation}
\end{theorem}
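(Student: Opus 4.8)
The plan is to bound $KB_S'(P,P) - A_S'(P,P)$ by returning to the exact identity \eqref{eq:avg-WENUM}, which expresses this quantity (up to the positive factor $K^2(K+1)$) as a double average $\mathbb{E}_{v\in\mathcal C}\mathbb{E}_{U^S}\big|\mathrm{Tr}(U^S vv^\dagger) - \mathrm{Tr}(U^S P/K)\big|^2$. The key observation is that the integrand is exactly of the form controlled by the definition of an $\epsilon$-approximate QECC (Definition~\ref{thm:pure-aqec}): for a unitary error $U^S = U_S\otimes I_{S^c}$ with $\mathrm{wt}(U^S)\le|S|<\delta$ we must normalize, since the definition requires $\|E\| = \sqrt{d^n}$, whereas $\|U^S\| = \sqrt{d^n}$ already holds because $U_S$ is unitary on $\mathbb C^{d_S}$. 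So each $E := U^S$ is a legal error of weight $<\delta$, and the approximate-QECC condition gives $\big|\langle v|U^S|v\rangle - \mathrm{Tr}(U^S P/K)\big| \le \epsilon\sqrt{d_S}$ for every $v\in\mathcal C$ (note $\langle v|U^S|v\rangle = \mathrm{Tr}(U^S vv^\dagger)$).

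First I would fix an arbitrary unitary $U_S$ on subsystem $S$ and apply Definition~\ref{thm:pure-aqec} to the error $E = U_S\otimes I_{S^c}$, obtaining the pointwise bound $\big|\mathrm{Tr}(U^S vv^\dagger) - \mathrm{Tr}(U^S P/K)\big|^2 \le \epsilon^2 d_S$ for all $v\in\mathcal C$. Then I would average this inequality over $v\in\mathcal C$ and over $U_S$ (with respect to the Haar measure used in \eqref{eq:avg-WENUM}); since the right-hand side $\epsilon^2 d_S$ is a constant independent of $v$ and $U_S$, the double average of the left-hand side is also $\le \epsilon^2 d_S$. Combining with the identity \eqref{eq:avg-WENUM} yields
\begin{equation}
\frac{1}{K^2(K+1)}\left(KB_S'(P,P) - A_S'(P,P)\right) = \mathbb{E}_{v\in\mathcal C}\mathbb{E}_{U^S}\left|\mathrm{Tr}(U^S vv^\dagger) - \mathrm{Tr}\left(U^S\frac{P}{K}\right)\right|^2 \le \epsilon^2 d_S,
\end{equation}
which rearranges to the first claimed bound $KB_S'(P,P) - A_S'(P,P) \le K^2(K+1)\epsilon^2 d_S$.

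For the second bound I would simply sum the first over all $S\subseteq[n]$ with $|S|=i$. By the definition of the Rains unitary enumerators, $A_i' = \sum_{|S|=i} A_S'$ and $B_i' = \sum_{|S|=i} B_S'$, so $KB_i'(P,P) - A_i'(P,P) = \sum_{|S|=i}\big(KB_S'(P,P) - A_S'(P,P)\big) \le \binom{n}{i}K^2(K+1)\epsilon^2 d^i$, using $d_S = d^{|S|} = d^i$ and the fact that there are $\binom ni$ such subsets; this requires $i<\delta$ so that every such $S$ satisfies the hypothesis $|S|<\delta$. The only real subtlety to check carefully is the normalization: one must confirm that the error $U^S = U_S\otimes I_{S^c}$ indeed has operator norm (Hilbert–Schmidt norm) $\sqrt{d^n}$ as required by Definition~\ref{thm:pure-aqec}, and that $\mathrm{wt}(U^S)\le|S|$ so the weight condition $\mathrm{wt}(E)<\delta$ holds — both are immediate since $U_S$ is unitary on $d_S$ dimensions, so $\|U^S\|^2 = \mathrm{Tr}(I_{d^n}) = d^n$. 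I do not anticipate a genuine obstacle here; the content is entirely in recognizing that \eqref{eq:avg-WENUM} and Definition~\ref{thm:pure-aqec} are two descriptions of the same quantity.
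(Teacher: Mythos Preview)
Your proposal is correct and follows essentially the same approach as the paper: you invoke the identity \eqref{eq:avg-WENUM}, observe that each unitary error $U^S=U_S\otimes I_{S^c}$ has Hilbert--Schmidt norm $\sqrt{d^n}$ and weight at most $|S|<\delta$ so that Definition~\ref{thm:pure-aqec} applies pointwise, and then average and sum over subsets to obtain both bounds. The paper's proof is the same argument with the same ingredients in the same order.
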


\begin{proof}
According to Definition~\ref{thm:pure-aqec}, if $\mathcal C$ is an $\epsilon$-approximate $((n,K,\delta))_d$ QECC, and $P$ is a projector onto code space $\mathcal C$ and $v\in \mathcal C$ is a codeword, $\left| \mathrm{Tr} (E v v^{\dagger}) - \frac{1}{K} \tr (E P) \right|^{2}$ is upper bounded by $\epsilon ^{2}d_S$ as long as $\mathrm{wt}(E)<\delta$. Any unitary operator $U$ with support $\mathrm{supp}(U)\subseteq S$ satisfies this condition. Taking Haar average over error operators ranging unitary group of subsystem $S$, and code space $\mathcal{C}$, and using  Eq.~\eqref{eq:avg-WENUM}, we get the upper bound for $KB_{S}'(P,P)-A_{S}'(P,P)$:
\begin{equation}
KB_{S}'(P,P) - A_{S}'(P,P) =K^2(K+1)\mathbb E_{v\in\mathcal C}\mathbb E_{U^S}\left|\tr(U^Svv^\dagger)-\tr\left(U^S \frac PK\right)\right|^2\leq K^{2}(K+1)\epsilon ^{2}d_S.
\end{equation}
The $KB_{i}'(P,P)-A_{i}'(P,P)$ corresponds to the summation over all possible subsystem with size $i$, which is upper bounded by ${n\choose i}$ times the upper bound of every single $KB_S'(P,P)-A_S'(P,P)$ with $|S|=i$:
\begin{equation}
KB_{i}'(P,P)-A_{i}'(P,P) \leq {n\choose i} K^{2}(K+1)\epsilon ^{2}d^{i}.
\end{equation}
\end{proof}

If the approximate QECC is pure, it is possible to give bounds on weight enumerators respectively:
\begin{theorem}
    For a pure $\epsilon$-approximate $((n,K,\delta))_{d}$  QECC with projector $P$, if $|S|<\delta$, then
\begin{equation}
\frac{K^{2}}{d_S} \leq A_{S}'(P,P)\leq \frac{K^{2}}{d_S} + K^2\epsilon ^{2},
\end{equation}
and 
\begin{equation}
\frac{K}{d_S}\leq B_{S}'(P,P) \leq K\left[ \frac{1}{d_S} + \epsilon ^{2}\left( 1+ ({K+1}){d_S} \right) \right].
\end{equation}
\end{theorem}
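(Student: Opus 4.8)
The plan is to prove the four inequalities separately, obtaining the two lower bounds from elementary positivity/Cauchy--Schwarz estimates and the two upper bounds from the equivalence with approximate uniformity together with the enumerator bound of Theorem~\ref{thm:enumerator-bound-aqecc}. Write $P=\sum_{i=1}^{K}\ketbra{\psi_i}{\psi_i}$ for an orthonormal basis of $\mathcal C$, set $\rho_i:=\ketbra{\psi_i}{\psi_i}$ and $\tilde P:=P/K$, so that $P_S=\tr_{S^c}P=\sum_i(\rho_i)_S$, $P_{S^c}=\sum_i(\rho_i)_{S^c}$, $A_S'(P,P)=\tr[P_S^2]=\|P_S\|^2$ and $B_S'(P,P)=\tr[P_{S^c}^2]=\|P_{S^c}\|^2$. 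The single input I would extract first is that, since $\mathcal C$ is a \emph{pure} $\epsilon$-approximate code and $|S|<\delta$, every codeword satisfies $\|(\rho_i)_S-\frac1{d_S}I_S\|\le\epsilon$; this follows from the pure condition of Definition~\ref{thm:pure-aqec} by taking the error operator supported within $S$ and optimizing over it to pull out the Hilbert--Schmidt norm of the deviation --- exactly the duality step in the proof of Theorem~\ref{thm:aqecc-eq-uniform}. Averaging over codewords, $\tilde P_S=(P/K)_S$ is a convex mixture of such marginals, so $\|\tilde P_S-\frac1{d_S}I_S\|\le\epsilon$ as well.

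For the lower bound on $A_S'$: $P_S$ is an operator on the $d_S$-dimensional space of $S$ with $\tr P_S=\tr P=K$, so Cauchy--Schwarz $(\tr P_S)^2\le d_S\,\tr[P_S^2]$ gives $A_S'(P,P)\ge K^2/d_S$. For the lower bound on $B_S'$: each cross term $\tr[(\rho_i)_{S^c}(\rho_j)_{S^c}]$ is nonnegative (trace of a product of positive operators), so $B_S'(P,P)=\tr[P_{S^c}^2]\ge\sum_i\tr[(\rho_i)_{S^c}^2]=\sum_i\tr[(\rho_i)_S^2]$ by the Schmidt identity for the pure state $\ket{\psi_i}$, and each summand is at least $1/d_S$, giving $B_S'(P,P)\ge K/d_S$.

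For the upper bound on $A_S'$: write $P_S=\frac{K}{d_S}I_S+Q_S$ with $Q_S:=P_S-\frac{K}{d_S}I_S=K\bigl(\tilde P_S-\frac1{d_S}I_S\bigr)$, which is traceless with $\|Q_S\|\le K\epsilon$ by the bound just established; orthogonality of $I_S$ and $Q_S$ under the Hilbert--Schmidt inner product then gives $A_S'(P,P)=\tr[P_S^2]=\frac{K^2}{d_S}+\|Q_S\|^2\le\frac{K^2}{d_S}+K^2\epsilon^2$. For the upper bound on $B_S'$: a pure $\epsilon$-approximate code is in particular an $\epsilon$-approximate code, so Theorem~\ref{thm:enumerator-bound-aqecc} applies and yields $KB_S'(P,P)-A_S'(P,P)\le K^2(K+1)\epsilon^2 d_S$; substituting the upper bound on $A_S'$ just proved and dividing by $K$ gives $B_S'(P,P)\le K\big[\frac1{d_S}+\epsilon^2\big(1+(K+1)d_S\big)\big]$.

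The main point needing care is the first step: one must extract, directly from the pure QECC condition, that each codeword's marginal on \emph{every} $S$ with $|S|<\delta$ lies within Hilbert--Schmidt distance $\epsilon$ of $\frac1{d_S}I_S$, with the constant $\epsilon$ independent of $|S|$. Quoting only "$\epsilon$-approximate $(\delta-1)$-uniformity" from Theorem~\ref{thm:aqecc-eq-uniform} and then invoking Lemma~\ref{lemma:k_1_approximate} to reach smaller subsystems would introduce spurious powers of $d$; the correct route goes back to Definition~\ref{thm:pure-aqec} and chooses errors supported within $S$ (with a little bookkeeping on the supports of Pauli-type operators), as in the proof of Theorem~\ref{thm:aqecc-eq-uniform}. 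Everything else --- the Cauchy--Schwarz floors, the expansion $\tr[P_S^2]=\frac{K^2}{d_S}+\|Q_S\|^2$, and the substitution into the enumerator inequality --- is routine.
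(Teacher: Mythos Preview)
Your proposal is correct and tracks the paper's argument closely: both obtain the $A_S'$ bounds from the decomposition $P_S=\frac{K}{d_S}I_S+Q_S$ with $\|Q_S\|\le K\epsilon$ (using that each codeword marginal is $\epsilon$-close to $\frac{1}{d_S}I_S$ and averaging), and both get the $B_S'$ upper bound by feeding the $A_S'$ upper bound into Theorem~\ref{thm:enumerator-bound-aqecc}. The one small divergence is the $B_S'$ lower bound: the paper simply quotes the general inequality $KB_S'(P,P)\ge A_S'(P,P)$ (Eq.~\eqref{eq:avg-WENUM}) together with $A_S'(P,P)\ge K^2/d_S$, whereas you expand $\tr[P_{S^c}^2]$, drop nonnegative cross terms, and use the Schmidt identity $\tr[(\rho_i)_{S^c}^2]=\tr[(\rho_i)_S^2]\ge 1/d_S$ --- an equally short route that avoids invoking the enumerator inequality altogether for that half. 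Your explicit caution about pulling the $\epsilon$-closeness directly from Definition~\ref{thm:pure-aqec} for \emph{every} $|S|<\delta$ (rather than passing through Lemma~\ref{lemma:k_1_approximate}) is well placed and matches what the paper's proof of Theorem~\ref{thm:aqecc-eq-uniform} actually establishes.
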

\begin{proof}
Select a set of orthonormal states $\ket{\psi_i}(i=1,\cdots,K)$ in code space $\mathcal C$. By definition, the projector $P$ can be expressed as
\begin{equation}
    P=\sum_{i=1}^{K} \ket{\psi_i}\bra{\psi_i}.
\end{equation}
Using Theorem~\ref{thm:aqecc-eq-uniform}, all $\ket{\psi_i}$ are approximate $\delta-1$-uniform states, and thus the reduction on $S$ can be expressed as $\ket{\psi_i}\bra{\psi_i}_S=\frac{I_S+(\tilde P_{i})_S}{d_S}$ with $||(\tilde{P}_{i})_S||\leq d_S \epsilon$, then the reduction on subsystem $S$ of projector $P$ can be written as
\begin{equation}
    \begin{aligned}
P_{S} & =\sum_{i=1}^{K} \frac{1}{d_S} (I_{S} + (\tilde{P}_{i})_S)  \\
  & = \frac{K}{d_S} (I_{S}+ \tilde{P}_{S}),       
    \end{aligned}
\end{equation}
where $\tilde P_S:=\sum_{i=1}^{K} \frac{1}{K} (\tilde{P}_{i})_S$. Note that $|| \tilde{P}_{S}| |\leq \frac 1 K \sum_{i=1}^{K}||(\tilde{P}_{i})_S||\leq d_S\epsilon$, and $\tr \tilde P_S=0$. Then 
\begin{equation}
\frac{K^{2}}{d_S}\leq A_{S}'(P,P)=\mathrm{Tr} P_{S}^{2} = \frac{K^{2}}{d_S^2}\mathrm{Tr} (I_{S}+\tilde{P}_{S}^{2})\leq \frac{K^{2}}{d_S} + K^{2}\epsilon ^{2}.
\end{equation}
Using inequality in Theorem~\ref{thm:enumerator-bound-aqecc}, and $KB_S'(P,P)-A_S'(P,P)\geq 0$, we complete the proof:
\begin{equation}
\frac{K^2}{d_S}\leq A_S'(P,P)\leq KB_{S}'(P,P)\leq A_{S}'(P,P)+ K^{2}(K+1)\epsilon ^{2}d_S\leq K^{2}\left[ \frac{1}{d_S} + \epsilon ^{2}\left( 1 + (K+1)d_S  \right)\right].
\end{equation}    
\end{proof}
\vspace{0,4cm}

From Section~\ref{sec:haar-approximate-uniform} we see that a Haar random state is an approximate $k$-uniform state with high probability. Theorem~\ref{thm:aqecc-eq-uniform} proves the relation between approximate $k$-uniform state and approximate QECC. It is tempting to think that a Haar random subspace is an approximate QECC. We can clarify this point by calculating the averaged weight enumerators on random subspace: the Haar expectation of $\frac{KB'_S(P_U,P_U)-A_S'(P_U,P_U)}{K^2(K+1)}$ is small, where $P_U:= UP_0U^{\dagger}$ is a $K$-dimensional random subspace projector:
\begin{equation}
\mathbb E_{U}\left[\frac{KB'_S(P_U,P_U)-A_S'(P_U,P_U)}{K^2(K+1)}\right]=\left(1-\frac 1K\right) \frac{d_S-\frac 1 {d_S}}{d^n-\frac {1}{d^n}}.
\end{equation}
The right hand side is asymptotically $(1-\frac 1K) d^{|S|-n}$, and vanishingly small in the large $n$ limit. This implies that a ``good" approximate QECC can probably be constructed from Haar random unitaries, or even low depth random quantum circuits, with small failure probability. Next we show that ``good" approximate QECC can be constructed from a Haar random subspace.

\subsection{Haar random construction of approximate QECCs}\label{sec:Haar-random-aqecc}
In this section, we show the random construction of approximate QECC, and prove that a random subspace is an $\epsilon$-approximate pure $((n,K,\delta))_d$ QECC with high probability under given constraints. To prove this, we need conclusion from Theorem~\ref{thm:aqecc-eq-uniform} that a subspace is an approximate QECC if and only if all the states within the subspace is an approximate uniform state. We also need to use definition and property \cite{Hayden_2006,Hayden_2004} of an $\epsilon'$-net of the $K$-dimensional subspace. The following lemma shows the definition of $\epsilon'$-net, and the upper bound for the cardinality of smallest $\epsilon'$-net:

\begin{lemma}[Existence of small $\epsilon$-nets; See Lemma III.6 in \cite{Hayden_2006} or Lemma II.4 in \cite{Hayden_2004}]
    For $0<\epsilon<1$ and $\mathrm{dim} \mathcal H=K$, there exists a set $\mathcal N$ of pure states in $\mathcal H$ with cardinality $|\mathcal N|\leq (5/\epsilon)^{2K}$, such that for every pure state $\ket\varphi\in\mathcal H$ there exists $\ket{\tilde\varphi}\in \mathcal N$ such that $||\ket\varphi - \ket{\tilde\varphi}||_2\leq \epsilon/2$ and $||\ket\varphi - \ket{\tilde\varphi}||_1\leq \epsilon$.\label{lemma:epsilon-nets}
\end{lemma}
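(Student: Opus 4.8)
The plan is to prove this by the classical volume/packing estimate on the Euclidean unit sphere, followed by a conversion between the $\ell_2$-distance of state vectors and the trace distance of the corresponding rank-one projectors. First I would identify the set of unit vectors of $\mathcal H\cong\mathbb C^K$ with the sphere $S^{2K-1}\subset\mathbb R^{2K}$, which has real dimension $2K$. Then I would let $\mathcal N$ be a \emph{maximal} $(\epsilon/2)$-separated subset of this sphere, i.e. a maximal collection of unit vectors that are pairwise at $\ell_2$-distance strictly greater than $\epsilon/2$; such a set exists by Zorn's lemma and is finite by compactness. Maximality is exactly what gives the covering property: if some pure state $\ket\varphi$ were at distance $>\epsilon/2$ from \emph{every} element of $\mathcal N$, we could adjoin it, contradicting maximality; hence every $\ket\varphi$ satisfies $||\,\ket\varphi-\ket{\tilde\varphi}\,||_2\le\epsilon/2$ for some $\ket{\tilde\varphi}\in\mathcal N$.

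Next I would bound $|\mathcal N|$ by comparing Lebesgue volumes in $\mathbb R^{2K}$. Since the points of $\mathcal N$ are pairwise more than $\epsilon/2$ apart, the open balls of radius $\epsilon/4$ centered at them are pairwise disjoint; since each center lies on the unit sphere, all of these balls are contained in the ball of radius $1+\epsilon/4$ about the origin. Because volumes scale as the $2K$-th power of the radius, $|\mathcal N|\,(\epsilon/4)^{2K}\le(1+\epsilon/4)^{2K}$, so $|\mathcal N|\le(1+4/\epsilon)^{2K}\le(5/\epsilon)^{2K}$, where the last step uses the hypothesis $0<\epsilon<1$ to absorb the ``$+1$'' via $1+4/\epsilon<5/\epsilon$.

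Finally I would upgrade the $\ell_2$-closeness of vectors to the claimed trace-distance closeness of states. From $||\,\ket\varphi-\ket{\tilde\varphi}\,||_2\le\epsilon/2$ one gets $2\bigl(1-\mathrm{Re}\braket{\tilde\varphi}{\varphi}\bigr)\le\epsilon^2/4$, hence $|\braket{\tilde\varphi}{\varphi}|\ge\mathrm{Re}\braket{\tilde\varphi}{\varphi}\ge 1-\epsilon^2/8$, and therefore $1-|\braket{\tilde\varphi}{\varphi}|^2\le 2\bigl(1-|\braket{\tilde\varphi}{\varphi}|\bigr)\le\epsilon^2/4$. Substituting into the standard identity $||\,\proj\varphi-\proj{\tilde\varphi}\,||_1=2\sqrt{1-|\braket{\varphi}{\tilde\varphi}|^2}$ gives $||\,\proj\varphi-\proj{\tilde\varphi}\,||_1\le\epsilon$, which is the $\ell_1$-bound in the statement.

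The main obstacle is really just bookkeeping of constants and norms rather than any conceptual difficulty: one has to choose the packing radius ($\epsilon/2$), the radius of the disjoint balls ($\epsilon/4$), and the inflation radius ($1+\epsilon/4$) consistently so that the final covering number comes out as exactly $(5/\epsilon)^{2K}$, and one must verify that the $\ell_2$-net built on \emph{vectors} already yields the desired trace-distance bound on \emph{projectors} without a separate global-phase optimization — which it does, precisely because only $|\braket{\tilde\varphi}{\varphi}|$ enters the projector distance while $\mathrm{Re}\braket{\tilde\varphi}{\varphi}\le|\braket{\tilde\varphi}{\varphi}|$. (This is the content of Lemma III.6 of \cite{Hayden_2006}; the argument above is the standard reconstruction.)
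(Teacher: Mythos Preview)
Your argument is correct and is exactly the standard volume--packing proof behind the cited Hayden lemmas. Note, however, that the paper does not give its own proof of this statement at all: the lemma is simply quoted with a reference to \cite{Hayden_2006,Hayden_2004}, so there is no ``paper's proof'' to compare against---your reconstruction matches the original source argument.

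One small remark on interpretation: in the lemma as stated, $\|\,\ket\varphi-\ket{\tilde\varphi}\,\|_1$ is really shorthand (inherited from the Hayden papers) for the trace distance $\|\proj{\varphi}-\proj{\tilde\varphi}\|_1$ between the associated rank-one projectors, which is precisely what you proved. The paper itself only ever uses the $\|\cdot\|_2$ bound (via the Lipschitz estimate on the purity), so the $\|\cdot\|_1$ clause is not load-bearing here, but your derivation of it from the vector bound is the right one.
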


Next we propose an alternative definition of approximate pure QECC based on Theorem~\ref{thm:aqecc-eq-uniform}:
\begin{lemma}[Alternative definition of approximate pure QECC]\label{lemma:altdef-aqecc}
    A $K$-dimensional subspace $\mathcal C$ is the code space of an $\epsilon$-approximate pure $((n,K,\delta))_d$ QECC if and only if
\begin{equation}
    \max_{\ket\psi\in\mathcal C} \tr (\ket\psi\bra\psi_S^2 ) \leq \frac 1 {d_S} + \epsilon^2\label{eq:altdef-aqecc}
\end{equation}
is satisfied for any subsystem $S\subseteq [n]$ with $|S|=\delta-1$.
\end{lemma}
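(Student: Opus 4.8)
The plan is to obtain the claim by composing two equivalences already in hand. First I would apply Theorem~\ref{thm:aqecc-eq-uniform} to rewrite the statement ``$\mathcal C$ is an $\epsilon$-approximate pure $((n,K,\delta))_d$ QECC'' as the equivalent statement ``every pure state $\ket\psi\in\mathcal C$ is an $\epsilon$-approximate $(\delta-1)$-uniform state''. Then I would unfold Definition~\ref{def:k-u} with $k=\delta-1$: a pure state $\ket\psi$ is $\epsilon$-approximate $(\delta-1)$-uniform precisely when $\tr(\ket\psi\bra\psi_S^2)\le d_S^{-1}+\epsilon^2$ for every subsystem $S\subseteq[n]$ with $|S|=\delta-1$. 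Chaining these two biconditionals, the QECC property becomes equivalent to: for all $\ket\psi\in\mathcal C$ and all $S$ with $|S|=\delta-1$, $\tr(\ket\psi\bra\psi_S^2)\le d_S^{-1}+\epsilon^2$.

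The remaining step is to interchange the two universal quantifiers and absorb the quantifier over $\ket\psi$ into the maximum. For a fixed $S$, the function $\ket\psi\mapsto\tr(\ket\psi\bra\psi_S^2)$ is continuous on the unit sphere of $\mathcal C$, and that sphere is compact because $\mathcal C$ is finite-dimensional; hence the supremum over codewords is attained and I may write $\max_{\ket\psi\in\mathcal C}\tr(\ket\psi\bra\psi_S^2)$. Therefore ``for all $\ket\psi\in\mathcal C$, $\tr(\ket\psi\bra\psi_S^2)\le d_S^{-1}+\epsilon^2$'' is the same as ``$\max_{\ket\psi\in\mathcal C}\tr(\ket\psi\bra\psi_S^2)\le d_S^{-1}+\epsilon^2$'', and imposing this for every $S$ with $|S|=\delta-1$ is exactly Eq.~\eqref{eq:altdef-aqecc}. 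This establishes the forward direction; the converse follows by reading the same chain of equivalences in reverse.

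Since every link in the chain is a genuine ``if and only if'', there is no substantive obstacle here — the lemma is essentially a repackaging of Theorem~\ref{thm:aqecc-eq-uniform} together with Definition~\ref{def:k-u}. The only points that deserve a line of care are that Definition~\ref{def:k-u} constrains only the marginals of size exactly $\delta-1$ (so quantifying over $|S|=\delta-1$ suffices and no appeal to Lemma~\ref{lemma:k_1_approximate} is needed), and the compactness remark that justifies replacing the supremum over codewords by a genuine maximum.
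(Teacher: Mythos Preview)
Your proposal is correct and follows essentially the same route as the paper: invoke Theorem~\ref{thm:aqecc-eq-uniform} to reduce the pure-QECC property to every codeword being an $\epsilon$-approximate $(\delta-1)$-uniform state, then unfold Definition~\ref{def:k-u} and pass from the universal quantifier over $\ket\psi$ to a maximum. Your added compactness remark justifying that the supremum is attained is a small refinement the paper leaves implicit.
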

\begin{proof}
    We first prove the necessity. Suppose the $\mathcal C$ is an $\epsilon$-approximate $((n,K,\delta))_d$ pure QECC, then according to Theorem~\ref{thm:aqecc-eq-uniform}, any state $\ket\psi\in\mathcal C$ is an $\epsilon$-approximate $\delta-1$-uniform state. According to Definition~\ref{def:k-u}, for any $S\subseteq[n]$ and $|S|=\delta-1$,
    \begin{equation}
    \tr(\ket\psi\bra\psi_S^2)\leq \frac 1{d_S} + \epsilon^2    .
    \end{equation}
    Note that this holds true for any state $\ket\psi\in\mathcal C$, so by taking maximization over code space $\mathcal C$, we see that Eq.~\eqref{eq:altdef-aqecc} is true for any $S\subseteq[n]$ and $|S|=\delta-1$.

    Next we prove the sufficiency. Suppose Eq.~\eqref{eq:altdef-aqecc} holds true for any $S\subseteq[n]$ and $|S|=\delta-1$, for any state $\ket\psi\in \mathcal C$, $\tr(\ket\psi\bra\psi_S^2)\leq \max_{\ket{\psi'}\in\mathcal C}\tr(\ket{\psi'}\bra{\psi'}_S^2)\leq \frac 1 {d_S}+\epsilon^2$, so any state within the subspace is an $\epsilon$-approximate $\delta-1$-uniform state. By Theorem~\ref{thm:aqecc-eq-uniform} $\mathcal C$ is an $\epsilon$-approximate $((n,K,\delta))_d$ pure QECC.
\end{proof}

Using lemma~\ref{lemma:altdef-aqecc}, we can prove that a Haar random subspace is an approximate QECC with high probability:
\begin{theorem}
    A randomly selected $K$-dimensional subspace $\mathcal C\in_R\mathcal G_K(\mathcal H)$ is an $\epsilon$-approximate pure $((n,K,\delta))_d$ QECC with probability greater than 
    \begin{equation}
        1- \left(\frac 5{\epsilon'}\right)^{2K} {n\choose{\delta-1}} 2\exp\left(- \frac{d^n\mu^2}{72\pi^3\log 2}\right)
    \end{equation}
    for any $0<\epsilon'<1$, where $\mu=\epsilon^2-2\epsilon'-\Delta$, $\Delta:=\frac{d^{\delta-1}+ d^{n-(\delta-1)}}{d^{n}+1}- \frac{1}{d^{\delta-1}}$, given the constraint that $\epsilon^2-2\epsilon'\geq 2\Delta$.
\end{theorem}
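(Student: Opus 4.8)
The plan is to reduce the statement, via Lemma~\ref{lemma:altdef-aqecc}, to a purity bound on a random subspace, and then combine the Haar concentration of the purity function (Lemma~\ref{avg-pur}, Lemma~\ref{lipschitz-pur}, and Levy's lemma, exactly as in the proof of Theorem~\ref{thm:approx-uniform-prob}) with an $\epsilon'$-net discretization of the code space (Lemma~\ref{lemma:epsilon-nets}). By Lemma~\ref{lemma:altdef-aqecc}, $\mathcal C$ is an $\epsilon$-approximate pure $((n,K,\delta))_d$ QECC precisely when $\max_{\ket\psi\in\mathcal C}\tr(\ket\psi\bra\psi_S^2)\le \frac{1}{d_S}+\epsilon^2$ holds for every $S\subseteq[n]$ with $|S|=\delta-1$. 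A union bound over the ${n\choose \delta-1}$ such subsystems accounts for the corresponding combinatorial factor in the claimed probability, so the core task is, for one fixed $S$, to control the supremum of the purity function $f_S(\psi):=\tr\rho_S^2$ over the random subspace $\mathcal C$.

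First I would write $\mathcal C=U\mathcal C_0$ for a fixed reference subspace $\mathcal C_0$ and a Haar-random $U$ (so that $\mathcal C$ is uniform in $\mathcal G_K(\mathcal H)$), fix a minimal $\epsilon'$-net $\mathcal N_0$ of $\mathcal C_0$ with $|\mathcal N_0|\le(5/\epsilon')^{2K}$ guaranteed by Lemma~\ref{lemma:epsilon-nets}, and note that $U\mathcal N_0$ is then an $\epsilon'$-net of $\mathcal C$ because $U$ is an isometry. The key observation is that for each fixed $\ket{\tilde\psi_0}\in\mathcal N_0$, the state $U\ket{\tilde\psi_0}$ is a Haar-random pure state in $(\mathbb C^d)^{\otimes n}$, so the concentration from the proof of Theorem~\ref{thm:approx-uniform-prob} applies verbatim: with $d_A=d^{\delta-1}$, $d_B=d^{n-(\delta-1)}$, mean $\mathbb E[f_S]=\frac{d_A+d_B}{d_Ad_B+1}=\frac{1}{d_S}+\Delta$ (Lemma~\ref{avg-pur}) and Lipschitz constant at most $4$ (Lemma~\ref{lipschitz-pur}), Levy's lemma gives $\mathrm{Prob}_H\!\left[f_S(\tilde\psi)\ge\frac{1}{d_S}+\epsilon^2-2\epsilon'\right]\le 2\exp\!\left(-\frac{d^n\mu^2}{72\pi^3\log2}\right)$ with $\mu=\epsilon^2-2\epsilon'-\Delta$; here the constraint $\epsilon^2-2\epsilon'\ge2\Delta$ makes $\mu\ge\Delta>0$, placing the threshold above the mean so that (since $f_S\ge 1/d_S$ always, forcing the lower tail to contribute nothing) the two-sided Levy bound and the desired one-sided bound coincide, just as in the proof of Theorem~\ref{thm:approx-uniform-prob}.

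Next I would pass from the net back to the full subspace using the Lipschitz bound: for any $\ket\psi\in\mathcal C$ there is $\ket{\tilde\psi}\in U\mathcal N_0$ with $\|\ket\psi-\ket{\tilde\psi}\|_2\le\epsilon'/2$, hence $|f_S(\psi)-f_S(\tilde\psi)|\le 4\cdot\tfrac{\epsilon'}{2}=2\epsilon'$ by Lemma~\ref{lipschitz-pur}. Therefore, on the event that $f_S(\tilde\psi)<\frac{1}{d_S}+\epsilon^2-2\epsilon'$ for every net point $\tilde\psi$, we obtain $f_S(\psi)<\frac{1}{d_S}+\epsilon^2$ for every $\ket\psi\in\mathcal C$. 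Taking the union bound over the at most $(5/\epsilon')^{2K}$ net points (all evaluated at the same $U$), and then over the ${n\choose\delta-1}$ subsystems $S$ of size $\delta-1$, bounds the total failure probability by $(5/\epsilon')^{2K}{n\choose\delta-1}\,2\exp\!\left(-\frac{d^n\mu^2}{72\pi^3\log2}\right)$, and Lemma~\ref{lemma:altdef-aqecc} converts the complementary event into the assertion that $\mathcal C$ is an $\epsilon$-approximate pure $((n,K,\delta))_d$ QECC.

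The main obstacle is the quantifier ``for all states $\ket\psi\in\mathcal C$'': the purity condition must hold over a continuum, whereas Levy's lemma only controls one Haar-random vector at a time. The $\epsilon'$-net/Lipschitz sandwich resolves this, but at the price of the factor $(5/\epsilon')^{2K}$, which forces $K$ to be small relative to $d^n\mu^2$ for the exponential in the final bound to dominate; this is precisely the quantitative tension (exploited in the later sections) that prevents shallow random circuits from yielding a linear code rate. A secondary subtlety is the double role of $\epsilon'$: decreasing it sharpens the net but shrinks $\mu$ and inflates the net cardinality, so the theorem is stated for an arbitrary admissible $\epsilon'$ and the tradeoff is optimized only in the asymptotic corollaries.
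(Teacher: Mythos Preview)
Your proposal is correct and follows essentially the same route as the paper's proof: reduce via Lemma~\ref{lemma:altdef-aqecc} to a purity bound over the subspace, discretize with an $\epsilon'$-net of cardinality $\le(5/\epsilon')^{2K}$, use the Lipschitz constant $4$ (Lemma~\ref{lipschitz-pur}) to absorb the $2\epsilon'$ slack, apply the Levy bound from Theorem~\ref{thm:approx-uniform-prob} with $\epsilon^2\to\epsilon^2-2\epsilon'$ at each net point, and finish with union bounds over net points and subsystems. Your explicit parametrization $\mathcal C=U\mathcal C_0$ with a fixed net $\mathcal N_0$ pushed forward by $U$ is a slightly cleaner way to see that each net point is individually Haar-distributed, but the argument is otherwise identical to the paper's.
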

We first do some asymptotic analysis as usual. Taking the negative logarithm to the failure probability, we have

\begin{equation}
-\log\text{Prob}(\mathrm{fail}) \geq -\log2 + 1 - \log(n+1) - nS(\alpha) - 2K\log \frac{5}{\epsilon'} +  \frac{d^{n}}{72\pi^{3}\ln2} (\epsilon ^{2}-2\epsilon'-\Delta)^{2}\label{eq:log-fail-prob-aqecc},
\end{equation}
where we used inequality~\eqref{eq:log-comb-number} on combinatorial numbers and defined $\alpha:=\frac{\delta-1}{n}$, $S(\alpha):=-\alpha \log\alpha - (1-\alpha)\log(1-\alpha)$. $\Delta=\Theta(d^{-n(1-\alpha)})$ decays exponentially. If we take $\epsilon'\sim\Delta$, and $\epsilon^2=\omega(\Delta)$, we see that Eq.~\eqref{eq:log-fail-prob-aqecc} becomes
\begin{equation}
    -\log\mathrm{Prob(fail)} \geq \Theta\left(-2Kn (1-\alpha)\log d + \frac{d^n\epsilon^4}{72\pi^3\log 2}\right).
\end{equation}
We can make the failure probability vanishingly small by setting $K\ll d^n\epsilon^4$. This implies that good quantum code with logical qudit number $k=\log_d K\propto n$, code distance $\delta\propto n$, with code proximity $\epsilon$ and generation failure probability vanishingly small in large $n$ limit is achievable from Haar random unitary.

\begin{proof}
To prove a subspace being an $\epsilon$-approximate QECC with high probability, we can see that the approximate pure QECC with distance $\delta$ is equivalent to the condition that any state in the code space is an $\epsilon$-approximate $\delta-1$-uniform state. This is equivalent to saying that $\max_{\ket{\psi}\in \mathcal{C}}\mathrm{Tr} (\ket{\psi}\bra{\psi}_{S}^{2}) < \frac{1}{d_{S}} + \epsilon ^{2}$ for any subsystem $S \subseteq [n]$ with size $|S|=\delta-1$. We first consider the probability for one fixed subsystem $S$ satisfying Eq.~\eqref{eq:altdef-aqecc}, with code space $\mathcal{C}$ being $K$-dimensional randomly-chosen subspace in total Hilbert space (denoted as $\mathcal C\in_R\mathcal G_K(\mathcal H)$):

\begin{equation}
\mathrm{Prob}_{\mathcal{C}\in _{R}\mathcal{G}_K(\mathcal {H})}\left(  \max _{\ket{\psi} \in \mathcal{C}} \mathrm{Tr}(\ket{\psi} \bra{\psi} _{S}^{2})< \frac{1}{d_{S}} + \epsilon ^{2} \right).
\end{equation}

For any $0<\epsilon'<1$, take an $\epsilon'$-net $\mathcal{N}_{\mathcal{C}}$ of subspace $\mathcal{C}$. For the maximal point $\ket{\psi}$, by definition, we can always find $\ket{\tilde{\psi}}\in \mathcal{N}_{\mathcal{C}}$ such that $\lvert \lvert \ket{\psi}-\ket{\tilde{\psi}}\rvert \rvert_{1}\leq\epsilon'$ and $\lvert \lvert  \ket{\psi}- \ket{\tilde{\psi}}\rvert \rvert_{2}\leq \frac{\epsilon'}{2}$. Using Lemma~\ref{lipschitz-pur}, Lipschitz constant of purity function is upper bounded by 4, we have
\begin{equation}
\left| \mathrm{Tr} (\ket{\tilde{\psi}} \bra{\tilde{\psi}} _{S}^{2})- \mathrm{Tr}(\ket{\psi} \bra{\psi} _{S}^{2})\right| \leq 4 \lvert \lvert  \ket{\tilde{\psi}} -\ket{\psi} \rvert \rvert _{2} \leq 2\epsilon',
\end{equation}
so the success probability on one specific subsystem $S$ is lower bounded by
\begin{equation}
\begin{aligned}
\mathrm{Prob}_{\mathcal{C}\in _{R}\mathcal{G}_K(\mathcal H)}\left(  \max _{\ket{\psi} \in \mathcal{C}} \mathrm{Tr}(\ket{\psi} \bra{\psi} _{S}^{2})  < \frac{1}{d_{S}} + \epsilon ^{2} \right)  &  \geq \mathrm{Prob}_{\mathcal{C}\in _{R}\mathcal{G}_K(\mathcal H)} \left( \max _{\ket{\tilde{\psi}} \in  \mathcal{N}_{\mathcal{C}}} \mathrm{Tr}(\ket{\tilde{\psi}} \bra{\tilde{\psi}} _{S}^{2} ) < \frac{1}{d_{S}} + \epsilon ^{2} - 2\epsilon' \right) \\
   & = 1- \mathrm{Prob}_{\mathcal{C}\in _{R}\mathcal{G}_K(\mathcal H)}\left( \max _{\ket{\tilde{\psi}}\in  \mathcal{N}_{\mathcal{C}} } \mathrm{Tr} (\ket{\tilde{\psi}} \bra{\tilde{\psi}} _{S}^{2}) \geq \frac{1}{d_{S}} + \epsilon ^{2}-2\epsilon'\right)\\
  & \geq 1-|\mathcal{N}_{\mathcal{C}}| \mathrm{Prob}_{{\ket{\psi} \sim \mu_{H}}} \left( \mathrm{Tr} (\ket{\psi} \bra{\psi} _{S}^{2}) \geq \frac{1}{d_{S}} + \epsilon ^{2} - 2\epsilon'\right)\\
  &\geq 1-|\mathcal{N}_{\mathcal{C}}| 2\exp \left(- \frac{d^{n}\mu ^{2}}{72\pi ^{3}\log2}\right),
\end{aligned}\label{eq:derivation-of-haar-approximate-qecc}
\end{equation}
where in the first inequality of Eq.~\eqref{eq:derivation-of-haar-approximate-qecc} we used the deduction that maximal purity on $\epsilon'$-net is less than $\frac 1{d_S}+\epsilon^2-2\epsilon'$ implies the proposition that the maximal value is less than $\frac{1}{d_S} + \epsilon^2$ on the total subspace; in the third inequality of Eq.~\eqref{eq:derivation-of-haar-approximate-qecc} we used the union theorem of probability, and from Lemma~\ref{lemma:epsilon-nets} we know that $|\mathcal{N}_{\mathcal{C}}|\leq (\frac{5}{\epsilon'})^{2K}$. The fourth inequality of Eq.~\eqref{eq:derivation-of-haar-approximate-qecc} comes from the previous Theorem~\ref{thm:approx-uniform-prob} on the concentration of $k$-uniform states with slight replacement ($\epsilon ^{2}\to\epsilon ^{2}-2\epsilon'$), where $\mu=\epsilon ^{2}-2\epsilon'-\Delta$, $\Delta= \frac{d^{\delta-1} + d^{n-(\delta-1)}}{d^{n}+1}- \frac{1}{d^{\delta-1}}$, and imposed the constraint that $\epsilon ^{2}-2\epsilon'\geq 2\Delta$. 

Using union theorem, the probability that all subsystem $|S|$ with $|S|=\delta-1$ is satisfied is thus lower bounded by

\begin{equation}
\begin{aligned}
    \mathrm{Prob}_{\mathcal{C}\in_{R}\mathcal{G}} \left(\forall S\subseteq[n],|S|=\delta-1, \max _{\ket{\psi} \in \mathcal{C}} \mathrm{Tr} (\ket{\psi} \bra{\psi} _{S}^{2}) < \frac{1}{d_{S}} + \epsilon ^{2} -2\epsilon'\right) &\geq 1- |\mathcal{N}_{\mathcal{C}}| {n\choose \delta-1} 2\exp \left(- \frac{d^{n}\mu ^{2}}{72\pi^{3}\log2}\right)\\
    &\geq  1-\left(\frac 5 {\epsilon'}\right)^{2K} {n\choose \delta-1} 2\exp \left(- \frac{d^{n}\mu ^{2}}{72\pi^{3}\log2}\right).
\end{aligned}
\end{equation}

This proves the probability bound of a random subspace being an $\epsilon$-approximate $((n,K,\delta))_{d}$ QECC.
    
\end{proof}

\subsection{Random Circuit Construction of Approximate QECCs}
We already see that Haar random unitary is capable of generating good approximate QECC with linear code distance, linear code rate and vanishingly small proximity and failure probability in Section~\ref{sec:Haar-random-aqecc}; however, in this section we show that this is not true for shallow depth random circuits. It turns out that to generate good approximate QECC with $k$ qudits we need circuit depth $L$ growing exponentially in $k$. To prove this, we use exactly the same techniques as in Section~\ref{sec:Haar-random-aqecc}. The only modification is to replace the usage of Theorem~\ref{thm:approx-uniform-prob} in Eq.~\eqref{eq:derivation-of-haar-approximate-qecc} with Theorem~\ref{thm:kdesign-kuniform}. The result is as follows:

\begin{theorem}\label{thm:random-circ-approximate-qecc}
    An $\epsilon$-approximate $((n,K,\delta))_d$ QECC can be generated from random 1 dimensional(1D) circuit ensemble of depth $L=O(nK^2\operatorname{poly}(k))$ with vanishingly small failure probability, where $k=\log_d K$ being the number of logical qudits, and given the constraint that $\epsilon=\omega(d^{-(n-k)/2})$ and $\epsilon=\omega({d^{-n/4}})$.
\end{theorem}

Note that the circuit depth have to grow exponentially with the qudit number $k:=\log_dK$. This illustrates a fundamental difference between Haar random unitary and random quantum circuit with shallow depth: shallow depth random circuit is not capable of generating "good" quantum codes with constant code rate $R:=\frac kn$ and linear distance $\delta=\Theta(n)$ with satisfying exponentially small proximity and failure probability under large $n$ limit.

\begin{proof}
Similar to the case of proving Haar random construction of approximate QECC in Section~\ref{sec:Haar-random-aqecc}, we see that approximate pure QECC with distance $\delta$ is equivalent to the condition that any state in the code space is an $\epsilon$-approximate $\delta-1$-uniform state. We first consider the probability for one fixed subsystem satisfying Eq.~\eqref{eq:altdef-aqecc}, with code space $\mathcal{C}$ generated by a quantum circuit ensemble forming approximate design from any fixed $K$-dimensional subspace:
\begin{equation}
\text{Prob}_{\mathcal{C}\in  \mu} \left( \max _{\ket{\psi} \in \mathcal{C}} \mathrm{Tr} (\ket{\psi} \bra{\psi} _{S}^{2}) < \frac{1}{d_{S}}+\epsilon ^{2}\right).
\end{equation}
Take an $\epsilon'$-net $\mathcal{N}_{\mathcal{C}}$ in the subspace $\mathcal{C}$. Using Lemma~\ref{lipschitz-pur}, we have
\begin{equation}
\lvert \mathrm{Tr} (\ket{\tilde{\psi}} \bra{\tilde{\psi}} _{S}^{2} ) - \mathrm{Tr} (\ket{\psi} \bra{\psi} _{S}^{2}) \rvert  \leq 4 \lvert \lvert  \ket{\tilde{\psi}} - \ket{\psi} \rvert \rvert _{2} \leq 2\epsilon',
\end{equation}
so the failure probability on subsystem $S$ can be upper bounded:
\begin{equation}
\begin{aligned}
\text{Prob}_{\mathcal{C}\in  \nu} \left( \max _{\ket{\psi} \in \mathcal{C}} \mathrm{Tr} (\ket{\psi} \bra{\psi} _{S}^{2}) < \frac{1}{d_{S}}+\epsilon ^{2}\right)  & \geq \text{Prob}_{\mathcal{C}\in \nu} \left( \max _{\ket{\tilde{\psi}} \in  \mathcal{N_{C}}} \mathrm{Tr} \left(\ket{\tilde{\psi}}\bra{\tilde{\psi}}_{S}^{2} \right)< \frac{1}{d_{S}} + \epsilon ^{2} - 2\epsilon' \right) \\
  & = 1- \text{Prob}_{\mathcal{C}\in \nu} \left( \max _{\ket{\tilde{\psi}} \in  \mathcal{N_{C}}} \mathrm{Tr} \left(\ket{\tilde{\psi}}\bra{\tilde{\psi}}_{S}^{2}\right) \geq \frac{1}{d_{S}} + \epsilon ^{2} - 2\epsilon' \right) \\
  & \geq 1- |\mathcal{N_{C}}| \text{Prob}_{\ket{\psi} \sim \nu} \left( \mathrm{Tr} (\ket{\psi }\bra{\psi}_{S}^{2})  \geq \frac{1}{d_{S}} + \epsilon ^{2} - 2\epsilon'\right) \\
 \text{Prob}_{\mathcal{C}\in  \nu} \left( \max _{\ket{\psi} \in \mathcal{C}} \mathrm{Tr} (\ket{\psi} \bra{\psi} _{S}^{2}) \geq \frac{1}{d_{S}}+\epsilon ^{2}\right)  & \leq |\mathcal{N_{C}}| \text{Prob}_{\ket{\psi} \sim \nu} \left( \mathrm{Tr} (\ket{\psi }\bra{\psi}_{S}^{2})  \geq \frac{1}{d_{S}} + \epsilon ^{2} - 2\epsilon'\right),\label{eq:derivartion-circ-aqecc}
\end{aligned}
\end{equation}
where we used $\nu$ to denote $\epsilon''$-approximate $t$ design ensemble, and $\mathcal C\in \nu$ means that $\mathcal C$ is generated by $U\sim\nu$ from a fixed subspace projector $P_0$ by $UP_0U^\dagger$. In the first inequality of Eq.~\eqref{eq:derivartion-circ-aqecc} we used the property of $\epsilon'$-net, in the second inequality of Eq.~\eqref{eq:derivartion-circ-aqecc} we used union theorem to convert the probability to one single state distributed by approximate design ensemble $\mu$, which has an upper bound according to Theorem~\ref{thm:kdesign-kuniform}. In the last inequality of Eq.~\eqref{eq:derivartion-circ-aqecc} we derived the upper bound of failure probability on one single subsystem $S$.

The total failure probability is thus lower bounded by
\begin{equation}
\begin{aligned}
\text{Prob}(\mathrm{fail})&:=\text{Prob}_{\mathcal{C}\in  \mu} \left( \exists S,|S| = \delta-1, \max _{\ket{\psi} \in \mathcal{C}} \mathrm{Tr} (\ket{\psi} \bra{\psi} _{S}^{2}) \geq \frac{1}{d_{S}}+\epsilon ^{2}\right)  \\ & \leq |\mathcal{N_{C}}| {n \choose \delta-1}\text{Prob}_{\ket{\psi} \sim \mu} \left( \mathrm{Tr} (\ket{\psi }\bra{\psi}_{S}^{2})  \geq \frac{1}{d_{S}} + \epsilon ^{2} - 2\epsilon'\right), \\
-\log \text{Prob}(\mathrm{fail})  & \gtrsim -2K \log \frac{5}{\epsilon'} - nS(\alpha) + t\log \sqrt{\epsilon ^2 - 2\epsilon'} + \frac n4 t\log d - \max\left( 0,- \log \frac 1{\epsilon''} + \left(\sigma + \frac 14\right) nt\log d \right) \\
&\simeq   -2K n (1-\alpha) \log d - nS(\alpha ) + t\log \epsilon + \frac n4 t\log d\\
 &=n\log d\left( -2K(1-\alpha) - S_d(\alpha) + \frac 14 t + \lambda t\right),\label{eq:derivation-of-circ-aqecc}
\end{aligned}
\end{equation}
for clarification of different $\epsilon$s, we are generating $\epsilon$-approximate codes using $\epsilon'$-net and $\epsilon''$-approximate $t$-design.
In the first inequality of Eq.~\eqref{eq:derivation-of-circ-aqecc} we used the union theorem over all possible partition $S$, in the second inequality we replaced the negative logarithm of each term with its lower bound: $-\log |\mathcal {N_C}| \geq 2K\log (\frac 5{\epsilon'})$, $-\log {n\choose \delta-1}\gtrsim -nS(\alpha)$ and Eq.~\eqref{derivation-approx-design-k-uniform} in Theorem~\ref{thm:kdesign-kuniform} with replacement $\epsilon\to \sqrt{\epsilon^2-2\epsilon'}$ (assuming $\epsilon^2>2\epsilon'$), $\alpha:=\frac{\delta-1}{n}$, and the given constraint $\epsilon = \omega(d^{-\frac{n(1-\alpha)}{2})})$. Then we simplify the outcome by assuming that $\log \frac{1}{\epsilon'} \geq (\sigma + \frac 14)nt\log d$, taking $\epsilon'=\Theta(d^{-n(1-\alpha)})$ and $\epsilon=\omega(d^{-(1-\alpha)n/2})$ so that $\epsilon^2-2\epsilon'\simeq \epsilon^2$, and removing all non-leading terms on the right hand side. Next we have the equality by taking $\lambda:=\frac 1n \log_d\epsilon$, $S_d(\alpha):=\frac{S(\alpha)}{\log d}$. Note that the failure probability is vanishingly small in large $n$ limit if we take $\lambda >-\frac 14$ (or $\epsilon=\omega(d^{-n/4})$ equivalently) and take $t=O(K)$ satisfying inequality $t> \frac{2K(1-\alpha) + S_d(\alpha)}{\lambda + \frac 14}$, resulting in circuit depth $L=O(nt^2\operatorname{polylog}(t))=O(nK^2\operatorname{poly}(k))$, where $k:=\log_dK$ is the logical qudit number in the generated approximate QECC.
\end{proof}

\subsection{Approximate quantum information masking}
In this section we investigate approximate QIM, and show the relation between approximate QECCs and approximate QIM. QIM is the process of distributing quantum information across a bipartite system in such a way that no individual subsystem contains sufficient information to reconstruct the original quantum information \cite{PhysRevLett.120.230501}.
QIM can be also generalized to $k$-uniform QIM with any $k$ subsystems can not recover the original quantum information  \cite{PhysRevA.104.032601}: an  operation $\mathcal S$ is said to \emph{$k$-uniformly mask}  quantum information contained in states $\{\ket{\varphi_i}\mid \ket{\varphi_i}\in \bbC^K\}$ to states $\{\ket{\psi_i}=\mathcal S(\ket{\varphi_i})\in (\bbC^d)^{\otimes n}\}$ if, for all states $\ket{\psi_i}$   and all subsystems $S\subseteq [n]$ with $|S|=k$, $\ketbra{\psi_i}{\psi_i}_S$ are identical. Clearly, if a pure  $((n,K,\delta))_d$ QECC exists, then it is possible to $(\delta-1)$ uniformly mask all  states of $\mathbb C^K$ to $(\bbC^d)^{\otimes n}$.
However, 
QIM tasks under specific circumstances are proved not possible \cite{PhysRevLett.120.230501,PhysRevA.104.032601}, and it is not possible to make measurements to local subsystem with infinite precision due to unavoidable noise. So it is enough to relax QIM restriction to its approximate version. 
\begin{definition}[$\epsilon$-approximate $k$-uniform QIM]
An  operation $\mathcal S$ is said to \emph{ $\epsilon$-approximate $k$-uniformly mask}  quantum information contained in states $\{\ket{\varphi_i}\mid \ket{\varphi_i}\in \bbC^K\}$ to states $\{\ket{\psi_i}=\mathcal S(\ket{\varphi_i})\in (\bbC^d)^{\otimes n}\}$ if, for all states $\ket{\psi_i},\ket{\psi_j}$   and all subsystems $S\subseteq [n]$ with $|S|=k$, 
\begin{equation}
 || \ket{\psi_i}\bra{\psi_i}_S - \ket{\psi_j}\bra{\psi_j}_S ||\leq \epsilon .
\end{equation}
\end{definition}

Similar to the relationship between QECCs and QIM, we can establish a connection between approximate QECCs and approximate QIM.


\begin{theorem}
    If an $\epsilon$-approximate pure $((n,K,\delta))_{d}$  QECC exists, then it is possible to $2\epsilon$-approximate $(\delta-1)$-uniformly mask all states of $\bbC^K$ to $(\bbC^d)^{\otimes n}$.
\end{theorem}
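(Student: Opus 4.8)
The plan is to produce the masking operation as an isometric encoding of $\bbC^K$ into the code space of the given approximate QECC, and then bound the distance between the reductions of any two encoded states by routing through the maximally mixed state. Concretely, let $\mathcal C\subseteq(\bbC^d)^{\otimes n}$ be the $K$-dimensional code space of the $\epsilon$-approximate pure $((n,K,\delta))_d$ QECC, and fix an isometry $V:\bbC^K\to(\bbC^d)^{\otimes n}$ with image $\mathcal C$. Define the masking operation by $\mathcal S(\ket\varphi):=V\ket\varphi$; this is a legitimate physical operation (it dilates to a unitary acting on $\bbC^K$ together with ancillas), and since $V^\dagger V=I_K$ the global output still carries the full input information, so $\mathcal S$ genuinely masks rather than destroys it.

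The first substantive step is to observe that every output $\ket{\psi_i}=\mathcal S(\ket{\varphi_i})$ lies in $\mathcal C$, hence by Theorem~\ref{thm:aqecc-eq-uniform} is an $\epsilon$-approximate $(\delta-1)$-uniform state. Invoking the equivalent characterization (\ref{alternate-def2}) of approximate uniformity, this means that for every subsystem $S\subseteq[n]$ with $|S|=\delta-1$ one has $||\,\ket{\psi_i}\bra{\psi_i}_S-\frac{1}{d_S}I_S\,||\le\epsilon$ in Hilbert--Schmidt norm, and this bound holds simultaneously for the reduction of every codeword.

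The final step is a one-line triangle inequality: for any two encoded states $\ket{\psi_i},\ket{\psi_j}$ and any $S$ with $|S|=\delta-1$,
\begin{equation}
||\,\ket{\psi_i}\bra{\psi_i}_S-\ket{\psi_j}\bra{\psi_j}_S\,||\le ||\,\ket{\psi_i}\bra{\psi_i}_S-\tfrac{1}{d_S}I_S\,||+||\,\tfrac{1}{d_S}I_S-\ket{\psi_j}\bra{\psi_j}_S\,||\le 2\epsilon,
\end{equation}
which is exactly the defining condition for $\mathcal S$ to $2\epsilon$-approximate $(\delta-1)$-uniformly mask all states of $\bbC^K$ into $(\bbC^d)^{\otimes n}$.

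I do not expect a genuine obstacle here: the argument rests entirely on the equivalence supplied by Theorem~\ref{thm:aqecc-eq-uniform}, and the remainder is the triangle inequality. The only points worth a sentence of care are (i) confirming that the norm in the definition of $\epsilon$-approximate $k$-uniform QIM is the same Hilbert--Schmidt norm used in the equivalent definition (\ref{alternate-def2}) of approximate $k$-uniform states, so that the bound transfers verbatim, and (ii) noting that the factor $2$ is generally loose, since both reductions concentrate near the same maximally mixed point; one could sharpen it using further structure of $\mathcal C$, but $2\epsilon$ already suffices for the stated claim.
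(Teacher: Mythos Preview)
Your proposal is correct and follows essentially the same route as the paper: define $\mathcal S$ as an isometry onto the code space, invoke Theorem~\ref{thm:aqecc-eq-uniform} so that every encoded state is an $\epsilon$-approximate $(\delta-1)$-uniform state, and then apply the triangle inequality through the maximally mixed state to get the $2\epsilon$ bound. The paper phrases the last step via the decomposition $\ket{\psi_i}\bra{\psi_i}_S=\frac{I_S+(P_i)_S}{d_S}$ and bounds $\frac{1}{d_S}\|(P_i)_S-(P_j)_S\|$, which is just your triangle inequality rewritten, so there is no substantive difference.
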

\begin{proof}
    According to Theorem~\ref{thm:aqecc-eq-uniform}, each state of the $\epsilon$-approximate pure $((n,K,\delta))_{d}$  QECC $\cC$ is an $\epsilon$-approximate $(\delta-1)$-uniform state. Let $\mathcal S$ be the isomorphism between $\bbC^K$ and $\cC$,  then for all states $\ket{\psi_i},\ket{\psi_j}\in\cC$ and all subsystems $S\subseteq [n]$ with $|S|=\delta-1$,
\begin{equation}
    \begin{aligned}
|| \ket{\psi_i}\bra{\psi_i}_S - \ket{\psi_j}\bra{\psi_j}_S ||&= \left|\left| \frac{I_S+(P_{i})_S}{d_S} -\frac{I_S+(P_{j})_S}{d_S}  \right|\right|\\ 
&= \frac{1}{d_S} \left|\left|(P_i)_S-(P_j)_S\right|\right|\\
&\leq \frac{1}{d_S} \left(\left|\left|(P_i)_S\right|\right|+ \left|\left|(P_j)_S\right|\right|\right)\\
&\leq 2\epsilon,        
    \end{aligned}
\end{equation}
where we have used alternative Definition~\ref{alternate-def2} of $\epsilon$-approximate $k$-uniform state and triangle inequality. According to the definition of $\epsilon$-approximate QIM, the operation  $2\epsilon$-approximate $(\delta-1)$-uniformly mask all states of $\bbC^K$ to $(\bbC^d)^{\otimes n}$.
\end{proof}

This theorem shows that using the process of encoding to approximate QECC, we can distribute logical quantum information to a larger physical system where the information is approximately and globally hided, so that almost no information can be gained by small number of parties. This theorem also shows the strong connection between approximate QIM and approximate QECC.

\section{Numerical simulations}\label{num}

To study the existence of approximate $k$-uniform states, we run a simple numerical simulation to find optimal $\epsilon$ of approximate uniform state under small system number and local dimension. The simulation is designed as the following optimization problem:
\begin{equation}
    \begin{aligned}
        \epsilon^* = \underset{\substack{c_{j} \\ j=0,1,\cdots,d^n-1}}{\mathrm{minimize }} & \max_{A\subseteq [n],|A|=k} \sqrt{\tr(\rho_A^2)- \frac {1}{d^k}},\\
   \text{subject to } &\begin{cases}\rho_A=\mathrm{Tr}_{A^c}  \ket\psi\bra\psi\\
    \ket\psi= \sum_{j=0}^{d^n-1}c_j\ket {j_1j_2\cdots j_n}\\
    \sum_{j=0}^{d^n-1} |c_{j}|^2=1\end{cases}   ,
    \end{aligned}
\end{equation}
where the parameters are $c_j\in \mathbb C\ (j=0,\cdots, d^n-1) $, and $j=(j_1j_2\cdots j_n)$ is the representation of integer $j$ in base $d$. This forms an optimization problem with $d^{n}$ complex parameters, or $2d^{n}$ real parameters equivalently and makes the 
 problem of finding approximate $k$-uniform state optimizable for small $d,n$ cases. Here is a table illustrating our simulation results of approximate AME and uniform states. The number in each box represents an optimized approximate $k$-uniform or AME state with specific $\epsilon$. For example, for $n=4,d=2$, it means that there exists an approximate $\mathrm{AME}(4,2)$ state with $\epsilon\approx 0.2887$, and approximate $1$-uniform state with $\epsilon\approx 0.0077$. We can see that for AME$(4,2)$, $0.2887> \frac 1{2\sqrt {19}}\approx 0.1147$ satisfies our computed bounds. Moreover, the notion of approximate uniform states can be generalized to the case where $k>\fl{n}{2}$, which is not possible for the exact case. For example, numerical results show that there exists $0.3536$-approximate $3$-uniform states on 5 qubits (note that this is close to the exact 2-uniform case, with $\epsilon=\sqrt{\frac{1}{4}-\frac{1}{8}}\approx 0.3536 $). This could be useful for tasks like approximate QIM. As the $n,d$ increases, the number of dimension increases as $n^d$, which takes exponentially long time to run on the computer.

\begin{table}[h]
\caption{Numerical results on approximate $k$-uniform and AME$(n,d)$ states}
\centering
\begin{tabular}{cc|c|c|c|c|c|c}
\toprule
 &&\multicolumn{3}{c}{$k=\lfloor n/2\rfloor$}\vline&\multicolumn{3}{c}{$k=\lfloor n/2\rfloor-1$} \\\hline
\multicolumn{2}{c}{\multirow{2}*{$\epsilon$}}\vline&\multicolumn{3}{c}{$d$} \vline& \multicolumn{3}{c}{$d$}   \\
\cline{3-8} 
 && 2 & 3 & 4 & 2&3&4 \\ 
\hline
\multicolumn{1}{c|}{\multirow{5}{*}{$n$}} &4& \lpurple $0.2887$ & \lgreen $2.8170\times 10^{-7}$&  \lgreen$0.1781$ &\lgreen $0.0077$ &$1.600\times 10^{-6}$ \lgreen & $3.726\times 10^{-6}$ \lgreen\\
\multicolumn{1}{c|}{}&5& \lgreen $3.501\times 10^{-7}$   &\lgreen 0.0387& 0.1322\lgreen & $3.371\times 10^{-5}$\lgreen & 0.0004\lgreen &\lgreen \\   
 \multicolumn{1}{c|}{}&6&  \lgreen 0.2473&  \lgreen& \lgreen &0.0714\lgreen & \lgreen & \lgreen\\   
 \multicolumn{1}{c|}{}  &7& \lpurple $0.1214$& \lgreen & \lgreen & 0.0138\lgreen & \lgreen & \lgreen\\  
\multicolumn{1}{c|}{}&8& \lpurple&\lpurple &  &0.1228 \lgreen & \lgreen & \lgreen  \\
\end{tabular}

The numerical results on approximate $k$-uniform and AME states. Grids are colored light purple if $k$-uniform states are non-existent, green if $k$-uniform states exist, and white if the existence of $k$-uniform states are unknown.
\label{tab:d_n_table}
\end{table}

\section{Summary and future directions}
In this work, we defined the concept of approximate $k$-uniform states, proved that existence of approximate $k$-uniform states is similar to the existence of exact ones, and showed that random (either Haar random or shallow depth) circuits give rise to such states with very high probability under large $n$ limit. Next we investigated its application to approximate QECC, relating code proximity to bounds on quantum weight enumerators, and showed the performance of random unitary and random circuit approach for constructing approximate QECC. Then we provided potential application of approximate $k$-uniform states in approximate QIM. 

Our results have many benefits. The exact $k$-uniform states are mostly constructed through combinatorial methods like orthogonal arrays and Latin squares, and AME$(n,d)$ states for some cases are proved to be impossible. Our generalization of $k$-uniform states to approximate cases makes some approximate AME states possible and $k$-uniform states more accessible. Our result also provides practical usage to the real-world quantum information processing tasks. Inspired by approximate $k$-uniform states, we defined approximate QECC, which relates the notion of quantum weight enumerators. The approximate QECCs are shown to be efficiently constructed from Haar random unitaries but not efficiently constructed by shallow random quantum circuits. This shows a gap between ideal random unitaries and its approximation implemented by random circuit ensemble, which leaves an interesting open question for further study.

\section{Acknowledgments}
After this work was completed, we became aware of related work by Xiaodi Li, Xinyang Shu and Huangjun Zhu that also analyzes approximate QIM using Haar random tools. We thank them for letting us know about their work.

We thank Xiande Zhang, Zhiyao Lu, Yue Cao, Wenjun Yu, Xingjian Zhang, Tianfeng Feng, Jue Xu, Jiayi Wu and Yue Wang for their helpful discussion and suggestion.

This work acknowledges funding from Innovation Program for Quantum Science and Technology via Project 2024ZD0301900, National Natural Science Foundation of China (NSFC) via Project No. 12347104 and No. 12305030, Guangdong Basic and Applied Basic Research Foundation via Project 2023A1515012185, Hong Kong Research Grant Council (RGC) via No. 27300823, N\_HKU718/23, and R6010-23, Guangdong Provincial Quantum Science Strategic Initiative No. GDZX2303007, HKU Seed Fund for Basic Research for New Staff via Project 2201100596. Y.Z. is also supported by the Innovation Program for Quantum Science and Technology Grant Nos.~2021ZD0302000, the National Natural Science Foundation of China (NSFC) Grant No.~12205048, the Shanghai Science and Technology Innovation Action Plan Grant No.~24LZ1400200, the Shanghai Pilot Program for Basic Research - Fudan University 21TQ1400100 (25TQ003), and the start-up funding of Fudan University.


\nocite{*}
\bibliographystyle{apsrev4-2}
\bibliography{ref}

\begin{thebibliography}{58}%
\makeatletter
\providecommand \@ifxundefined [1]{%
 \@ifx{#1\undefined}
}%
\providecommand \@ifnum [1]{%
 \ifnum #1\expandafter \@firstoftwo
 \else \expandafter \@secondoftwo
 \fi
}%
\providecommand \@ifx [1]{%
 \ifx #1\expandafter \@firstoftwo
 \else \expandafter \@secondoftwo
 \fi
}%
\providecommand \natexlab [1]{#1}%
\providecommand \enquote  [1]{``#1''}%
\providecommand \bibnamefont  [1]{#1}%
\providecommand \bibfnamefont [1]{#1}%
\providecommand \citenamefont [1]{#1}%
\providecommand \href@noop [0]{\@secondoftwo}%
\providecommand \href [0]{\begingroup \@sanitize@url \@href}%
\providecommand \@href[1]{\@@startlink{#1}\@@href}%
\providecommand \@@href[1]{\endgroup#1\@@endlink}%
\providecommand \@sanitize@url [0]{\catcode `\\12\catcode `\$12\catcode `\&12\catcode `\#12\catcode `\^12\catcode `\_12\catcode `\%12\relax}%
\providecommand \@@startlink[1]{}%
\providecommand \@@endlink[0]{}%
\providecommand \url  [0]{\begingroup\@sanitize@url \@url }%
\providecommand \@url [1]{\endgroup\@href {#1}{\urlprefix }}%
\providecommand \urlprefix  [0]{URL }%
\providecommand \Eprint [0]{\href }%
\providecommand \doibase [0]{https://doi.org/}%
\providecommand \selectlanguage [0]{\@gobble}%
\providecommand \bibinfo  [0]{\@secondoftwo}%
\providecommand \bibfield  [0]{\@secondoftwo}%
\providecommand \translation [1]{[#1]}%
\providecommand \BibitemOpen [0]{}%
\providecommand \bibitemStop [0]{}%
\providecommand \bibitemNoStop [0]{.\EOS\space}%
\providecommand \EOS [0]{\spacefactor3000\relax}%
\providecommand \BibitemShut  [1]{\csname bibitem#1\endcsname}%
\let\auto@bib@innerbib\@empty
\bibitem [{\citenamefont {Scott}(2004)}]{PhysRevA.69.052330}%
  \BibitemOpen
  \bibfield  {author} {\bibinfo {author} {\bibfnamefont {A.~J.}\ \bibnamefont {Scott}},\ }\href {https://doi.org/10.1103/PhysRevA.69.052330} {\bibfield  {journal} {\bibinfo  {journal} {Phys. Rev. A}\ }\textbf {\bibinfo {volume} {69}},\ \bibinfo {pages} {052330} (\bibinfo {year} {2004})}\BibitemShut {NoStop}%
\bibitem [{\citenamefont {Helwig}\ \emph {et~al.}(2012)\citenamefont {Helwig}, \citenamefont {Cui}, \citenamefont {Latorre}, \citenamefont {Riera},\ and\ \citenamefont {Lo}}]{PhysRevA.86.052335}%
  \BibitemOpen
  \bibfield  {author} {\bibinfo {author} {\bibfnamefont {W.}~\bibnamefont {Helwig}}, \bibinfo {author} {\bibfnamefont {W.}~\bibnamefont {Cui}}, \bibinfo {author} {\bibfnamefont {J.~I.}\ \bibnamefont {Latorre}}, \bibinfo {author} {\bibfnamefont {A.}~\bibnamefont {Riera}},\ and\ \bibinfo {author} {\bibfnamefont {H.-K.}\ \bibnamefont {Lo}},\ }\href {https://doi.org/10.1103/PhysRevA.86.052335} {\bibfield  {journal} {\bibinfo  {journal} {Phys. Rev. A}\ }\textbf {\bibinfo {volume} {86}},\ \bibinfo {pages} {052335} (\bibinfo {year} {2012})}\BibitemShut {NoStop}%
\bibitem [{\citenamefont {Cleve}\ \emph {et~al.}(1999)\citenamefont {Cleve}, \citenamefont {Gottesman},\ and\ \citenamefont {Lo}}]{PhysRevLett.83.648}%
  \BibitemOpen
  \bibfield  {author} {\bibinfo {author} {\bibfnamefont {R.}~\bibnamefont {Cleve}}, \bibinfo {author} {\bibfnamefont {D.}~\bibnamefont {Gottesman}},\ and\ \bibinfo {author} {\bibfnamefont {H.-K.}\ \bibnamefont {Lo}},\ }\href {https://doi.org/10.1103/PhysRevLett.83.648} {\bibfield  {journal} {\bibinfo  {journal} {Phys. Rev. Lett.}\ }\textbf {\bibinfo {volume} {83}},\ \bibinfo {pages} {648} (\bibinfo {year} {1999})}\BibitemShut {NoStop}%
\bibitem [{\citenamefont {Shi}\ \emph {et~al.}(2021)\citenamefont {Shi}, \citenamefont {Li}, \citenamefont {Chen},\ and\ \citenamefont {Zhang}}]{PhysRevA.104.032601}%
  \BibitemOpen
  \bibfield  {author} {\bibinfo {author} {\bibfnamefont {F.}~\bibnamefont {Shi}}, \bibinfo {author} {\bibfnamefont {M.-S.}\ \bibnamefont {Li}}, \bibinfo {author} {\bibfnamefont {L.}~\bibnamefont {Chen}},\ and\ \bibinfo {author} {\bibfnamefont {X.}~\bibnamefont {Zhang}},\ }\href {https://doi.org/10.1103/PhysRevA.104.032601} {\bibfield  {journal} {\bibinfo  {journal} {Phys. Rev. A}\ }\textbf {\bibinfo {volume} {104}},\ \bibinfo {pages} {032601} (\bibinfo {year} {2021})}\BibitemShut {NoStop}%
\bibitem [{\citenamefont {Zhao}\ \emph {et~al.}(2024)\citenamefont {Zhao}, \citenamefont {Zhou},\ and\ \citenamefont {Childs}}]{zhao2024entanglement}%
  \BibitemOpen
  \bibfield  {author} {\bibinfo {author} {\bibfnamefont {Q.}~\bibnamefont {Zhao}}, \bibinfo {author} {\bibfnamefont {Y.}~\bibnamefont {Zhou}},\ and\ \bibinfo {author} {\bibfnamefont {A.~M.}\ \bibnamefont {Childs}},\ }\href {https://arxiv.org/abs/2406.02379} {\bibfield  {journal} {\bibinfo  {journal} {arXiv:2406.02379}\ } (\bibinfo {year} {2024})}\BibitemShut {NoStop}%
\bibitem [{\citenamefont {Horodecki}\ \emph {et~al.}(2022)\citenamefont {Horodecki}, \citenamefont {Rudnicki},\ and\ \citenamefont {\ifmmode~\dot{Z}\else \.{Z}\fi{}yczkowski}}]{PRXQuantum.3.010101}%
  \BibitemOpen
  \bibfield  {author} {\bibinfo {author} {\bibfnamefont {P.}~\bibnamefont {Horodecki}}, \bibinfo {author} {\bibfnamefont {L.}~\bibnamefont {Rudnicki}},\ and\ \bibinfo {author} {\bibfnamefont {K.}~\bibnamefont {\ifmmode~\dot{Z}\else \.{Z}\fi{}yczkowski}},\ }\href {https://doi.org/10.1103/PRXQuantum.3.010101} {\bibfield  {journal} {\bibinfo  {journal} {PRX Quantum}\ }\textbf {\bibinfo {volume} {3}},\ \bibinfo {pages} {010101} (\bibinfo {year} {2022})}\BibitemShut {NoStop}%
\bibitem [{\citenamefont {Rather}\ \emph {et~al.}(2022)\citenamefont {Rather}, \citenamefont {Burchardt}, \citenamefont {Bruzda}, \citenamefont {Rajchel-Mieldzio\ifmmode~\acute{c}\else \'{c}\fi{}}, \citenamefont {Lakshminarayan},\ and\ \citenamefont {\ifmmode~\dot{Z}\else \.{Z}\fi{}yczkowski}}]{PhysRevLett.128.080507}%
  \BibitemOpen
  \bibfield  {author} {\bibinfo {author} {\bibfnamefont {S.~A.}\ \bibnamefont {Rather}}, \bibinfo {author} {\bibfnamefont {A.}~\bibnamefont {Burchardt}}, \bibinfo {author} {\bibfnamefont {W.}~\bibnamefont {Bruzda}}, \bibinfo {author} {\bibfnamefont {G.}~\bibnamefont {Rajchel-Mieldzio\ifmmode~\acute{c}\else \'{c}\fi{}}}, \bibinfo {author} {\bibfnamefont {A.}~\bibnamefont {Lakshminarayan}},\ and\ \bibinfo {author} {\bibfnamefont {K.}~\bibnamefont {\ifmmode~\dot{Z}\else \.{Z}\fi{}yczkowski}},\ }\href {https://doi.org/10.1103/PhysRevLett.128.080507} {\bibfield  {journal} {\bibinfo  {journal} {Phys. Rev. Lett.}\ }\textbf {\bibinfo {volume} {128}},\ \bibinfo {pages} {080507} (\bibinfo {year} {2022})}\BibitemShut {NoStop}%
\bibitem [{\citenamefont {Feng}\ \emph {et~al.}(2017)\citenamefont {Feng}, \citenamefont {Jin}, \citenamefont {Xing},\ and\ \citenamefont {Yuan}}]{feng2017multipartite}%
  \BibitemOpen
  \bibfield  {author} {\bibinfo {author} {\bibfnamefont {K.}~\bibnamefont {Feng}}, \bibinfo {author} {\bibfnamefont {L.}~\bibnamefont {Jin}}, \bibinfo {author} {\bibfnamefont {C.}~\bibnamefont {Xing}},\ and\ \bibinfo {author} {\bibfnamefont {C.}~\bibnamefont {Yuan}},\ }\href {https://doi.org/10.1109/TIT.2017.2700866} {\bibfield  {journal} {\bibinfo  {journal} {IEEE Trans. Inf. Theory}\ }\textbf {\bibinfo {volume} {63}},\ \bibinfo {pages} {5618} (\bibinfo {year} {2017})}\BibitemShut {NoStop}%
\bibitem [{\citenamefont {Goyeneche}\ and\ \citenamefont {\ifmmode~\dot{Z}\else \.{Z}\fi{}yczkowski}(2014)}]{goyeneche2014genuinely}%
  \BibitemOpen
  \bibfield  {author} {\bibinfo {author} {\bibfnamefont {D.}~\bibnamefont {Goyeneche}}\ and\ \bibinfo {author} {\bibfnamefont {K.}~\bibnamefont {\ifmmode~\dot{Z}\else \.{Z}\fi{}yczkowski}},\ }\href {https://doi.org/10.1103/PhysRevA.90.022316} {\bibfield  {journal} {\bibinfo  {journal} {Phys. Rev. A}\ }\textbf {\bibinfo {volume} {90}},\ \bibinfo {pages} {022316} (\bibinfo {year} {2014})}\BibitemShut {NoStop}%
\bibitem [{\citenamefont {Li}\ and\ \citenamefont {Wang}(2019{\natexlab{a}})}]{li2019k}%
  \BibitemOpen
  \bibfield  {author} {\bibinfo {author} {\bibfnamefont {M.-S.}\ \bibnamefont {Li}}\ and\ \bibinfo {author} {\bibfnamefont {Y.-L.}\ \bibnamefont {Wang}},\ }\href {https://doi.org/10.1103/PhysRevA.99.042332} {\bibfield  {journal} {\bibinfo  {journal} {Phys. Rev. A}\ }\textbf {\bibinfo {volume} {99}},\ \bibinfo {pages} {042332} (\bibinfo {year} {2019}{\natexlab{a}})}\BibitemShut {NoStop}%
\bibitem [{\citenamefont {Pang}\ \emph {et~al.}(2019)\citenamefont {Pang}, \citenamefont {Zhang}, \citenamefont {Lin},\ and\ \citenamefont {Zhang}}]{pang2019two}%
  \BibitemOpen
  \bibfield  {author} {\bibinfo {author} {\bibfnamefont {S.-Q.}\ \bibnamefont {Pang}}, \bibinfo {author} {\bibfnamefont {X.}~\bibnamefont {Zhang}}, \bibinfo {author} {\bibfnamefont {X.}~\bibnamefont {Lin}},\ and\ \bibinfo {author} {\bibfnamefont {Q.-J.}\ \bibnamefont {Zhang}},\ }\href {https://www.nature.com/articles/s41534-019-0165-8#citeas} {\bibfield  {journal} {\bibinfo  {journal} {npj Quantum Infor.}\ }\textbf {\bibinfo {volume} {5}},\ \bibinfo {pages} {52} (\bibinfo {year} {2019})}\BibitemShut {NoStop}%
\bibitem [{\citenamefont {Goyeneche}\ \emph {et~al.}(2015)\citenamefont {Goyeneche}, \citenamefont {Alsina}, \citenamefont {Latorre}, \citenamefont {Riera},\ and\ \citenamefont {\ifmmode~\dot{Z}\else \.{Z}\fi{}yczkowski}}]{goyeneche2015absolutely}%
  \BibitemOpen
  \bibfield  {author} {\bibinfo {author} {\bibfnamefont {D.}~\bibnamefont {Goyeneche}}, \bibinfo {author} {\bibfnamefont {D.}~\bibnamefont {Alsina}}, \bibinfo {author} {\bibfnamefont {J.~I.}\ \bibnamefont {Latorre}}, \bibinfo {author} {\bibfnamefont {A.}~\bibnamefont {Riera}},\ and\ \bibinfo {author} {\bibfnamefont {K.}~\bibnamefont {\ifmmode~\dot{Z}\else \.{Z}\fi{}yczkowski}},\ }\href {https://doi.org/10.1103/PhysRevA.92.032316} {\bibfield  {journal} {\bibinfo  {journal} {Phys. Rev. A}\ }\textbf {\bibinfo {volume} {92}},\ \bibinfo {pages} {032316} (\bibinfo {year} {2015})}\BibitemShut {NoStop}%
\bibitem [{\citenamefont {Goyeneche}\ \emph {et~al.}(2018)\citenamefont {Goyeneche}, \citenamefont {Raissi}, \citenamefont {Di~Martino},\ and\ \citenamefont {\ifmmode~\dot{Z}\else \.{Z}\fi{}yczkowski}}]{PhysRevA.97.062326}%
  \BibitemOpen
  \bibfield  {author} {\bibinfo {author} {\bibfnamefont {D.}~\bibnamefont {Goyeneche}}, \bibinfo {author} {\bibfnamefont {Z.}~\bibnamefont {Raissi}}, \bibinfo {author} {\bibfnamefont {S.}~\bibnamefont {Di~Martino}},\ and\ \bibinfo {author} {\bibfnamefont {K.}~\bibnamefont {\ifmmode~\dot{Z}\else \.{Z}\fi{}yczkowski}},\ }\href {https://doi.org/10.1103/PhysRevA.97.062326} {\bibfield  {journal} {\bibinfo  {journal} {Phys. Rev. A}\ }\textbf {\bibinfo {volume} {97}},\ \bibinfo {pages} {062326} (\bibinfo {year} {2018})}\BibitemShut {NoStop}%
\bibitem [{\citenamefont {Raissi}\ \emph {et~al.}(2020)\citenamefont {Raissi}, \citenamefont {Teixid\'o}, \citenamefont {Gogolin},\ and\ \citenamefont {Ac\'{\i}n}}]{raissi2020constructions}%
  \BibitemOpen
  \bibfield  {author} {\bibinfo {author} {\bibfnamefont {Z.}~\bibnamefont {Raissi}}, \bibinfo {author} {\bibfnamefont {A.}~\bibnamefont {Teixid\'o}}, \bibinfo {author} {\bibfnamefont {C.}~\bibnamefont {Gogolin}},\ and\ \bibinfo {author} {\bibfnamefont {A.}~\bibnamefont {Ac\'{\i}n}},\ }\href {https://doi.org/10.1103/PhysRevResearch.2.033411} {\bibfield  {journal} {\bibinfo  {journal} {Phys. Rev. Res.}\ }\textbf {\bibinfo {volume} {2}},\ \bibinfo {pages} {033411} (\bibinfo {year} {2020})}\BibitemShut {NoStop}%
\bibitem [{\citenamefont {Zang}\ \emph {et~al.}(2021)\citenamefont {Zang}, \citenamefont {Facchi},\ and\ \citenamefont {Tian}}]{Zang_2021}%
  \BibitemOpen
  \bibfield  {author} {\bibinfo {author} {\bibfnamefont {Y.}~\bibnamefont {Zang}}, \bibinfo {author} {\bibfnamefont {P.}~\bibnamefont {Facchi}},\ and\ \bibinfo {author} {\bibfnamefont {Z.}~\bibnamefont {Tian}},\ }\href {https://doi.org/10.1088/1751-8121/ac3705} {\bibfield  {journal} {\bibinfo  {journal} {J. Phys. A-Math. Theor.}\ }\textbf {\bibinfo {volume} {54}},\ \bibinfo {pages} {505204} (\bibinfo {year} {2021})}\BibitemShut {NoStop}%
\bibitem [{\citenamefont {Shi}\ \emph {et~al.}(2022)\citenamefont {Shi}, \citenamefont {Shen}, \citenamefont {Chen},\ and\ \citenamefont {Zhang}}]{Heterogeneous_Systems}%
  \BibitemOpen
  \bibfield  {author} {\bibinfo {author} {\bibfnamefont {F.}~\bibnamefont {Shi}}, \bibinfo {author} {\bibfnamefont {Y.}~\bibnamefont {Shen}}, \bibinfo {author} {\bibfnamefont {L.}~\bibnamefont {Chen}},\ and\ \bibinfo {author} {\bibfnamefont {X.}~\bibnamefont {Zhang}},\ }\href {https://doi.org/https://ieeexplore.ieee.org/document/9698018/} {\bibfield  {journal} {\bibinfo  {journal} {IEEE Trans. Inf. Theory}\ }\textbf {\bibinfo {volume} {68}},\ \bibinfo {pages} {3115} (\bibinfo {year} {2022})}\BibitemShut {NoStop}%
\bibitem [{\citenamefont {Feng}\ \emph {et~al.}(2023)\citenamefont {Feng}, \citenamefont {Jin}, \citenamefont {Xing},\ and\ \citenamefont {Yuan}}]{10143323}%
  \BibitemOpen
  \bibfield  {author} {\bibinfo {author} {\bibfnamefont {K.}~\bibnamefont {Feng}}, \bibinfo {author} {\bibfnamefont {L.}~\bibnamefont {Jin}}, \bibinfo {author} {\bibfnamefont {C.}~\bibnamefont {Xing}},\ and\ \bibinfo {author} {\bibfnamefont {C.}~\bibnamefont {Yuan}},\ }\href {https://doi.org/10.1109/TIT.2023.3282221} {\bibfield  {journal} {\bibinfo  {journal} {IEEE Trans. Inf. Theory}\ }\textbf {\bibinfo {volume} {69}},\ \bibinfo {pages} {5845} (\bibinfo {year} {2023})}\BibitemShut {NoStop}%
\bibitem [{\citenamefont {Rains}(1999)}]{796376}%
  \BibitemOpen
  \bibfield  {author} {\bibinfo {author} {\bibfnamefont {E.}~\bibnamefont {Rains}},\ }\href {https://doi.org/10.1109/18.796376} {\bibfield  {journal} {\bibinfo  {journal} {IEEE Trans. Inf. Theory}\ }\textbf {\bibinfo {volume} {45}},\ \bibinfo {pages} {2361} (\bibinfo {year} {1999})}\BibitemShut {NoStop}%
\bibitem [{\citenamefont {Higuchi}\ and\ \citenamefont {Sudbery}(2000)}]{higuchi2000entangled}%
  \BibitemOpen
  \bibfield  {author} {\bibinfo {author} {\bibfnamefont {A.}~\bibnamefont {Higuchi}}\ and\ \bibinfo {author} {\bibfnamefont {A.}~\bibnamefont {Sudbery}},\ }\href {https://doi.org/https://doi.org/10.1016/S0375-9601(00)00480-1} {\bibfield  {journal} {\bibinfo  {journal} {Phys. Lett. A}\ }\textbf {\bibinfo {volume} {273}},\ \bibinfo {pages} {213} (\bibinfo {year} {2000})}\BibitemShut {NoStop}%
\bibitem [{\citenamefont {Huber}\ \emph {et~al.}(2017)\citenamefont {Huber}, \citenamefont {G\"uhne},\ and\ \citenamefont {Siewert}}]{PhysRevLett.118.200502}%
  \BibitemOpen
  \bibfield  {author} {\bibinfo {author} {\bibfnamefont {F.}~\bibnamefont {Huber}}, \bibinfo {author} {\bibfnamefont {O.}~\bibnamefont {G\"uhne}},\ and\ \bibinfo {author} {\bibfnamefont {J.}~\bibnamefont {Siewert}},\ }\href {https://doi.org/10.1103/PhysRevLett.118.200502} {\bibfield  {journal} {\bibinfo  {journal} {Phys. Rev. Lett.}\ }\textbf {\bibinfo {volume} {118}},\ \bibinfo {pages} {200502} (\bibinfo {year} {2017})}\BibitemShut {NoStop}%
\bibitem [{\citenamefont {Huber}\ \emph {et~al.}(2018)\citenamefont {Huber}, \citenamefont {Eltschka}, \citenamefont {Siewert},\ and\ \citenamefont {G{\"u}hne}}]{huber2018bounds}%
  \BibitemOpen
  \bibfield  {author} {\bibinfo {author} {\bibfnamefont {F.}~\bibnamefont {Huber}}, \bibinfo {author} {\bibfnamefont {C.}~\bibnamefont {Eltschka}}, \bibinfo {author} {\bibfnamefont {J.}~\bibnamefont {Siewert}},\ and\ \bibinfo {author} {\bibfnamefont {O.}~\bibnamefont {G{\"u}hne}},\ }\href {https://iopscience.iop.org/article/10.1088/1751-8121/aaade5/meta} {\bibfield  {journal} {\bibinfo  {journal} {J. Phys. A-Math. Theor.}\ }\textbf {\bibinfo {volume} {51}},\ \bibinfo {pages} {175301} (\bibinfo {year} {2018})}\BibitemShut {NoStop}%
\bibitem [{\citenamefont {Shi}\ \emph {et~al.}(2025)\citenamefont {Shi}, \citenamefont {Ning}, \citenamefont {Zhao},\ and\ \citenamefont {Zhang}}]{10718358}%
  \BibitemOpen
  \bibfield  {author} {\bibinfo {author} {\bibfnamefont {F.}~\bibnamefont {Shi}}, \bibinfo {author} {\bibfnamefont {Y.}~\bibnamefont {Ning}}, \bibinfo {author} {\bibfnamefont {Q.}~\bibnamefont {Zhao}},\ and\ \bibinfo {author} {\bibfnamefont {X.}~\bibnamefont {Zhang}},\ }\href {https://doi.org/10.1109/TIT.2024.3481042} {\bibfield  {journal} {\bibinfo  {journal} {IEEE Trans. Inf. Theory}\ }\textbf {\bibinfo {volume} {71}},\ \bibinfo {pages} {413} (\bibinfo {year} {2025})}\BibitemShut {NoStop}%
\bibitem [{\citenamefont {Ning}\ \emph {et~al.}(2025)\citenamefont {Ning}, \citenamefont {Shi}, \citenamefont {Luo},\ and\ \citenamefont {Zhang}}]{ning2025linear}%
  \BibitemOpen
  \bibfield  {author} {\bibinfo {author} {\bibfnamefont {Y.}~\bibnamefont {Ning}}, \bibinfo {author} {\bibfnamefont {F.}~\bibnamefont {Shi}}, \bibinfo {author} {\bibfnamefont {T.}~\bibnamefont {Luo}},\ and\ \bibinfo {author} {\bibfnamefont {X.}~\bibnamefont {Zhang}},\ }\href {https://arxiv.org/abs/2503.02222} {\bibfield  {journal} {\bibinfo  {journal} {arXiv:2503.02222}\ } (\bibinfo {year} {2025})}\BibitemShut {NoStop}%
\bibitem [{\citenamefont {Huber}\ and\ \citenamefont {Wyderka}()}]{AMEtable}%
  \BibitemOpen
  \bibfield  {author} {\bibinfo {author} {\bibfnamefont {F.}~\bibnamefont {Huber}}\ and\ \bibinfo {author} {\bibfnamefont {N.}~\bibnamefont {Wyderka}},\ }\href {http://www.tp.nt.uni-siegen.de/+fhuber/ame.html} {\bibinfo {title} {Table of absolutely maximally entangled states}},\ \bibinfo {note} {accessed on 17 May., 2025}\BibitemShut {NoStop}%
\bibitem [{\citenamefont {Li}\ and\ \citenamefont {Wang}(2019{\natexlab{b}})}]{PhysRevA.99.042332}%
  \BibitemOpen
  \bibfield  {author} {\bibinfo {author} {\bibfnamefont {M.-S.}\ \bibnamefont {Li}}\ and\ \bibinfo {author} {\bibfnamefont {Y.-L.}\ \bibnamefont {Wang}},\ }\href {https://doi.org/10.1103/PhysRevA.99.042332} {\bibfield  {journal} {\bibinfo  {journal} {Phys. Rev. A}\ }\textbf {\bibinfo {volume} {99}},\ \bibinfo {pages} {042332} (\bibinfo {year} {2019}{\natexlab{b}})}\BibitemShut {NoStop}%
\bibitem [{\citenamefont {LaRacuente}\ and\ \citenamefont {Leditzky}(2024)}]{laracuente2024approximateunitarykdesignsshallow}%
  \BibitemOpen
  \bibfield  {author} {\bibinfo {author} {\bibfnamefont {N.}~\bibnamefont {LaRacuente}}\ and\ \bibinfo {author} {\bibfnamefont {F.}~\bibnamefont {Leditzky}},\ }\href@noop {} {\bibfield  {journal} {\bibinfo  {journal} {arXiv:2407.07876}\ } (\bibinfo {year} {2024})},\ \Eprint {https://arxiv.org/abs/2407.07876} {2407.07876 [quant-ph]} \BibitemShut {NoStop}%
\bibitem [{\citenamefont {Schuster}\ \emph {et~al.}(2024)\citenamefont {Schuster}, \citenamefont {Haferkamp},\ and\ \citenamefont {Huang}}]{schuster2024random}%
  \BibitemOpen
  \bibfield  {author} {\bibinfo {author} {\bibfnamefont {T.}~\bibnamefont {Schuster}}, \bibinfo {author} {\bibfnamefont {J.}~\bibnamefont {Haferkamp}},\ and\ \bibinfo {author} {\bibfnamefont {H.-Y.}\ \bibnamefont {Huang}},\ }\href@noop {} {\bibfield  {journal} {\bibinfo  {journal} {arXiv:2407.07754}\ } (\bibinfo {year} {2024})}\BibitemShut {NoStop}%
\bibitem [{\citenamefont {Hayden}\ \emph {et~al.}(2006)\citenamefont {Hayden}, \citenamefont {Leung},\ and\ \citenamefont {Winter}}]{Hayden_2006}%
  \BibitemOpen
  \bibfield  {author} {\bibinfo {author} {\bibfnamefont {P.}~\bibnamefont {Hayden}}, \bibinfo {author} {\bibfnamefont {D.~W.}\ \bibnamefont {Leung}},\ and\ \bibinfo {author} {\bibfnamefont {A.}~\bibnamefont {Winter}},\ }\href {https://doi.org/10.1007/s00220-006-1535-6} {\bibfield  {journal} {\bibinfo  {journal} {Commun. Math. Phys.}\ }\textbf {\bibinfo {volume} {265}},\ \bibinfo {pages} {95–117} (\bibinfo {year} {2006})}\BibitemShut {NoStop}%
\bibitem [{\citenamefont {Sen}(1996)}]{Sen_1996}%
  \BibitemOpen
  \bibfield  {author} {\bibinfo {author} {\bibfnamefont {S.}~\bibnamefont {Sen}},\ }\href {https://doi.org/10.1103/physrevlett.77.1} {\bibfield  {journal} {\bibinfo  {journal} {Phys. Rev. Lett.}\ }\textbf {\bibinfo {volume} {77}},\ \bibinfo {pages} {1–3} (\bibinfo {year} {1996})}\BibitemShut {NoStop}%
\bibitem [{\citenamefont {Bouland}\ \emph {et~al.}(2018)\citenamefont {Bouland}, \citenamefont {Fefferman}, \citenamefont {Nirkhe},\ and\ \citenamefont {Vazirani}}]{Bouland_2018}%
  \BibitemOpen
  \bibfield  {author} {\bibinfo {author} {\bibfnamefont {A.}~\bibnamefont {Bouland}}, \bibinfo {author} {\bibfnamefont {B.}~\bibnamefont {Fefferman}}, \bibinfo {author} {\bibfnamefont {C.}~\bibnamefont {Nirkhe}},\ and\ \bibinfo {author} {\bibfnamefont {U.}~\bibnamefont {Vazirani}},\ }\href {https://doi.org/10.1038/s41567-018-0318-2} {\bibfield  {journal} {\bibinfo  {journal} {Nature Physics}\ }\textbf {\bibinfo {volume} {15}},\ \bibinfo {pages} {159–163} (\bibinfo {year} {2018})}\BibitemShut {NoStop}%
\bibitem [{\citenamefont {Magesan}\ \emph {et~al.}(2011)\citenamefont {Magesan}, \citenamefont {Gambetta},\ and\ \citenamefont {Emerson}}]{PhysRevLett.106.180504}%
  \BibitemOpen
  \bibfield  {author} {\bibinfo {author} {\bibfnamefont {E.}~\bibnamefont {Magesan}}, \bibinfo {author} {\bibfnamefont {J.~M.}\ \bibnamefont {Gambetta}},\ and\ \bibinfo {author} {\bibfnamefont {J.}~\bibnamefont {Emerson}},\ }\href {https://doi.org/10.1103/PhysRevLett.106.180504} {\bibfield  {journal} {\bibinfo  {journal} {Phys. Rev. Lett.}\ }\textbf {\bibinfo {volume} {106}},\ \bibinfo {pages} {180504} (\bibinfo {year} {2011})}\BibitemShut {NoStop}%
\bibitem [{\citenamefont {Zhou}\ and\ \citenamefont {Hamma}(2022)}]{Zhou_2022}%
  \BibitemOpen
  \bibfield  {author} {\bibinfo {author} {\bibfnamefont {Y.}~\bibnamefont {Zhou}}\ and\ \bibinfo {author} {\bibfnamefont {A.}~\bibnamefont {Hamma}},\ }\href {https://doi.org/10.1103/PhysRevA.106.012410} {\bibfield  {journal} {\bibinfo  {journal} {Phys. Rev. A}\ }\textbf {\bibinfo {volume} {106}},\ \bibinfo {pages} {012410} (\bibinfo {year} {2022})}\BibitemShut {NoStop}%
\bibitem [{\citenamefont {Leone}\ \emph {et~al.}(2021)\citenamefont {Leone}, \citenamefont {Oliviero}, \citenamefont {Zhou},\ and\ \citenamefont {Hamma}}]{Leone_2021}%
  \BibitemOpen
  \bibfield  {author} {\bibinfo {author} {\bibfnamefont {L.}~\bibnamefont {Leone}}, \bibinfo {author} {\bibfnamefont {S.~F.~E.}\ \bibnamefont {Oliviero}}, \bibinfo {author} {\bibfnamefont {Y.}~\bibnamefont {Zhou}},\ and\ \bibinfo {author} {\bibfnamefont {A.}~\bibnamefont {Hamma}},\ }\href {https://doi.org/10.22331/q-2021-05-04-453} {\bibfield  {journal} {\bibinfo  {journal} {Quantum}\ }\textbf {\bibinfo {volume} {5}},\ \bibinfo {pages} {453} (\bibinfo {year} {2021})}\BibitemShut {NoStop}%
\bibitem [{\citenamefont {Webb}(2015)}]{Clifford3Design}%
  \BibitemOpen
  \bibfield  {author} {\bibinfo {author} {\bibfnamefont {Z.}~\bibnamefont {Webb}},\ }\href {https://doi.org/10.26421/QIC16.15-16-8} {\bibfield  {journal} {\bibinfo  {journal} {Quantum Information and Computation}\ }\textbf {\bibinfo {volume} {16}} (\bibinfo {year} {2015})}\BibitemShut {NoStop}%
\bibitem [{\citenamefont {Zhu}(2017)}]{Zhu_2017}%
  \BibitemOpen
  \bibfield  {author} {\bibinfo {author} {\bibfnamefont {H.}~\bibnamefont {Zhu}},\ }\bibfield  {journal} {\bibinfo  {journal} {Physical Review A}\ }\textbf {\bibinfo {volume} {96}},\ \href {https://doi.org/10.1103/physreva.96.062336} {10.1103/physreva.96.062336} (\bibinfo {year} {2017})\BibitemShut {NoStop}%
\bibitem [{\citenamefont {Zhu}\ \emph {et~al.}(2016)\citenamefont {Zhu}, \citenamefont {Kueng}, \citenamefont {Grassl},\ and\ \citenamefont {Gross}}]{zhu2016cliffordgroupfailsgracefully}%
  \BibitemOpen
  \bibfield  {author} {\bibinfo {author} {\bibfnamefont {H.}~\bibnamefont {Zhu}}, \bibinfo {author} {\bibfnamefont {R.}~\bibnamefont {Kueng}}, \bibinfo {author} {\bibfnamefont {M.}~\bibnamefont {Grassl}},\ and\ \bibinfo {author} {\bibfnamefont {D.}~\bibnamefont {Gross}},\ }\href@noop {} {\bibfield  {journal} {\bibinfo  {journal} {arXiv:2410.10116}\ } (\bibinfo {year} {2016})},\ \Eprint {https://arxiv.org/abs/1609.08172} {1609.08172 [quant-ph]} \BibitemShut {NoStop}%
\bibitem [{\citenamefont {Vedral}\ \emph {et~al.}(1997)\citenamefont {Vedral}, \citenamefont {Plenio}, \citenamefont {Jacobs},\ and\ \citenamefont {Knight}}]{Vedral_1997}%
  \BibitemOpen
  \bibfield  {author} {\bibinfo {author} {\bibfnamefont {V.}~\bibnamefont {Vedral}}, \bibinfo {author} {\bibfnamefont {M.~B.}\ \bibnamefont {Plenio}}, \bibinfo {author} {\bibfnamefont {K.}~\bibnamefont {Jacobs}},\ and\ \bibinfo {author} {\bibfnamefont {P.~L.}\ \bibnamefont {Knight}},\ }\href {https://doi.org/10.1103/physreva.56.4452} {\bibfield  {journal} {\bibinfo  {journal} {Physical Review A}\ }\textbf {\bibinfo {volume} {56}},\ \bibinfo {pages} {4452–4455} (\bibinfo {year} {1997})}\BibitemShut {NoStop}%
\bibitem [{\citenamefont {Vedral}(2002)}]{Vedral_2002}%
  \BibitemOpen
  \bibfield  {author} {\bibinfo {author} {\bibfnamefont {V.}~\bibnamefont {Vedral}},\ }\href {https://doi.org/10.1103/revmodphys.74.197} {\bibfield  {journal} {\bibinfo  {journal} {Reviews of Modern Physics}\ }\textbf {\bibinfo {volume} {74}},\ \bibinfo {pages} {197–234} (\bibinfo {year} {2002})}\BibitemShut {NoStop}%
\bibitem [{\citenamefont {Rains}(2000)}]{817508}%
  \BibitemOpen
  \bibfield  {author} {\bibinfo {author} {\bibfnamefont {E.}~\bibnamefont {Rains}},\ }\href {https://doi.org/10.1109/18.817508} {\bibfield  {journal} {\bibinfo  {journal} {IEEE Trans. Infor. Theory}\ }\textbf {\bibinfo {volume} {46}},\ \bibinfo {pages} {54} (\bibinfo {year} {2000})}\BibitemShut {NoStop}%
\bibitem [{\citenamefont {Page}(1993)}]{Page_1993}%
  \BibitemOpen
  \bibfield  {author} {\bibinfo {author} {\bibfnamefont {D.~N.}\ \bibnamefont {Page}},\ }\href {https://doi.org/10.1103/physrevlett.71.1291} {\bibfield  {journal} {\bibinfo  {journal} {Phys. Rev. Lett.}\ }\textbf {\bibinfo {volume} {71}},\ \bibinfo {pages} {1291–1294} (\bibinfo {year} {1993})}\BibitemShut {NoStop}%
\bibitem [{\citenamefont {Mele}(2024)}]{Mele_2024}%
  \BibitemOpen
  \bibfield  {author} {\bibinfo {author} {\bibfnamefont {A.~A.}\ \bibnamefont {Mele}},\ }\href {https://doi.org/10.22331/q-2024-05-08-1340} {\bibfield  {journal} {\bibinfo  {journal} {Quantum}\ }\textbf {\bibinfo {volume} {8}},\ \bibinfo {pages} {1340} (\bibinfo {year} {2024})}\BibitemShut {NoStop}%
\bibitem [{\citenamefont {Low}(2010)}]{low2010pseudorandomnesslearningquantumcomputation}%
  \BibitemOpen
  \bibfield  {author} {\bibinfo {author} {\bibfnamefont {R.~A.}\ \bibnamefont {Low}},\ }\href {https://arxiv.org/abs/1006.5227} {\bibinfo {title} {Pseudo-randomness and learning in quantum computation}} (\bibinfo {year} {2010}),\ \Eprint {https://arxiv.org/abs/1006.5227} {arXiv:1006.5227 [quant-ph]} \BibitemShut {NoStop}%
\bibitem [{\citenamefont {Low}(2009)}]{Low_2009}%
  \BibitemOpen
  \bibfield  {author} {\bibinfo {author} {\bibfnamefont {R.~A.}\ \bibnamefont {Low}},\ }\href {https://doi.org/10.1098/rspa.2009.0232} {\bibfield  {journal} {\bibinfo  {journal} {P. Roy. Soc. A-Math. Phys.}\ }\textbf {\bibinfo {volume} {465}},\ \bibinfo {pages} {3289–3308} (\bibinfo {year} {2009})}\BibitemShut {NoStop}%
\bibitem [{\citenamefont {Huber}\ and\ \citenamefont {Grassl}(2020)}]{Huber_2020}%
  \BibitemOpen
  \bibfield  {author} {\bibinfo {author} {\bibfnamefont {F.}~\bibnamefont {Huber}}\ and\ \bibinfo {author} {\bibfnamefont {M.}~\bibnamefont {Grassl}},\ }\href {https://doi.org/10.22331/q-2020-06-18-284} {\bibfield  {journal} {\bibinfo  {journal} {Quantum}\ }\textbf {\bibinfo {volume} {4}},\ \bibinfo {pages} {284} (\bibinfo {year} {2020})}\BibitemShut {NoStop}%
\bibitem [{\citenamefont {Schumacher}\ and\ \citenamefont {Westmoreland}(2002)}]{schumacher2001approximatequantumerrorcorrection}%
  \BibitemOpen
  \bibfield  {author} {\bibinfo {author} {\bibfnamefont {B.}~\bibnamefont {Schumacher}}\ and\ \bibinfo {author} {\bibfnamefont {M.~D.}\ \bibnamefont {Westmoreland}},\ }\href {https://link.springer.com/article/10.1023/A:1019653202562} {\bibfield  {journal} {\bibinfo  {journal} {Quantum Infor. Process.}\ }\textbf {\bibinfo {volume} {1}},\ \bibinfo {pages} {5} (\bibinfo {year} {2002})}\BibitemShut {NoStop}%
\bibitem [{\citenamefont {Leung}\ \emph {et~al.}(1997)\citenamefont {Leung}, \citenamefont {Nielsen}, \citenamefont {Chuang},\ and\ \citenamefont {Yamamoto}}]{Leung_1997}%
  \BibitemOpen
  \bibfield  {author} {\bibinfo {author} {\bibfnamefont {D.~W.}\ \bibnamefont {Leung}}, \bibinfo {author} {\bibfnamefont {M.~A.}\ \bibnamefont {Nielsen}}, \bibinfo {author} {\bibfnamefont {I.~L.}\ \bibnamefont {Chuang}},\ and\ \bibinfo {author} {\bibfnamefont {Y.}~\bibnamefont {Yamamoto}},\ }\href {https://doi.org/10.1103/physreva.56.2567} {\bibfield  {journal} {\bibinfo  {journal} {Phys. Rev. A}\ }\textbf {\bibinfo {volume} {56}},\ \bibinfo {pages} {2567–2573} (\bibinfo {year} {1997})}\BibitemShut {NoStop}%
\bibitem [{\citenamefont {Liu}\ and\ \citenamefont {Zhou}(2023)}]{Liu_2023}%
  \BibitemOpen
  \bibfield  {author} {\bibinfo {author} {\bibfnamefont {Z.-W.}\ \bibnamefont {Liu}}\ and\ \bibinfo {author} {\bibfnamefont {S.}~\bibnamefont {Zhou}},\ }\href {http://dx.doi.org/10.1038/s41534-023-00788-4} {\bibfield  {journal} {\bibinfo  {journal} {npj Quantum Infor.}\ }\textbf {\bibinfo {volume} {9}},\ \bibinfo {pages} {119} (\bibinfo {year} {2023})}\BibitemShut {NoStop}%
\bibitem [{\citenamefont {Yi}\ \emph {et~al.}(2024)\citenamefont {Yi}, \citenamefont {Ye}, \citenamefont {Gottesman},\ and\ \citenamefont {Liu}}]{Yi_2024}%
  \BibitemOpen
  \bibfield  {author} {\bibinfo {author} {\bibfnamefont {J.}~\bibnamefont {Yi}}, \bibinfo {author} {\bibfnamefont {W.}~\bibnamefont {Ye}}, \bibinfo {author} {\bibfnamefont {D.}~\bibnamefont {Gottesman}},\ and\ \bibinfo {author} {\bibfnamefont {Z.-W.}\ \bibnamefont {Liu}},\ }\href {https://doi.org/10.1038/s41567-024-02621-x} {\bibfield  {journal} {\bibinfo  {journal} {Nat. Phys.}\ }\textbf {\bibinfo {volume} {20}},\ \bibinfo {pages} {1798–1803} (\bibinfo {year} {2024})}\BibitemShut {NoStop}%
\bibitem [{\citenamefont {Liu}\ \emph {et~al.}(2025)\citenamefont {Liu}, \citenamefont {Du}, \citenamefont {Liu},\ and\ \citenamefont {Ma}}]{liu2025approximatequantumerrorcorrection}%
  \BibitemOpen
  \bibfield  {author} {\bibinfo {author} {\bibfnamefont {G.}~\bibnamefont {Liu}}, \bibinfo {author} {\bibfnamefont {Z.}~\bibnamefont {Du}}, \bibinfo {author} {\bibfnamefont {Z.-W.}\ \bibnamefont {Liu}},\ and\ \bibinfo {author} {\bibfnamefont {X.}~\bibnamefont {Ma}},\ }\href@noop {} {\bibfield  {journal} {\bibinfo  {journal} {arXiv:2503.17759}\ } (\bibinfo {year} {2025})},\ \Eprint {https://arxiv.org/abs/2503.17759} {2503.17759 [quant-ph]} \BibitemShut {NoStop}%
\bibitem [{\citenamefont {Shor}\ and\ \citenamefont {Laflamme}(1997)}]{shor1996quantummacwilliamsidentities}%
  \BibitemOpen
  \bibfield  {author} {\bibinfo {author} {\bibfnamefont {P.}~\bibnamefont {Shor}}\ and\ \bibinfo {author} {\bibfnamefont {R.}~\bibnamefont {Laflamme}},\ }\href {https://doi.org/10.1103/PhysRevLett.78.1600} {\bibfield  {journal} {\bibinfo  {journal} {Phys. Rev. Lett.}\ }\textbf {\bibinfo {volume} {78}},\ \bibinfo {pages} {1600} (\bibinfo {year} {1997})}\BibitemShut {NoStop}%
\bibitem [{\citenamefont {Rains}(1998)}]{rains1996quantumweightenumerators}%
  \BibitemOpen
  \bibfield  {author} {\bibinfo {author} {\bibfnamefont {E.}~\bibnamefont {Rains}},\ }\href {https://doi.org/10.1109/18.681316} {\bibfield  {journal} {\bibinfo  {journal} {IEEE Trans. Inf. Theory}\ }\textbf {\bibinfo {volume} {44}},\ \bibinfo {pages} {1388} (\bibinfo {year} {1998})}\BibitemShut {NoStop}%
\bibitem [{\citenamefont {Miller}\ \emph {et~al.}(2024)\citenamefont {Miller}, \citenamefont {Levi}, \citenamefont {Postler}, \citenamefont {Steiner}, \citenamefont {Bittel}, \citenamefont {White}, \citenamefont {Tang}, \citenamefont {Kuehnke}, \citenamefont {Mele}, \citenamefont {Khatri}, \citenamefont {Leone}, \citenamefont {Carrasco}, \citenamefont {Marciniak}, \citenamefont {Pogorelov}, \citenamefont {Guevara-Bertsch}, \citenamefont {Freund}, \citenamefont {Blatt}, \citenamefont {Schindler}, \citenamefont {Monz}, \citenamefont {Ringbauer},\ and\ \citenamefont {Eisert}}]{miller2024experimentalmeasurementphysicalinterpretation}%
  \BibitemOpen
  \bibfield  {author} {\bibinfo {author} {\bibfnamefont {D.}~\bibnamefont {Miller}}, \bibinfo {author} {\bibfnamefont {K.}~\bibnamefont {Levi}}, \bibinfo {author} {\bibfnamefont {L.}~\bibnamefont {Postler}}, \bibinfo {author} {\bibfnamefont {A.}~\bibnamefont {Steiner}}, \bibinfo {author} {\bibfnamefont {L.}~\bibnamefont {Bittel}}, \bibinfo {author} {\bibfnamefont {G.~A.~L.}\ \bibnamefont {White}}, \bibinfo {author} {\bibfnamefont {Y.}~\bibnamefont {Tang}}, \bibinfo {author} {\bibfnamefont {E.~J.}\ \bibnamefont {Kuehnke}}, \bibinfo {author} {\bibfnamefont {A.~A.}\ \bibnamefont {Mele}}, \bibinfo {author} {\bibfnamefont {S.}~\bibnamefont {Khatri}}, \bibinfo {author} {\bibfnamefont {L.}~\bibnamefont {Leone}}, \bibinfo {author} {\bibfnamefont {J.}~\bibnamefont {Carrasco}}, \bibinfo {author} {\bibfnamefont {C.~D.}\ \bibnamefont {Marciniak}}, \bibinfo {author} {\bibfnamefont {I.}~\bibnamefont {Pogorelov}}, \bibinfo {author} {\bibfnamefont {M.}~\bibnamefont {Guevara-Bertsch}}, \bibinfo {author} {\bibfnamefont
  {R.}~\bibnamefont {Freund}}, \bibinfo {author} {\bibfnamefont {R.}~\bibnamefont {Blatt}}, \bibinfo {author} {\bibfnamefont {P.}~\bibnamefont {Schindler}}, \bibinfo {author} {\bibfnamefont {T.}~\bibnamefont {Monz}}, \bibinfo {author} {\bibfnamefont {M.}~\bibnamefont {Ringbauer}},\ and\ \bibinfo {author} {\bibfnamefont {J.}~\bibnamefont {Eisert}},\ }\href {https:/ /arxiv.org/abs/2408.16914} {\bibfield  {journal} {\bibinfo  {journal} {arXiv:2408.16914}\ } (\bibinfo {year} {2024})}\BibitemShut {NoStop}%
\bibitem [{\citenamefont {Shi}\ \emph {et~al.}(2024)\citenamefont {Shi}, \citenamefont {Guo}, \citenamefont {Zhang},\ and\ \citenamefont {Zhao}}]{shi2024exploringquantumweightenumerators}%
  \BibitemOpen
  \bibfield  {author} {\bibinfo {author} {\bibfnamefont {F.}~\bibnamefont {Shi}}, \bibinfo {author} {\bibfnamefont {K.}~\bibnamefont {Guo}}, \bibinfo {author} {\bibfnamefont {X.}~\bibnamefont {Zhang}},\ and\ \bibinfo {author} {\bibfnamefont {Q.}~\bibnamefont {Zhao}},\ }\href {https://arxiv.org/abs/2406.18280} {\bibfield  {journal} {\bibinfo  {journal} {arXiv:2406.18280}\ } (\bibinfo {year} {2024})}\BibitemShut {NoStop}%
\bibitem [{\citenamefont {Hayden}\ \emph {et~al.}(2004)\citenamefont {Hayden}, \citenamefont {Leung}, \citenamefont {Shor},\ and\ \citenamefont {Winter}}]{Hayden_2004}%
  \BibitemOpen
  \bibfield  {author} {\bibinfo {author} {\bibfnamefont {P.}~\bibnamefont {Hayden}}, \bibinfo {author} {\bibfnamefont {D.}~\bibnamefont {Leung}}, \bibinfo {author} {\bibfnamefont {P.~W.}\ \bibnamefont {Shor}},\ and\ \bibinfo {author} {\bibfnamefont {A.}~\bibnamefont {Winter}},\ }\href {https://doi.org/10.1007/s00220-004-1087-6} {\bibfield  {journal} {\bibinfo  {journal} {Communications in Mathematical Physics}\ }\textbf {\bibinfo {volume} {250}},\ \bibinfo {pages} {371–391} (\bibinfo {year} {2004})}\BibitemShut {NoStop}%
\bibitem [{\citenamefont {Modi}\ \emph {et~al.}(2018)\citenamefont {Modi}, \citenamefont {Pati}, \citenamefont {Sen(De)},\ and\ \citenamefont {Sen}}]{PhysRevLett.120.230501}%
  \BibitemOpen
  \bibfield  {author} {\bibinfo {author} {\bibfnamefont {K.}~\bibnamefont {Modi}}, \bibinfo {author} {\bibfnamefont {A.~K.}\ \bibnamefont {Pati}}, \bibinfo {author} {\bibfnamefont {A.}~\bibnamefont {Sen(De)}},\ and\ \bibinfo {author} {\bibfnamefont {U.}~\bibnamefont {Sen}},\ }\href {https://doi.org/10.1103/PhysRevLett.120.230501} {\bibfield  {journal} {\bibinfo  {journal} {Phys. Rev. Lett.}\ }\textbf {\bibinfo {volume} {120}},\ \bibinfo {pages} {230501} (\bibinfo {year} {2018})}\BibitemShut {NoStop}%
\bibitem [{\citenamefont {Pérez-García}\ \emph {et~al.}(2006)\citenamefont {Pérez-García}, \citenamefont {Wolf}, \citenamefont {Petz},\ and\ \citenamefont {Ruskai}}]{10.1063/1.2218675}%
  \BibitemOpen
  \bibfield  {author} {\bibinfo {author} {\bibfnamefont {D.}~\bibnamefont {Pérez-García}}, \bibinfo {author} {\bibfnamefont {M.~M.}\ \bibnamefont {Wolf}}, \bibinfo {author} {\bibfnamefont {D.}~\bibnamefont {Petz}},\ and\ \bibinfo {author} {\bibfnamefont {M.~B.}\ \bibnamefont {Ruskai}},\ }\href {https://doi.org/10.1063/1.2218675} {\bibfield  {journal} {\bibinfo  {journal} {J. Math. Phys}\ }\textbf {\bibinfo {volume} {47}},\ \bibinfo {pages} {083506} (\bibinfo {year} {2006})}\BibitemShut {NoStop}%
\bibitem [{\citenamefont {Harrow}\ \emph {et~al.}(2004)\citenamefont {Harrow}, \citenamefont {Hayden},\ and\ \citenamefont {Leung}}]{Harrow_2004}%
  \BibitemOpen
  \bibfield  {author} {\bibinfo {author} {\bibfnamefont {A.}~\bibnamefont {Harrow}}, \bibinfo {author} {\bibfnamefont {P.}~\bibnamefont {Hayden}},\ and\ \bibinfo {author} {\bibfnamefont {D.}~\bibnamefont {Leung}},\ }\href {http://dx.doi.org/10.1103/PhysRevLett.92.187901} {\bibfield  {journal} {\bibinfo  {journal} {Phys. Rev. Lett.}\ }\textbf {\bibinfo {volume} {92}} (\bibinfo {year} {2004})}\BibitemShut {NoStop}%
\bibitem [{\citenamefont {Ma}\ and\ \citenamefont {Huang}(2025)}]{ma2025constructrandomunitaries}%
  \BibitemOpen
  \bibfield  {author} {\bibinfo {author} {\bibfnamefont {F.}~\bibnamefont {Ma}}\ and\ \bibinfo {author} {\bibfnamefont {H.-Y.}\ \bibnamefont {Huang}},\ }\href@noop {} {\bibfield  {journal} {\bibinfo  {journal} {arXiv:2410.10116}\ } (\bibinfo {year} {2025})},\ \Eprint {https://arxiv.org/abs/2410.10116} {2410.10116 [quant-ph]} \BibitemShut {NoStop}%
\end{thebibliography}%
\end{document}